\newtheorem{lemma}{\textbf{Lemma}}
\newtheorem{prop}{\textbf{Proposition}}
\begin{document}
%
\title{Fresh, Fair and Energy-Efficient Content Provision in a Private and Cache-Enabled UAV Network}
%
%
%
\author{Peng Yang,~\IEEEmembership{Member,~IEEE}, Kun Guo,~\IEEEmembership{Member,~IEEE}, Xing Xi,~\IEEEmembership{Graduate Student Member,~IEEE},
Tony Q. S. Quek,~\IEEEmembership{Fellow,~IEEE},
Xianbin Cao,~\IEEEmembership{Senior Member,~IEEE}, and Chenxi Liu,~\IEEEmembership{Member,~IEEE}
\thanks{
P. Yang, K. Guo, and T. Q. S. Quek are with the Information Systems Technology and Design, Singapore University of Technology and Design, 487372 Singapore.
X. Xi, and X. Cao are with the School of Electronic and Information Engineering, Beihang University, Beijing 100083, China.
C. Liu is with the State Key Laboratory of Networking and Switching Technology, Beijing University of Posts and Telecommunications, Beijing 100876, China.
This paper was presented in part in the IEEE Global Communications Conference 2020 \cite{yang2020repeatedly}.
}
}

\maketitle

\begin{abstract}
In this paper, we investigate a private and cache-enabled unmanned aerial vehicle (UAV) network for content provision. Aiming at delivering fresh, fair, and energy-efficient content files to terrestrial users, we formulate a joint UAV caching, UAV trajectory, and UAV transmit power optimization problem. This problem is confirmed to be a sequential decision problem with mixed-integer non-convex constraints, which is intractable directly. To this end, we propose a novel algorithm based on the techniques of subproblem decomposition and convex approximation. Particularly, we first propose to decompose the sequential decision problem into multiple repeated optimization subproblems via a Lyapunov technique. Next, an iterative optimization scheme incorporating a successive convex approximation (SCA) technique is explored to tackle the challenging mixed-integer non-convex subproblems. Besides, we analyze the convergence of the proposed algorithm and derive the theoretical value of the expected peak age of information (PAoI) to estimate the content freshness. Simulation results demonstrate that the proposed algorithm can achieve the expected PAoI close to the theoretical value and is more $22.11$\% and $70.51$\% energy-efficient and fairer than benchmark algorithms.
\end{abstract}

\begin{IEEEkeywords}
Fresh content, private UAV network, UAV caching, trajectory design, power control
\end{IEEEkeywords}

%
\IEEEpeerreviewmaketitle

\section{Introduction}
%
%
%
%
\IEEEPARstart{T}{he} data traffic requested by terrestrial mobile users will increase dramatically in terrestrial wireless mobile communication networks \cite{ forecast2019cisco}. It is predicted that monthly data traffic in the global mobile networks will reach 77 exabytes by 2022 \cite{forecast2019cisco}.
It is also foreseeable that with the advancement of manufacturing, chips, and sensors technologies, the global data traffic in future wireless networks will increase exponentially \cite{ji2020joint}. However, the flexibility and resilience of terrestrial network services are insufficient \cite{cao2018airborne}. It is a challenging task for terrestrial networks to guarantee satisfactory network performance at any time, especially during peak traffic time \cite{wu2020optimal}.

Owing to the agile and resilient deployment, the unmanned aerial vehicle (UAV) network has been widely considered as a significant complement in 5G and beyond to terrestrial networks to boost the capacity of terrestrial networks and extend the network coverage \cite{cao2018airborne}.
Moreover, by deploying a private UAV network with complete control over some aspects (e.g., network resources and storage resources) of the network, the private UAV network can provide further optimized services over the service area \cite{aloqaily2021design}.
Recently, the research on the private UAV network attracts much attention from academia and industry \cite{aloqaily2021design,urama2020uav,amorim2019forecasting,de2018optimal}.

On the other hand, UAV caching is a promising paradigm to assist terrestrial networks \cite{jiang2018multimedia}. By proactively caching popular and repetitively requested content files with large size (e.g., high-resolution map, football match video), UAV caching can significantly alleviate the traffic burden and backhaul congestion of terrestrial networks in the peak hours of some hotspots \cite{ji2020joint,wu2020optimal}.
Besides, when content requests are hit by the caching, content files can be directly transmitted without traversing wireless backhaul, which reduces the response delay significantly \cite{wang2017multi}. On-demand UAV communications can also be dispatched when terrestrial networks are overloaded, the manner of which is flexible and cost-effective.
As a result, during the past few years, the issue of UAV caching has been studied extensively \cite{ji2020joint,wu2020optimal,ji2020Design,chai2020online,kalantari2020wireless}.

\subsection{Related work}
In terms of the research on the private UAV network, the work in \cite{aloqaily2021design} deployed a private blockchain-enabled UAV 5G network to meet dynamic user demands in a reliable and secure manner.
In \cite{urama2020uav}, the UAV-aided interference assessment for private 5G new radio (NR) deployments was investigated.
The utilization of dedicated portions of cellular spectrum to provide the high-reliable command and control link for UAVs was evaluated in \cite{amorim2019forecasting}. Besides, the work in \cite{de2018optimal} designed optimal trajectories of UAVs in private UAV networks to always maintain connections between UAVs and a ground station under the constraint that the total distance travelled by all UAVs is minimum.

In terms of the research on the UAV caching, the work in \cite{ji2020joint} explored a cache-enabled UAV assisted wireless network to maximize the minimum throughput among UAV served users, by jointly optimizing the cache placement, UAV trajectory, and UAV transmit power in a finite flight period.
In \cite{wu2020optimal}, the UAV-aided edge caching to assist terrestrial vehicular networks in delivering high-bandwidth content files was investigated. 
Besides, the issue of UAV caching for decreasing the transmission latency and alleviating backhaul congestion in a UAV network with limited wireless backhaul capacity was studied in \cite{kalantari2020wireless}.
Although the UAV caching issue was extensively discussed in \cite{ji2020joint,wu2020optimal,ji2020Design,chai2020online,kalantari2020wireless} via optimizing cache placement and UAV trajectory, and so on, few of them discuss the problem of maintaining the ``freshness" of cached contents. It is crucial to deliver fresh content files to the destination nodes in some applications especially in some delay-sensitive applications, such as intelligent transportation, environmental monitoring, and health monitoring \cite{hu2020aoi}. The outdated information may result in degraded user experience, erroneous control, even cause big catastrophes \cite{samir2020online}.
The ``freshness" is an important metric, referred to as the age of information (AoI) or status age, which is defined as the amount of time elapsed since the instant at which the most recently delivered update takes place \cite{kaul2012real}.
Due to the significance of delivering fresh content files, AoI-aware UAV-assisted wireless network design has attracted increasing interest from the research community \cite{hu2020aoi,samir2020online,zhang2020age}.
For example, the work in \cite{hu2020aoi} proposed to jointly optimize the UAV trajectory, the time required for energy harvesting, and data collection for each sensor node to minimize the average AoI of collected data in a UAV-assisted wireless network. The work in \cite{zhang2020age} investigated the problem of jointly optimizing the UAV trajectory, sensing time, transmission time, and task scheduling to guarantee the freshness of the UAV sensed data in a UAV network. 

\subsection{Motivation and contributions}
Despite the extensive existing work on fresh data provision, all of them \cite{hu2020aoi,samir2020online,zhang2020age} focused on the issue of maintaining the freshness of data delivered to a single destination node from a single/multiple source nodes without discussing the significant problem of minimizing the AoI of data destined to multiple destination nodes.
In fact, the investigation on the issue of delivering fresh data to multiple destination nodes is significant.
On one hand, multiple UAVs instead of a single UAV should be deployed to deliver data to destination nodes owing to the limited service ability and coverage range of a single UAV. On the other hand, for delay-sensitive information (e.g., traffic information and live news), it is important to guarantee the freshness of delivered data.

The goal of minimizing the AoI of data delivered to multiple destination nodes, however, poses novel and greater challenges to the optimization of AoI and the theoretical analysis of AoI in a UAV-assisted wireless network.
First, the minimization of AoI of data depends on the simultaneous decision making on data placement, data delivery, UAV trajectory, and UAV transmit power \cite{hu2020aoi,samir2020online,zhang2020age}. Multiple UAVs rather than a single UAV should be deployed to deliver fresh data to destination nodes. 
Nevertheless, the joint decision making on data placement, data delivery, UAV trajectory, and UAV transmit power in a multi-UAV network is much more challenging than that in a single UAV network studied in \cite{hu2020aoi,samir2020online,zhang2020age}.
Besides, this joint decision making problem can be confirmed as a sequential decision problem with a high-dimensional and mixed discrete and continuous decision space. Therefore, it is difficult to explore some standard optimization methods and the popular deep reinforcement learning methods, which are designed either for the low-dimensional discrete decision space or the high-dimensional continuous decision space, to solve this problem.
Although we considered multiple destination nodes for a joint decision making on UAV trajectory and UAV transmit power in our previous work \cite{yang2020repeatedly}, the issues of ensuring fresh data delivery was not addressed.
Second, owing to the complex interaction between multiple decision variables and the freshness of data, the theoretical analysis on the freshness of data destined to multiple destination nodes has not been well investigated.

To tackle the above challenges and provide further optimized services, we focus on the joint design of UAV caching (including content placement and content delivery), UAV trajectory, and UAV transmit power in a private and cache-enabled UAV network in this paper. The main contributions of this paper can be summarized as follows:

1) A time-varying UAV network is desired to be deployed to deliver fresh content to terrestrial users due to the limited UAV service ability and communication coverage range. In this regard, we formulate the problem of content provision by deploying a private and cache-enabled UAV network as a sequential decision problem. The goal of this problem is to maintain the freshness of data arriving at all users and provide fair and energy-efficient content delivery for all users, subject to UAVs' transmit power and trajectory constraints.

2) To effectively solve the formulated problem, a Lyapunov-based optimization framework and a novel algorithm with provable performance guarantees are proposed. Particularly, the framework solves the complicated sequential decision problem via decomposing it into repeatedly optimized subproblems of multi-tier structure rather than solve it as a whole. The decomposed subproblems, however, are confirmed to be mixed-integer non-convex, which are still intractable directly. To make them tractable, the proposed algorithm firstly explores an iterative optimization scheme to handle the mixed-integer issue. Then, a successive convex approximation (SCA) technique is leveraged to tackle the non-convexity.

3) Besides, we analysis the convergence of the proposed algorithm. The theoretical value of the expected peak AoI (PAoI) to estimate the freshness of the content is also obtained by the probability theory.

4) Finally, the performance of the proposed algorithm is compared with different benchmark algorithms, and impact of different design parameters is discussed.
    Simulation results verify that the proposed algorithm can achieve the expected PAoI close to the theoretical value. Further, the proposed algorithm is more $22.11$\% and $70.51$\% energy-efficient and fairer than benchmark algorithms, respectively.

\section{System model and problem formulation}
\subsection{Scenario description}
This paper considers a content server and a private and cache-enabled UAV network, which includes one ground base station (BS), multiple UAVs, and many terrestrial mobile users, as shown in Fig. \ref{fig_scenario}. The content server will proactively transmit content files, each of which consists of many data packets, requested by users to the BS such that the latency for users to obtain content files can be significantly reduced.
\begin{figure}[!t]
\flushleft
\includegraphics[width=2.0in]{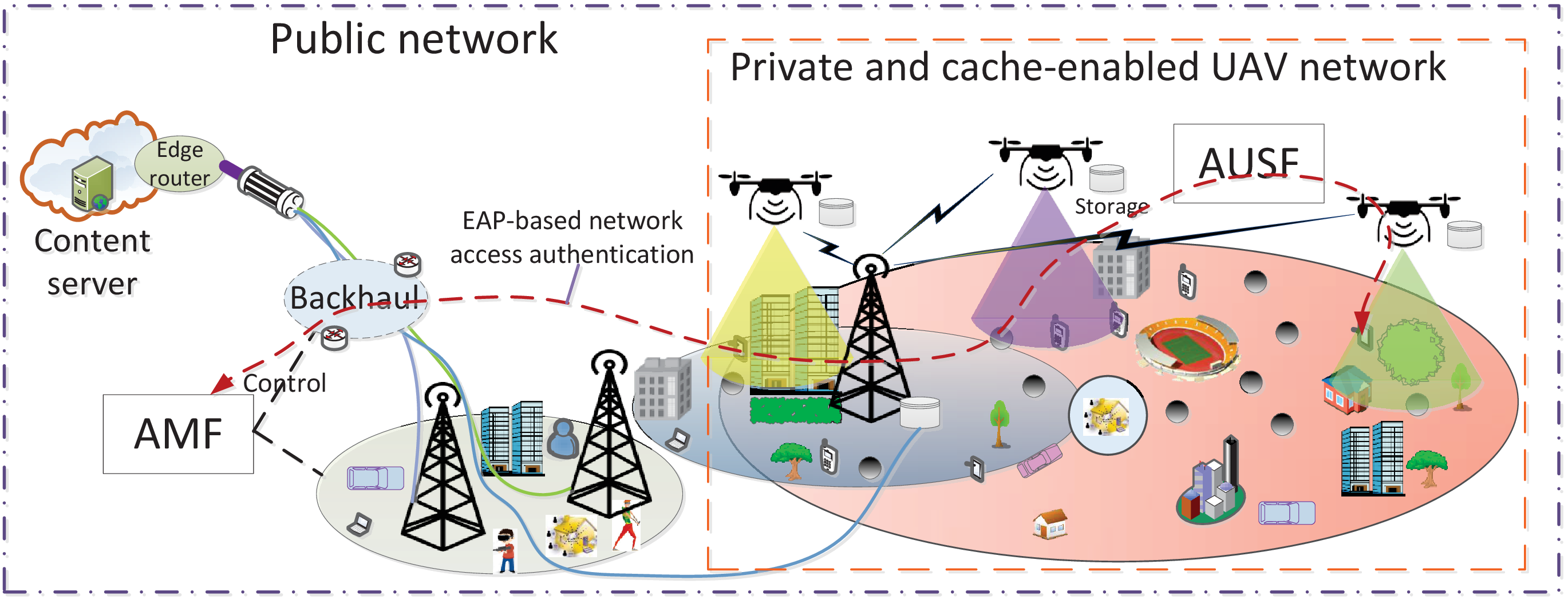}
\caption{The communication scenario and network architecture of a private and cache-enabled UAV network.}
\label{fig_scenario}
\end{figure}
However, terrestrial users, the set of which is ${\mathcal I} = \{1, 2, \ldots, N\}$ with $N$ being the number of users, may be in poor communication environment due to serious signal occlusion or too far from the BS. Then, the BS-user transmission links may be interrupted, and users will have poor quality-of-experience (QoE).
Hence, a set ${\mathcal J}$ of $J$ energy-constrained rotary-wing UAVs acting as aerial relays is deployed to perform the communication task of proving fresh, fair and energy-efficient content delivery for terrestrial users.
In addition, to provide better direct content delivery services for terrestrial users, each UAV is equipped with a capacity-limited storage to dynamically cache content files from the BS and then deliver cached files to users.

To theoretically model the communication task, the time domain in the private and cache-enabled UAV network is assumed to be discretized. We consider a general case of assuming that the UAV communication task of delivering content files may last long enough, i.e., $t = \{1, 2, \ldots\}$, and $\Delta_ t$ is the duration of a time slot.
Owing to the limited number of UAVs and a UAV's restricted communication range, the locations of UAVs should be continuously adjusted when executing the task such that UAVs can deliver content files to all terrestrial users.


\subsection{Network architecture}
In the above scenario, terrestrial users served by the UAVs need to access the content server residing in the public network. To this aim, we design a network architecture shown in Fig. 1. This network architecture includes a public network, a private and cache-enabled UAV network, and a BS shared by the public network and the private and cache-enabled UAV network (referred to as UAV network for brevity).
For the BS, it is split into higher radio access layers (i.e., radio resource control (RRC), packet data control protocol (PDCP)) and lower-layer radio interface (i.e., radio link control (RLC), medium access control (MAC), physical layer (PHY)). The higher layers can be configured to operate in a user-specific mode. For example, for a user requiring low-latency services, RRC can be configured and tailored to disable the lower-layer Internet protocol (IP) stack and related header compression. Besides, RLC can be configured in transparent mode by the RRC. In contrast, for a user requiring high quality of experience (QoE), IP and acknowledged RLC should be initiated \cite{rost2017network}.
The UAV network and the public network will share the lower-layer radio interface of the BS. The higher radio access layers of the BS are split to the public network. Besides, the public network will provide core network functions.

To access the content server, a roaming agreement should be established between the public network and users in the UAV network \cite{schneider2018providing}. This can be done by using a communication terminal or else just by using the subscriber identity in the public network.
Yet, when sharing the BS, the network security is one of the major concerns \cite{schneider2018providing}.
In 3GPP specified 5G networks, an extensible authentication protocol (EAP) based method (see \cite{aboba2004extensible}) can be introduced to ensure the security. In the considered scenario, the authentication is done between users and an authentication server function (AUSF) in the UAV network. This leads to keys shared between the AUSF and users. The keys are derived during authentication and are utilized to protect user traffic.
According to \cite{Security2020GPP}, one of these keys is passed on towards the access and mobility management function (AMF) in the public network as a basis for establishing security between the mobile device and the public network. Another key agreed with the mobile during the same EAP run can however be retained by the private network and can be used to establish user plane security between users and the UAV network.
Besides, network security is terminated by the PDCP. Making the split between the UAV network and the public network below the PDCP allows to keep all security related functions within the UAV network.

To instantiate the network architecture, the network function virtualization (NFV) and software-defined networking (SDN) techniques should be explored.
NFV and SDN are complementary technologies that achieve the level of abstraction and flexibility required to satisfy stringent applications' requirements while maximizing network infrastructure reutilization \cite{etsi2013network}.
Specifically, NFV can decouple physical network functions (PNFs) (e.g., firewalls, routers, load balancers) from dedicated hardware by implementing the same functionality in software, called virtualized network functions (VNFs) \cite{etsi2013network}.
VNFs may then be instantiated in data centers at backend clouds, or on top of devices equipped with compute and storage resources at the edge \cite{taleb2017multi}. For example, in the considered architecture, candidates for VNF instances include AUSF and AMF.

SDN can decouple the user plane from the control plane and centralize network management in an SDN controller. The network management can then be facilitated via a softwarization approach.
With a global view of the network resources, SDN controller applications can take advantage of numerous southbound interfaces (e.g., OpenFlow, NETCONF) to gather network state information and act upon each forwarding device (PNF or VNF) accordingly.
In the considered architecture, due to the exploration of the SDN technique, the control plane network functions (e.g., home subscribe server (HSS), AMF) can be deployed on the public network and shared by the UAV network. The public network and UAV network can maintain their own user plane network functions which are responsible for handling user-specific and bearer traffic.

Based on the above scenario and network architecture, we next mathematically model the communication task.

\subsection{UAV caching model}
As terrestrial users in general submit their content requests earlier than the expected time that content files are received, their content requests may be known in advance \cite{ma2020age}.
For any user $i \in {\mathcal I}$, denote by ${\rm Pr}_i$ the probability of content requests of user $i$ received by the content server with $\sum\nolimits_{i=1}^{N} {\rm Pr}_i = 1$. 
According to the known content request information, decisions on UAV caching including the content placement and delivery, modelled as below, can then be made. 

\subsubsection{Content placement model}
In the model, any terrestrial user may obtain its requested content files from one of the following two links, i.e., UAV cache-user link (if content files are cached in UAVs) and BS cache-UAV-user link\footnote{As the case of obtaining content files via the BS cache-user link was well studied in \cite{ji2020joint,chai2020online,kalantari2020wireless,Chen2017Caching}, we did not investigate this type of link here.}. 
Denote by ${\mathcal F} = \{f_1, f_2, \ldots, f_i, \ldots, f_N\}$ a finite content library, where $f_i$ represents the content file requested by user $i \in {\mathcal I}$. The size of each content file is assumed to be same\footnote{Note that different users may request the similar content file. For files with diverse sizes, the analysis can be extended by dividing each file into chunks of equal size.} \cite{wu2020optimal}.
For any UAV $j \in {\mathcal J}$, denote by $b_{j, f_i}(t)$ a content placement decision variable at time slot $t$.
$b_{j,f_i}(t) = 1$ if the content file $f_i$ destined to user $i$ is cached in UAV $j$. In this case, it is possible for user $i$ to obtain the content file $f_i$ from UAV $j$ directly; otherwise, $b_{j,f_i}(t) = 0$. When $b_{j,f_i}(t) = 0$, user $i$ has to obtain the content file $f_i$ via the BS cache-UAV-user link.
Just like \cite{Chen2017Caching}, we assume that one user requests at most one content file and each UAV can cache at most one content file at time slot $t$. Then, we have the following content placement constraint
\begin{equation}\label{eq:content_saving_indicator}
\sum\nolimits_{f_i \in {\mathcal F}} {b_{j,f_i}(t)}  \le 1, \forall j, t.
\end{equation}

\subsubsection{Content delivery model}
Denote by $s_{ij}(t)$, $\forall i$, $j$, $t$, the content delivery status of UAV $j$ at time slot $t$. $s_{ij}(t) = 1$ represents that UAV $j$ directly delivers the cached content file or forward the content file from the BS cache to user $i$ at $t$. $s_{ij}(t) = 0$ indicates that UAV $j$ does not deliver a content file to user $i$ at $t$. Besides, each UAV can deliver a content file to one user, and one user can obtain the requested content file from one UAV at each time slot. Formally, we have
\begin{equation}\label{eq:slice_request_indicator}
0 \le \sum\nolimits_{j} {{s_{ij}}(t)}  \le 1, \forall i,t, \text{ } 0 \le \sum\nolimits_{i}{{{s_{ij}}(t)}}  \le 1, \forall j,t
\end{equation}
where we lighten $\sum\nolimits_{j}{s_{ij}(t)}$ and $\sum\nolimits_{i}{{{s_{ij}}(t)}}$ for $\sum\nolimits_{j \in {\mathcal J}}{{{s_{ij}}(t)}}$ and ${\sum\nolimits_{i \in {\mathcal I}}{s_{ij}(t)}}$, respectively.
The similar lightened notation is adopted for brevity throughout the rest of this paper.

\subsection{UAV power consumption and movement model}
By referring to (\ref{eq:slice_request_indicator}), we know that a terrestrial user cannot receive content files from a UAV at every time slot. Therefore, we investigate the time average communication behaviors of terrestrial users and UAVs in this paper.

\subsubsection{UAV power consumption model}
Given UAV $j$, define its time average transmit power during the first $t$ time slots as ${\bar p_j}(t) = \frac{1}{t}\sum\nolimits_{\tau  = 1}^t {{p_j}(\tau )}$ with $p_j(\tau)$ being the instantaneous transmit power of UAV $j$ at time slot $\tau$.
Except for the transmit power, UAVs are subject to inherent circuit power consumption mainly including power consumption of mixers, frequency synthesizers, and digital-to-analog converters. Denote $p_{j}^{c}$ as the circuit power of UAV $j$ during a time slot, we then model the power consumption of UAV $j$ at time slot $t$ as
\begin{equation}\label{eq:power_total_at_t}
p_j^{tot}(t) = {p_j}(t) + p_j^c, \forall j, t,
\end{equation}
which is upper-bounded by a constant ${\hat p_j}$, i.e., $p_j^{tot}(t) \le {\hat p_j}$.
Accordingly, the time average power consumption of UAV $j$ during the first $t$ time slots can be written as
\begin{equation}\label{eq:8}
\bar p_j^{tot}(t) = {\bar p_j}(t) + p_j^c, \forall j, t,
\end{equation}
which is constrained by $\bar p_j^{tot}(t) \le {\tilde p_j}$, and ${\tilde p_j}$ is a constant.




\subsubsection{UAV movement model}
At each time slot, all UAVs are movement controlled to execute the communication task efficiently.
Denote the horizontal location of UAV $j \in {\mathcal J}$ as ${\bm{x}}_{j}(t) = [x_{j}(t), y_{j}(t)]^{\rm T} $ at time slot $t$.
Like \cite{wu2020optimal,ji2020joint}, we consider a scenario that all UAVs fly horizontally at a constant altitude $g$ to achieve a lower level of energy consumption.
During the flight, the distance between two consecutive waypoints on a UAV trajectory will be constrained by the UAV's maximum speed. As such, the mathematical expression of the waypoint distance constraint is given by
\begin{equation}\label{eq:waypoint_constr}
 {|| {{{\bm x}_j}(t) - {{\bm x}_j}(t - 1)} ||^2} \le e_{\max }^2, \forall j,t,
\end{equation}
where $e_{\max }$ is the UAV's maximum flight distance during a slot, {${{\bm x}_{j}}(0)$ represents the initial location of $j$}.

Additionally, for collision avoidance, the distance between any two UAVs at each slot should not be less than a safety distance. Mathematically, the expression can be written as
\begin{equation}\label{eq:safety_distance}
{|| {{{\bm x }_j}(t) - {{\bm x}_k}(t)} ||^2} \ge d_{\min }^2, \forall j,k \ne j,t,
\end{equation}
where $d_{\min }$ is the minimum safety distance.

\subsection{Air-to-ground communications}
For air-to-ground (AtG) communications, each terrestrial user may have a line-of-sight (LoS) view towards a UAV with a certain probability. 
A widely adopted expression of the LoS probability is \cite{Al2014Optimal}
\begin{equation}\label{eq:1}
{\rm Pr}({r_{ij}(t)}) = [{{1 + aexp( - b(\phi  _{ij}(t)  - a))}}]^{-1}, \forall i, j, t,
\end{equation}
where $a$ and $b$ are constants relying on the type of environment, such as rural and dense urban, $\phi_{ij}(t)  = {{180} \over \pi } \times \arctan ({{g} \over {{r_{ij}(t)}}})$ is the elevation angle of user $i$ towards UAV $j$, $r_{ij}(t)$ denotes the horizontal distance between user $i \in {\mathcal I}$ and UAV $j \in {\mathcal J}$, i.e., $r_{ij}(t)={||{{\bm x}_j}(t) - {{\bm x}_{i}^{\rm u}(t)}|{|}}$, and ${\bm{x}}_{i}^{\rm u}(t) = [x_{i}^{\rm u}(t), y_{i}^{\rm u}(t)]^{\rm T}$ represents the location of user $i$ at time slot $t$, which can be known via a global positioning system (GPS).

Consider the setting that the user-altitude and antenna-heights of both user and UAV are neglected. The path-loss expression between user $i$ and UAV $j$ can be given by \cite{Al2014Optimal}
\begin{equation}\label{eq:2}
\begin{array}{l}
10\log_{10}(h_{ij}(t)) = 20\log_{10} \left( {\frac{{4\pi}}{\varsigma}} \right) + 20\log_{10} ( {\sqrt {{g^2} + {r_{ij}^2(t)}} } )\\
\qquad \qquad {\rm{ + }} {\rm Pr}({{r_{ij}{(t)}}}){\eta _{LoS}} + (1 - {\rm Pr}({r_{ij}{(t)}})){\eta _{NLoS}}, \forall i, j, t,
\end{array}
\end{equation}
where $\varsigma =c/{{f}_{c}}$ is the carrier wavelength, $c$ (in m/s) is the speed of light, $f_c$ (in Hz) is the carrier frequency, ${\eta _{LoS}}$ (in dB) and ${\eta _{NLoS}}$ (in dB) are losses corresponding to LoS and non line-of-sight (NLoS) connections.

In the considered UAV network, UAVs will {hover} during a time slot to deliver content files to terrestrial users. In this case, it is significant for UAVs to establish LoS links towards terrestrial users. This is because UAVs are energy-constrained and will consume much less power for content provision via LoS links than NLoS links \cite{ji2020joint,ji2020Design}. Therefore, we focus on the LoS AtG communications and then discuss the condition for establishing LoS connections between UAVs and users.
According to statistical analysis results in \cite{Al2014Optimal}, under the worst-case AtG propagation environment (i.e., dense urban), the probability of a LoS AtG propagation link can be over 90\%, when the elevation angle between a UAV and a user is not less than a threshold $\theta^{\rm th}$. Thus, we have the following condition for approximately establishing LoS AtG connections
\begin{equation}\label{eq:horizontal_dist_constraint}
{||{{\bm x}_j}(t) - {{\bm x}_{i}^{\rm u}(t)}|{|}} \le g \tan ^{-1} \theta^{\rm th}, \forall i, j, t.
\end{equation}

Under the approximated LoS AtG connection condition, we can approximate (\ref{eq:2}) as
\begin{equation}\label{eq:los_pathloss}
{h_{ij}}(t) \approx {{G_{LoS}{\varsigma ^2}}}/({{16{\pi ^2}{{( {{{{D_{ij}(t)}}}} )}^2}}}), \forall i, j, t,
\end{equation}
where $G_{LoS} = 10^{-\eta_{LoS}/10}$, and ${{D}_{ij}(t)}=\sqrt {{{g^2} + r_{ij}^2(t)}}$ is the distance between UAV $j$ and user $i$ at time slot $t$.

\subsection{Quality-of-experience model}
In this subsection, the concept of QoE is introduced to understand and improve the subjective perception of the quality of a network service as a whole by the end user \cite{mitra2013context}.
We leverage a widely used mean opinion score (MOS) \cite{mitra2013context} to model the QoE of a user.

\subsubsection{Mean opinion score}
For any user $i$ at time slot $t$, its MOS can take the following form \cite{mitra2013context}
\begin{equation}\label{eq:delay_mos_model}
{\bar D_{i,f_i}(t)} = {{\hat D - {D_{i,f_i}(t)}}}/({{\hat D - L/u_{\rm dl}^{\max }}}), \forall i,t,
\end{equation}
where $\hat D$ is configured based on the desired system requirement \cite{mitra2013context,Chen2017Caching}, ${D_{i,f_i}(t)}$ represents the edge transmission latency of user $i$, which is defined as the required time to transmit a content file from the BS or a UAV to user $i$ at time slot $t$, $u_{\rm dl}^{\max } = {\log _2}\left( {1 + \frac{{{p_{\max } {G_{LoS}\varsigma^2}}}}{16 \pi^2 g^2 \sigma^2W}} \right)$ (in bps/Hz), the constant $p_{\max} = \hat{p}_j-p_j^c $, $\forall j$, is the maximum instantaneous transmit power of a UAV during each time slot, and $\sigma^2W$ is the noise power with $W$ (in MHz) being the total bandwidth.

User $i$ will have a ``very good" QoE state if ${\bar D_{i,f_i}(t)} \ge D^{\rm th}$ \cite{mitra2013context,Chen2017Caching}. Here, $D^{\rm th}$ is the MOS threshold that maximizes the edge transmission latency of user $i$ enjoying the desired ``very good" QoE.
(\ref{eq:delay_mos_model}) shows the interplay between the MOS and the edge transmission latency of user $i$ at time slot $t$. We next model the edge transmission latency of user $i$, $\forall i \in {\mathcal I}$.

\subsubsection{Edge transmission latency model}
Recall that content files can be transmitted to terrestrial users via a BS cache-UAV-user link or a UAV cache-user link. Thus, during time slot $t$, the edge transmission latency for user $i$ to receive a content file $f_i$ can be given by
\begin{equation}\label{eq:latency_definition}
{D_{i,f_i}(t)} = \left\{ {\begin{array}{*{20}{c}}
{{L}/({Wu_{i}^{\rm dl}(t)}),}&{{b_{j,f_i}(t)} = 1}\\
{ {D^{\rm ul}} + {L}/({Wu_{i}^{\rm dl}(t)}),}&{{b_{j,f_i}(t)} = 0}
\end{array}} \right.
\end{equation}
where $L$ (in Mbits) denotes the size of the transmitted content file, $u_{i}^{\rm dl}(t)$ (in bps/Hz) is the achievable data rate of user $i$, and the constant $D^{\rm ul}$ represents the transmission latency of delivering the content file from the BS to a UAV\footnote{The constant transmission latency can be achieved by exploring BS power control strategy, which has been investigated in our previous paper \cite{xi2020network}.}.

From (\ref{eq:delay_mos_model}), we know that improving users' QoE indicates reducing the edge transmission latency of delivering content files.
There is an inherent interplay between the edge transmission latency and data rate.
We next model the data rate.

\subsubsection{AtG data rate model}
By referring to (\ref{eq:delay_mos_model}) and (\ref{eq:latency_definition}), for any UAV $j$ and user $i$, the required data rate (denoted by $C_{i,f_i}^{\rm th}(t)$ (in bps/Hz)) for user $i$ to achieve the desired QoE state when receiving the content file $f_i$ can be written as
\begin{equation}\label{eq:delay_mos_model_upd}
\begin{array}{l}
C_{i,f_i}^{\rm th}(t) = \frac{{L\sum\nolimits_{j \in {{\mathcal J}_i}} {{b_{j,{f_i}}}(t)} }}{{W\left( {{\hat D} - {D^{{\rm{th}}}}({\hat D} - L/(Wu_{{\rm{dl}}}^{\max }))} \right)}} + \\
\frac{{L(1 - \sum\nolimits_{j \in {{\mathcal J}_i}} {{b_{j,{f_i}}}(t)} )}}{{W\left( {{\hat D} - {D^{{\rm{th}}}}({\hat D} - L/(Wu_{{\rm{dl}}}^{\max })) - {D^{{\rm{ul}}}}} \right)}}, \forall i, t,
\end{array}
\end{equation}
where ${\mathcal J}_i(t)$ denotes a set of UAVs with horizontal distances towards user $i$ being small than $e_{\max}$ at time slot $t$.

Next, we use Shannon capacity to quantify the receiving data rate (in bps/Hz) of user $i$ from UAVs at time slot $t$, i.e.,
\begin{equation}\label{eq:achievable_data_rate}
u_{i}^{\rm dl}(t) = \sum\nolimits_{j} {{s_{ij}}(t){{\log }_2}( {1 + \frac{{{p_j}(t){h_{ij}}(t)}}{{{\sigma ^2}W + {I_{ij}}(t)}}} )}, \forall i, t,
\end{equation}
where ${I_{ij}}(t) = \sum\nolimits_{k \in {\mathcal J}\backslash \{ {j}\} } {{p_k}(t){h_{ik}}(t)} $ denotes the interference experienced at user $i$ when all UAVs share the spectrum. Then, the condition of $u_{i}^{\rm dl}(t) \ge C_{i,f_i}^{\rm th}(t)$ should be satisfied if user $i$'s achievable data rate can enable its desired QoE state.

\subsection{PAoI evolution model of packets}
Different from AoI, PAoI provides information of the maximum value of AoI and can capture the extent to which update information is stale \cite{xu2019optimizing}.
Therefore, just like \cite{xu2019optimizing}, we adopt the PAoI as the metric to estimate the freshness of information.
Additionally, we discuss the above models from the perspective of content files, each of which includes a batch of packets.
However, as most PAoI-related work \cite{xu2019optimizing,talak2019optimizing} discuss the PAoI evolution from the viewpoint of data packets, we next model the PAoI evolution of packets.

\begin{figure}[!t]
\centering
\begin{minipage}[t]{0.24\textwidth}
\centering
\includegraphics[width=1.7 in]{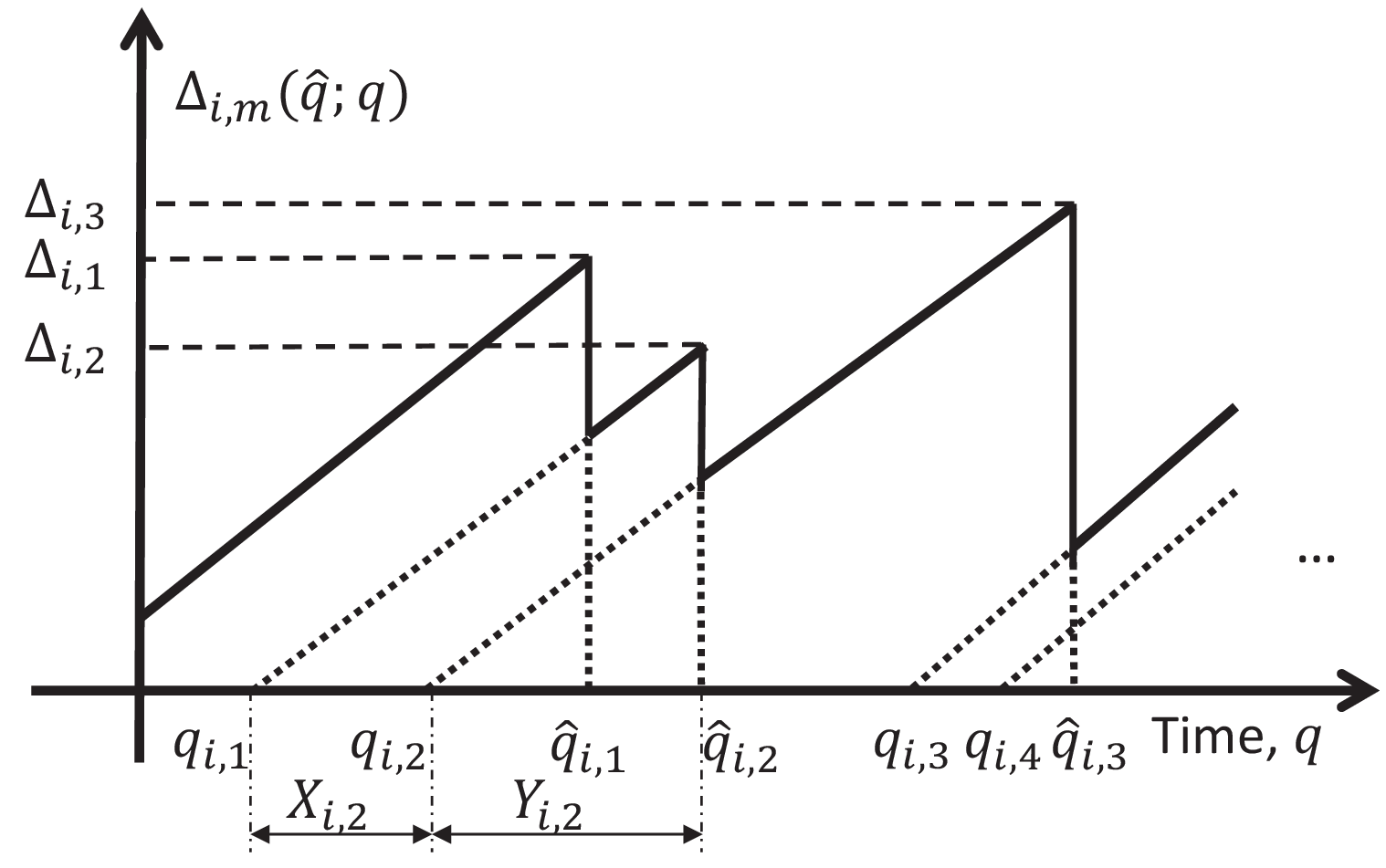}
\caption{An example of PAoI evolution model of packets towards user $i$ \cite{xu2019optimizing}.}
\label{fig_PAoI_model}
\end{minipage}
\begin{minipage}[t]{0.24\textwidth}
\centering
\includegraphics[width=1.7 in]{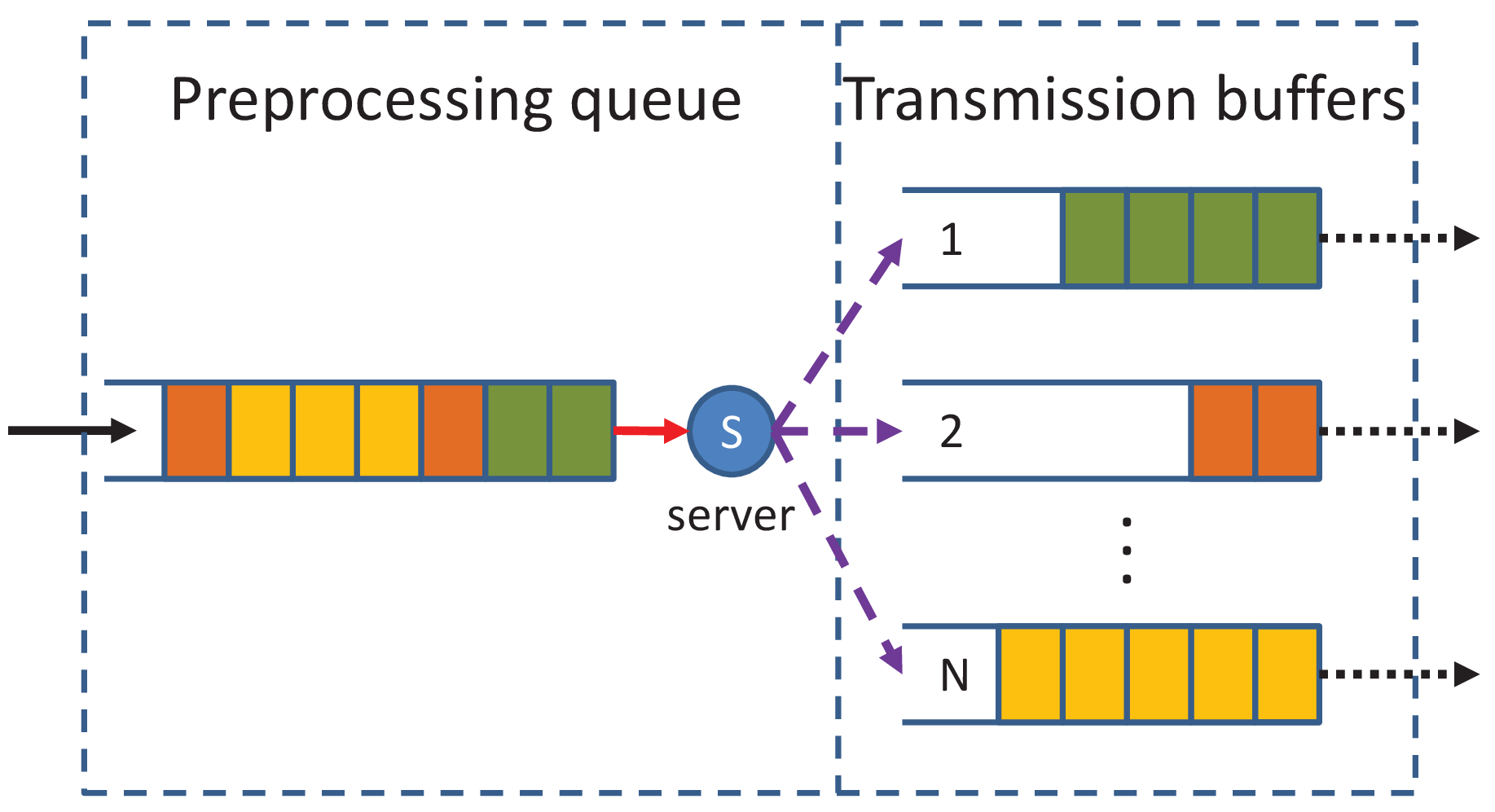}
\caption{Management strategy of the BS for newly arrived packets.}
\label{fig_queue_model}
\end{minipage}
\end{figure}
For any user $i \in {\mathcal I}$, define the AoI of a packet $m$ destined to user $i$ as $\Gamma _{i,m}(q) = q - {q_{i,m-1}}$, where ${q_{i,m-1}}$ is the generation time of the most recently received packet $m-1$ from the data server until time $q$ \cite{kaul2012real}.\footnote{{The timescale of modelling of PAoI differs from that in the above models. This is because the inter-arrival time of packets is different from that of transmitting content files to users. For clarification, we use the terminology `time' and the corresponding notation `q' when analyzing PAoI of pcakets.}}
When user $i$ does not receive packet $m$, the value of $\Gamma_{i,m}(q)$ increases linearly with $q$, which in turn shows the fact that packet $m$ is getting older.
In other words, the $m$-th peak value of $\Gamma_{i,m}(q)$ is obtained right before the $m$-th newly generated packet arrives at user $i$. The $m$-th peak value of $\Gamma_{i,m}(q)$ is defined as the PAoI \cite{xu2019optimizing}, denoted by $\Delta_{i,m}(\hat q; q)$, where $\hat q$ denotes the time that a packet generated at time $q$ arrives at its destination user.
Fig. \ref{fig_PAoI_model} shows an example of the PAoI evolution model $\Delta_{i,m}(\hat q;q)$ for user $i$ and packet $m$.\footnote{Like \cite{xu2019optimizing,talak2019optimizing}, we discuss the PAoI evolution of packets in the discrete time domain.}
Formally, for any user $i$ and packet $m$, the PAoI of packet $m$ destined to user $i$ evolves as follows \cite{xu2019optimizing}
\begin{equation}\label{eq:AoI_model}
{\Delta_{i,m}(\hat q; q)} = \left\{ {\begin{array}{*{20}{l}}
{{{\hat q}_{i,m}}}&{m = 1}\\
{{X_{i,m}} + {Y_{i,m}}}&{m > 1}
\end{array}} \right.
\end{equation}
where $\hat q_{i,m}$ is the time that packet $m$ reaches user $i$,
$X_{i,m} = q_{i,m} - q_{i,m-1}$, $Y_{i,m} = \hat q_{i,m} - q_{i,m}$, and $q_{i,m}$ is the generation time of packet $m$ towards user $i$, as shown in Fig. \ref{fig_PAoI_model}.

The inter-arrival time $X_{i,m}$ is related to the packet arrival rate of packets sent to user $i$.
The value of $Y_{i,m}$ is determined by many factors, such as the time cost of preprocessing data packets in the BS and the edge arrival duration.
Specifically, for the BS, it will maintain a queue with infinite buffer space to manage arrived data packets, as shown in Fig. \ref{fig_queue_model}. Upon arriving at the BS, data packets will enter into the BS queue to wait to be preprocessed (e.g., packet classification, packet header update) according to a first-come-first-served (FCFS) principle. We call this queue the preprocessing queue. After being preprocessed, packets destined to different users will be cached at diverse transmission buffers maintained by the BS or forwarded and cached in UAVs.
Therefore, we can re-write $Y_{i,m}$ as $Y_{i,m} = Y_{i,m}^{\rm Q} + Y_{i,m}^{\rm S} + Y_{i,m}^{\rm A}$, where $Y_{i,m}^{\rm Q}$ denotes the queueing delay of packet $m$, $Y_{i,m}^{\rm S}$ is the preprocessing time of packet $m$ in the content file $f_i$, and $Y_{i,m}^{\rm A} \le D_{i,f_i}(t)$ is the edge arrival duration of packet $m$.
It is noteworthy that the values of $X_{i,m}$, $Y_{i,m}^{\rm Q}$, and $Y_{i,m}^{\rm S}$ are determined by some network parameters such as backhaul capacity and CPU computing speed and cannot be reduced by optimizing the deployment and resource (e.g., UAV transmit power) allocation of the UAV network \cite{xu2019optimizing}; yet, $Y_{i,m}^{\rm A}$ can be optimized.
In the following subsection we formulate an optimization problem of reducing the value of $Y_{i,m}^{\rm A}$ while providing fair and energy-efficient content file delivery for all terrestrial users via the joint design of UAV caching, UAV trajectory, and UAV transmit power.

\subsection{Problem formulation}
Improving users' achievable data rates will result in reduced edge arrival duration and the following fresher content file transmission. Thus, a goal of the problem is to maximize users' achievable data rates.
During the first $t$ time slots, the time average achievable data rate of user $i$, $\forall i$, is written as ${\bar u_{i}^{\rm dl}}(t) = \frac{1}{t}\sum\nolimits_{\tau  = 1}^t {{u_{i}^{\rm dl}}(\tau )}$. Define $\phi ( \{{{\bar{u}}_{i}^{\rm dl}}(t)\} )={\sum\nolimits_{i}{{{\log }_{2}}(1+{{{\bar{u}}}_{i}^{\rm dl}}}(t))}$ as a proportional fairness function of time average achievable data rates across all terrestrial users.
Then, maximizing $\phi (\{{{\bar{u}}_{i}^{\rm dl}}(t)\})$ will result in fresh and fair content provision for all users.
Besides, to implement the energy-efficient content file delivery, the power consumption of UAVs should be minimized.
To achieve the above goals, the joint optimization of UAV caching, UAV trajectory, and UAV transmit power should be investigated. Mathematically, we can formulate the joint optimization problem as below
\begin{subequations}\label{eq:original_problem}
\begin{alignat}{2}
& \mathop {\rm Maximize }\limits_{{\mathcal B}(t),{\mathcal S}(t),{\mathcal P}(t),{{{\mathcal X}}}(t)} {\mkern 1mu} \mathop {\lim \inf }\limits_{t \to \infty } ( {\phi ( \{{\bar u_{i}^{\rm dl}}(t)\} ) - \rho \sum\nolimits_{j} {\bar p_j^{tot}(t)} } ) \allowdisplaybreaks[4] \\
&{\rm s.t.} \text{ } \mathop {\lim \inf }\limits_{t \to \infty } [{\bar u_{i}^{\rm dl}}(t)-\bar C_{i,f_i}^{\rm th}(t)] \ge 0,\forall i \allowdisplaybreaks[4] \\
& \quad \mathop {\lim \sup }\limits_{t \to \infty } \bar p_j^{tot}(t) \le {{\tilde p}_j},\forall j \\
& \quad  p_j^{tot}(t) \le {{\hat p}_j},\forall j,t \\
& \quad {s_{ij}}(t) \in \{ 0,1\} ,\forall i,j,t \\
& \quad {b_{j,f_i}}(t) \in \{ 0,1\} ,\forall j,t \\
& \quad {p_j}(t) \ge p_j^{\min},{\rm{ }}\forall j,t \\
& \quad (\ref{eq:content_saving_indicator}),(\ref{eq:slice_request_indicator}),(\ref{eq:waypoint_constr}),
(\ref{eq:safety_distance}),(\ref{eq:horizontal_dist_constraint}).
\end{alignat}
\end{subequations}
where ${\mathcal B}(t)$, ${\mathcal S}(t)$, ${\mathcal P}(t)$, and ${{{\mathcal X}}}(t)$ represent the sets of content placement decision variables, content delivery decision variables, UAV transmit power, and UAV locations at time slot $t$, respectively, $\rho$ is a non-negative coefficient that weighs the trade-off between fresh and fair content delivery and power consumption, $p_j^{\min}$ is a small constant, $\bar C_{i,f_i}^{\rm th}(t ) = \frac{1}{t}\sum\nolimits_{\tau  = 1}^t {\varphi {C_{i,f_i}^{\rm th}}(\tau )}$ with $\varphi = \frac{J}{N}$. The constant $\varphi$ is introduced because the condition $u_{i}^{\rm dl}(t) \ge C_{i,f_i}^{\rm th}(t)$ should be satisfied if user $i$'s ``very good" QoE state can be achieved at time slot $t$. However, user $i$ cannot receive a content file from a UAV at each time slot. Besides, each user has the probability of $\frac{J}{N}$ to receive the content file from a UAV due to the goal of achieving fair content delivery.

The solution to (\ref{eq:original_problem}) is quite challenging mainly because i) \emph{time-coupled objective function:} the objective function is the logarithmic function of time average achievable data rates. The calculation of the objective function requires the obtaining of all users' achievable data rates over the first $t$ time slots, which indicates the optimization of a great number of decision variables. Besides, the number of decision variables in the problem will exponentially increase with an increasing $t$, which seriously hinders the solution to the problem;
ii) \emph{sequential decision problem:} it needs to optimize the UAV cache placement scheme, UAV cache delivery strategy, UAV transmit power, and UAV trajectories during the first $t$ time slots;
iii) \emph{thorny optimization problem:} (\ref{eq:original_problem}) includes a logarithmic-quadratic objective function, non-convex constraints (explained in detail in Section III), and continuous and integer variables. Therefore, (\ref{eq:original_problem}) is a mixed-integer non-convex optimization problem that may be NP-hard or even undecidable.

To solve this highly challenging problem, we propose a Lyapunov-based optimization framework. In this framework, we first attempt to decouple the objective function of (\ref{eq:original_problem}) in terms of time slots.
Next, we leverage a Lyapunov drift-plus-penalty technique \cite{Neely2014A} to further decompose the problem with a time-decoupled objective function into multiple repeatedly optimized subproblems. Finally, an iterative optimization scheme is designed to tackle the mixed-integer non-convex characteristic of the subproblems.


\section{Lyapunov-Based optimization Framework}
Observing that the objective function is the logarithmic function of time average achievable data rates, we refer to the objective function as time-coupled objective function. We first leverage the Jensen inequality to decouple the time-coupled objective function.
A sequential decision problem with a time-decoupled objective function can then be obtained, which is still difficult to be solved effectively.
Reinforcement learning (RL) approaches, such as Q-learning \cite{watkins1992q}, and deep deterministic policy gradient (DDPG) \cite{lillicrap2015continuous}, can be explored to solve sequential decision problems.
However, Q-learning-based approaches can only handle discrete and low-dimensional action spaces, and DDPG-based methods are designed for continuous (real valued) action spaces \cite{lillicrap2015continuous}. (\ref{eq:original_problem}) simultaneously involves discrete and continuous action spaces, which indicates that it will be highly difficult to design RL approaches to solve (\ref{eq:original_problem}). What's more, RL approaches suffer from lack of complete theoretical basis.

To solve the sequential decision problem effectively, we propose a Lyapunov-based optimization framework, which decomposes the problem into mutiple repeatedly optimized subproblems rather than solve this problem as a whole. The procedure of the framework is as follows.

\subsection{Decouple of the objective function}
Let ${\bm \gamma} (t)=({{\gamma }_{1}}(t),\ldots ,{{\gamma }_{N}}(t))$ be an auxiliary vector with {$0\le {{\gamma }_{i}}(t)\le u_{\rm dl}^{\max }$, $\forall i, t$}. Define $g(t)=\phi (\bm{\gamma} (t)) = {\sum\nolimits_{i }{\log_2(1 + \gamma_{i}(t))}}$. Then, according to the Jensen's inequality, we can achieve $\bar{g}(t)\le \phi ({{\bar{\bm \gamma }}}(t))$. With this important inequality, the following Proposition shows that we can equivalently transform the original problem into a new one with a time-decoupled objective function.
\begin{prop}\label{prop:time_decouple}
{\rm The original problem (\ref{eq:original_problem}) can be equivalently transformed into the following sequential decision problem.
\begin{subequations}\label{eq:Jensen_problem}
\begin{alignat}{2}
& \mathop {\rm Maximize }\limits_{{\mathcal B}(t),{\mathcal S}(t),{\mathcal P}(t),{{{\mathcal X}}}(t),{\bm {\gamma} (t)}} {\mkern 1mu} \text{ } \mathop {\lim \inf }\limits_{t \to \infty } ( {\bar g(t) - \rho \sum\nolimits_{j } {{{{\bar p}_j^{tot}}(t)}} } ) \allowdisplaybreaks[4] \\
&{\rm s.t.} \quad {\mathop {\lim \inf }\limits_{t \to \infty } [{{\bar u}_{i}^{\rm dl}}(t) - {{\bar \gamma }_{i}}(t)] = 0,\forall i} \allowdisplaybreaks[4] \\
& \quad \mathop {\lim \inf }\limits_{t \to \infty } [{{\bar u}_{i}^{\rm dl}}(t) - {\bar C_{i,f_i}^{\rm th}(t )}] \ge 0,\forall i \allowdisplaybreaks[4] \\
& \quad \mathop {\lim \inf }\limits_{t \to \infty } [{{\tilde p}_j} - {{\bar p}_j^{tot}}(t)] \ge 0,\forall j \\
& \quad 0 \le {\gamma _{i}}(t) \le u_{i}^{\max }, \forall i,t \\
& \quad {\rm (\ref{eq:original_problem}d)-(\ref{eq:original_problem}h).}
\end{alignat}
\end{subequations}
}
\end{prop}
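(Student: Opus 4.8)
The plan is to establish the equivalence by showing that problems (\ref{eq:original_problem}) and (\ref{eq:Jensen_problem}) attain the same optimal objective value and that their optimizers correspond, which I would carry out through two opposing inequalities on the optimal values. Write $V^\star$ and $W^\star$ for the optimal objective values of (\ref{eq:original_problem}) and (\ref{eq:Jensen_problem}), respectively. The whole construction rests on two facts already available in the excerpt: first, the physical decision variables $({\mathcal B}(t),{\mathcal S}(t),{\mathcal P}(t),{\mathcal X}(t))$ enter both problems through the identical instantaneous constraints (\ref{eq:original_problem}d)--(\ref{eq:original_problem}h), while (\ref{eq:Jensen_problem}c) and (\ref{eq:Jensen_problem}d) are merely restatements of (\ref{eq:original_problem}b) and (\ref{eq:original_problem}c); second, the concavity of $\phi$ yields the Jensen bound $\bar g(t)\le\phi(\bar{\bm\gamma}(t))$, which becomes an \emph{equality} precisely when $\gamma_i(\tau)$ is held constant in $\tau$.

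For the direction $W^\star\le V^\star$, I would start from an arbitrary feasible point of (\ref{eq:Jensen_problem}) and discard its auxiliary component $\bm\gamma(t)$. The remaining physical variables are then feasible for (\ref{eq:original_problem}), since (\ref{eq:Jensen_problem}c)--(\ref{eq:Jensen_problem}d) and the shared constraints furnish exactly (\ref{eq:original_problem}b)--(\ref{eq:original_problem}h). It then remains to compare objectives: the Jensen bound gives $\bar g(t)\le\phi(\bar{\bm\gamma}(t))$, and the coupling constraint (\ref{eq:Jensen_problem}b) together with the componentwise monotonicity of $\phi$ forces $\phi(\bar{\bm\gamma}(t))$ to be asymptotically dominated by $\phi(\{\bar u_i^{\rm dl}(t)\})$. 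Passing to the $\liminf$ and subtracting the common penalty $\rho\sum_j\bar p_j^{tot}(t)$, the objective of (\ref{eq:Jensen_problem}) at this point is no larger than that of (\ref{eq:original_problem}) at the retained variables, hence $W^\star\le V^\star$.

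For the reverse direction $W^\star\ge V^\star$, I would take an optimal (or $\liminf$-achieving) solution of (\ref{eq:original_problem}) with long-run average rates $\bar u_i^{{\rm dl},\star}$, keep its physical variables unchanged, and augment it by the constant auxiliary choice $\gamma_i(t)=\bar u_i^{{\rm dl},\star}$ for every $t$. This choice makes Jensen tight, so $\bar g(t)=\phi(\{\bar u_i^{{\rm dl},\star}\})$ and the original objective is reproduced exactly; it satisfies (\ref{eq:Jensen_problem}b) with equality because $\bar\gamma_i(t)=\bar u_i^{{\rm dl},\star}=\lim_{t\to\infty}\bar u_i^{\rm dl}(t)$, and it meets the box constraint (\ref{eq:Jensen_problem}e) because every achievable rate is bounded above by $u_{\rm dl}^{\max}$. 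Constraints (\ref{eq:Jensen_problem}c)--(\ref{eq:Jensen_problem}d) are inherited from feasibility in (\ref{eq:original_problem}). This produces a feasible point of (\ref{eq:Jensen_problem}) whose objective equals $V^\star$, giving $W^\star\ge V^\star$ and therefore $W^\star=V^\star$.

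I expect the main obstacle to lie in the $\liminf$/$\limsup$ bookkeeping rather than in the algebra. In particular, (\ref{eq:Jensen_problem}b) is phrased as an equality of a $\liminf$ to zero, so turning it into the pointwise bound $\bar\gamma_i(t)\le\bar u_i^{\rm dl}(t)$ needed for the monotonicity step in the first direction requires care, possibly by restricting to a subsequence along which the $\liminf$ is attained and invoking the continuity of $\phi$. Likewise, the reverse direction tacitly assumes the optimal time averages $\bar u_i^{{\rm dl},\star}$ exist as genuine limits; if only a $\liminf$ is available, I would define $\gamma_i(t)$ through that subsequential value and verify that the Jensen equality and (\ref{eq:Jensen_problem}b) still hold in the $\liminf$ sense. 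Making these asymptotic arguments airtight, and confirming that the bound $u_i^{\max}$ in (\ref{eq:Jensen_problem}e) is never binding against the physically achievable rate, is where the real work sits.
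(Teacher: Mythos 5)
Your proposal is correct and follows essentially the same route as the paper's Appendix A: one direction uses the Jensen bound $\bar g(t)\le\phi(\bar{\bm\gamma}(t))$ together with constraint (\ref{eq:Jensen_problem}b) to show the transformed problem cannot exceed the original, and the other direction substitutes $\gamma_i(t)=\bar u_i^{{\rm dl};\star}(t)$ to recover the original optimum as a feasible point of the transformed problem. The only difference is that you explicitly flag the $\liminf$ bookkeeping that the paper sidesteps by assuming all limits exist, which is a reasonable refinement rather than a departure.
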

\begin{proof}
Please refer to Appendix A.
\end{proof}

\subsection{Lyapunov drift-plus-penalty}
In this subsection, we leverage a Lyapunov drift-plus-penalty technique \cite{Neely2014A} to tackle the time average constraints in (\ref{eq:Jensen_problem}).
Specifically, to enforce the constraint (\ref{eq:Jensen_problem}c), we introduce a family of virtual queues $\{{{Q}_{i}}(t)\}$ as the following
\begin{equation}\label{eq:Queue_ui}
{Q_{i}}(t ) = {Q_{i}}(t-1) + {\varphi C_{i,f_i}^{\rm th}(t)} - {u_{i}^{\rm dl}}(t-1), \forall i,t.
\end{equation}

It can be concluded that the constraint (\ref{eq:Jensen_problem}c) is satisfied if the following mean-rate stability condition holds \cite{Neely2014A}
\begin{equation}\label{eq:Queue_EQ}
\mathop {\lim }\nolimits_{t \to \infty } {{{\mathbb E}\{{{[{Q_{i}}(t)]}^ + }\}}}/{t} = 0, \forall i,
\end{equation}
where the non-negative operation ${[x]^ + } = \max \{ x,0\}$.

Likewise, to enforce the time average constraints (\ref{eq:Jensen_problem}b) and (\ref{eq:Jensen_problem}d), we define the virtual queues $Z_{i}(t)$, and ${{H}_{j}}(t)$, respectively, as
\begin{equation}\label{eq:Queue_Z}
{Z_{i}}(t ) = {Z_{i}}(t-1) + {\gamma _{i}}(t-1) - {u_{i}^{\rm dl}}(t-1), \forall i, t,
\end{equation}
\begin{equation}\label{eq:Queue_H}
{H_j}(t ) = {H_j}(t-1) + p_j^{tot}(t-1) - {\tilde p_j}, \forall  j, t.
\end{equation}

(\ref{eq:Jensen_problem}b) and (\ref{eq:Jensen_problem}d) can be satisfied, if the following mean-rate stability conditions can be held
\begin{equation}\label{eq:Queue_EZ}
\mathop {\lim }\nolimits_{t \to \infty } {{{\mathbb E}\{{{[{Z_{i}}(t)]}^ + }\}}}/{t} = 0, \forall i,
\end{equation}
\begin{equation}\label{eq:Queue_EH}
\mathop {\lim }\nolimits_{t \to \infty } {{{\mathbb E}\{{{[{H_j}(t)]}^ + }\}}}/{t} = 0, \forall j.
\end{equation}

With the definitions of the virtual queues ${{[{{Q}_{i}}(t)]}^{+}}$, ${{[{{Z}_{i}}(t)]}^{+}}$, and ${{[{{H}_{j}}(t)]}^{+}}$, we can define a Lyapunov function $L\left( t \right)$ as a sum of square of these virtual queues at time slot $t$, i.e.,
${L(t)} \text{ } { \buildrel \Delta \over = \text{ } \frac{1}{2}\sum\limits_{i} {{{({{[{Q_{i}}(t)]}^ + })}^2}}  + \frac{1}{2}\sum\limits_{i} {{{({{[{Z_{i}}(t)]}^ + })}^2}}  }$ $ + {\frac{1}{2}\sum\limits_{j } {{{({{[{H_j}(t)]}^ + })}^2}} }$.
$L(t)$ is a scalar measure of constraint violations. Intuitively, if the value of $L(t)$ is small, the absolute values of all queues are small; otherwise, the absolute value of at least one queue is great.
Additionally, we define a drift-plus-penalty function as $\Delta (t)-V\left( g(t)-\rho \sum\nolimits_{j }{ {{p}_{j}^{tot}}(t)} \right)$, where $\Delta (t) = L(t + 1) - L(t)$ represents a \emph{Lyapunov drift}, $-\left( g(t)-\rho \sum\nolimits_{j }{{{p}_{j}^{tot}}(t)} \right)$ is a \emph{penalty}, and $V$ is a non-negative penalty coefficient that weighs the trade-off between constraint violations and optimality.
Lemma \ref{lemma:1} presents the upper bound of the function value.

\begin{lemma}\label{lemma:1}
{\rm At each time slot $t$, the upper bound of the value of the drift-plus-penalty function $\Delta (t)-V\left( g(t)-\rho \sum\nolimits_{j }{{{p}_{j}^{tot}}(t)} \right)$ can be expressed as (\ref{eq:upper_bound}) with $B{\rm{ }} \buildrel \Delta \over = {\sum\nolimits_{i} {{{(u_{\rm dl}^{\max })}^2}}}  + \sum\nolimits_{j } {{{(\hat p_j)}^2}/2}$
\begin{equation}\label{eq:upper_bound}
\begin{array}{l}
\Delta (t) - V\left( {g(t) - \rho \sum\nolimits_j {p_j^{tot}(t)} } \right) \le B - \\
\quad \sum\nolimits_j {{{[{H_j}(t)]}^ + }\left( {{{\tilde p}_j} - p_j^c} \right)}  + V\rho \sum\nolimits_j {p_j^c}  + \\
\quad \sum\nolimits_i {{{[{Q_i}(t)]}^ + }\varphi C_{i,{f_i}}^{{\rm{th}}}(t)}  - V\phi ({\bm \gamma} (t)) + \\
\quad \sum\nolimits_i {{{[{Z_i}(t)]}^ + }{\gamma _i}(t)}  + \sum\nolimits_j {\{ V\rho  + {{[{H_j}(t)]}^ + }\} {p_j}(t)}  - \\
\quad \sum\nolimits_i {\{ {{[{Q_i}(t)]}^ + } + {{[{Z_i}(t)]}^ + }\} u_i^{{\rm{dl}}}(t)}.
\end{array}
\end{equation}
}
\end{lemma}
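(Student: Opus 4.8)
The plan is to bound the one-slot Lyapunov drift $\Delta(t)=L(t+1)-L(t)$ queue-by-queue and then append the penalty term, in the spirit of the standard drift-plus-penalty argument of \cite{Neely2014A}. The starting point is the elementary scalar inequality
\begin{equation}
([x+\delta]^+)^2 \le ([x]^+)^2 + \delta^2 + 2[x]^+\delta,
\end{equation}
valid for all real $x$ and $\delta$ (a short case check on the signs of $x$ and $x+\delta$ establishes it). First I would apply this inequality separately to each of the three families of virtual queues, identifying the per-slot increment $\delta$ with the update term read off from (\ref{eq:Queue_ui}), (\ref{eq:Queue_Z}), and (\ref{eq:Queue_H}); that is, $\delta_i^Q=\varphi C_{i,f_i}^{\rm th}(t)-u_i^{\rm dl}(t)$ for $Q_i$, $\delta_i^Z=\gamma_i(t)-u_i^{\rm dl}(t)$ for $Z_i$, and $\delta_j^H=p_j^{tot}(t)-\tilde p_j$ for $H_j$.

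Next I would sum these inequalities over all $i$ and $j$ and divide by two. This bounds $\Delta(t)$ by two kinds of terms: the half-squared increments $\tfrac12(\delta_i^Q)^2$, $\tfrac12(\delta_i^Z)^2$, $\tfrac12(\delta_j^H)^2$, and the cross terms $[Q_i(t)]^+\delta_i^Q$, $[Z_i(t)]^+\delta_i^Z$, $[H_j(t)]^+\delta_j^H$. The squared-increment terms are then uniformly bounded by $B$: since $0\le u_i^{\rm dl}(t)\le u_{\rm dl}^{\max}$ and $0\le \varphi C_{i,f_i}^{\rm th}(t),\gamma_i(t)\le u_{\rm dl}^{\max}$, each of $\tfrac12(\delta_i^Q)^2$ and $\tfrac12(\delta_i^Z)^2$ is at most $\tfrac12(u_{\rm dl}^{\max})^2$, and since $0\le p_j^{tot}(t)\le\hat p_j$ the term $\tfrac12(\delta_j^H)^2$ is at most $\tfrac12(\hat p_j)^2$; summing reproduces exactly $B=\sum_i (u_{\rm dl}^{\max})^2 + \sum_j (\hat p_j)^2/2$. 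Finally I would subtract the penalty $V(g(t)-\rho\sum_j p_j^{tot}(t))=V\phi({\bm \gamma}(t))-V\rho\sum_j(p_j(t)+p_j^c)$, expand $p_j^{tot}(t)=p_j(t)+p_j^c$ inside the $H_j$ cross terms, and regroup the remaining linear terms according to the decision variable they multiply ($C_{i,f_i}^{\rm th}(t)$, $\gamma_i(t)$, $u_i^{\rm dl}(t)$, $p_j(t)$, and the constants $p_j^c$). Collecting these produces precisely the right-hand side of (\ref{eq:upper_bound}).

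The step I expect to require the most care is pinning down the boundedness facts that feed the constant $B$, since the bound must be uniform over all admissible decisions. I would justify $u_i^{\rm dl}(t)\le u_{\rm dl}^{\max}$ from the scheduling constraint (\ref{eq:slice_request_indicator}) (at most one UAV serves user $i$) together with the maximum-power, minimum-distance (altitude-$g$) evaluation that defines $u_{\rm dl}^{\max}$, and appeal to constraint (\ref{eq:Jensen_problem}e) and to feasibility of the rate requirement for the bounds on $\gamma_i(t)$ and $\varphi C_{i,f_i}^{\rm th}(t)$. A secondary bookkeeping subtlety is the time-index convention of the queue recursions: the drift runs from slot $t$ to $t+1$, whereas the variables appearing in (\ref{eq:upper_bound}) are those controlled at slot $t$, so I would write all increments in terms of slot-$t$ decision variables throughout to keep the final grouping consistent.
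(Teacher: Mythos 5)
Your proposal is correct and follows essentially the same route as the paper: the scalar inequality $([x+\delta]^+)^2 \le ([x]^+)^2 + \delta^2 + 2[x]^+\delta$ that you state up front is exactly what the paper establishes via its three-case sign analysis for each queue, and the subsequent steps (summing over queues, absorbing the squared increments into $B$ using $0\le u_i^{\rm dl}(t),\gamma_i(t),\varphi C_{i,f_i}^{\rm th}(t)\le u_{\rm dl}^{\max}$ and $0\le p_j^{tot}(t)\le \hat p_j$, then appending the penalty and regrouping) mirror the paper's argument. No gaps.
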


\begin{proof}
Please refer to Appendix B.
\end{proof}

In (\ref{eq:upper_bound}), the right-hand-side expression constitutes the upper bound of the drift-plus-penalty. As such, the minimization of the drift-plus-penalty can be approximated by minimizing its upper bound.
We therefore mitigate (\ref{eq:original_problem}) by greedily minimizing the upper bound of the drift-plus-penalty function at each $t$.
Meanwhile, at each $t$, the upper bound can be decomposed into four independent terms including a constant term, an auxiliary variable term, a term related to content caching, and a term consisting of content delivery strategies, UAV transmit power as well as UAV trajectories. As a result, the Lyapunov-based optimization framework of mitigating (\ref{eq:original_problem}) can be summarized as the following repeated optimization subproblems of three-tier structure.
\begin{itemize}
\item At each time slot $t$, observe $Q_{i}(t)$, ${Z_{i}}(t)$, ${{H}_{j}}(t)$ for any user $i \in {\mathcal I}$, and UAV $j \in {\mathcal J}$.
\item \emph{AUxiliary-Tier (AUT) optimization:} Choose ${{\gamma _{i}}(t)}$ for each user $i$ to mitigate (\ref{eq:gamma_related_problem})
\begin{subequations}\label{eq:gamma_related_problem}
\begin{alignat}{2}
& \mathop {\rm Minimize }\limits_{{\bm {\gamma}} (t)} {\mkern 1mu} \text{ } \mathop  - V\phi ({\bm {\gamma} (t)}) + {\sum\nolimits_{i}{{{[{Z_{i}}(t)]}^ + }{\gamma _{i}}(t)}} \\ &{\rm {s.t.} \text{ }} \quad 0 \le \gamma_{i}(t) \le u_{\rm dl}^{\max}
\end{alignat}
\end{subequations}
\item \emph{Content-Placement-Tier (CPT) optimization}: Determine the content placement decision variable $b_{j,f_i}(t)$ for each UAV $j$ to optimize (\ref{eq:caching_placement_problem})
\begin{subequations}\label{eq:caching_placement_problem}
\begin{alignat}{2}
& \mathop {\rm Minimize }\limits_{{\mathcal B}(t)} {\mkern 1mu} \text{ } \mathop  {\sum\nolimits_{i}{{{[{Q_{i}}(t)]}^ + }C_{i,f_i}^{\rm th}(t) }} \\
&{\rm {s.t.} \text{ }}  \quad {\rm constraint} \text{ } (\ref{eq:content_saving_indicator}).
\end{alignat}
\end{subequations}
\item \emph{Delivery-Power-and-Trajectory-Tier (DPT$^2$) optimization}: Given UAV trajectories ${{\mathcal {X}}(t-1)}$, choose ${{\mathcal {S}}(t)}$, ${{\mathcal {P}}(t)}$, and ${{ {{\mathcal X}}}(t)}$ to mitigate (\ref{eq:subproblem_BPX})
\begin{subequations}\label{eq:subproblem_BPX}
\begin{alignat}{2}
& \mathop {\rm Minimize }\limits_{{\mathcal {S}}(t),{\mathcal {P}}(t),{{{\mathcal X}}(t)}} {\mkern 1mu} \text{ }  \sum\nolimits_{j } {\{ V\rho  + {{[{H_j}(t)]}^ + }\} {p_j}(t)}  - \nonumber \\
& \qquad \quad {\sum\nolimits_{i}{\{ {{[{Q_{i}}(t)]}^ + } + {{[{Z_{i}}(t)]}^ + }\} {u_{i}^{\rm dl}}(t)}}   \\
& {\rm s.t.} \quad {\rm (\ref{eq:slice_request_indicator}),(\ref{eq:waypoint_constr}),
(\ref{eq:safety_distance}),(\ref{eq:horizontal_dist_constraint}),(\ref{eq:original_problem}d)-(\ref{eq:original_problem}g).}
\end{alignat}
\end{subequations}
\item Compute $u_{i}^{\rm dl}(t)$ using (\ref{eq:achievable_data_rate}). Update there virtual queues using (\ref{eq:Queue_ui}), (\ref{eq:Queue_Z}), and (\ref{eq:Queue_H}).
\end{itemize}



As shown in the above framework, the solution of (\ref{eq:original_problem}) lies in the optimization of some subproblems. In this section, we present the detailed procedure of solving it.

\subsection{AUT optimization}
As the proportional fairness function $\phi (\bm \gamma (t))$ is a separable sum of individual logarithmic functions, the mitigation of (\ref{eq:gamma_related_problem}) is equivalent to a separate selection of the individual auxiliary variable {${{\gamma }_{i}}(t)\in [ 0,u_{\rm dl}^{\max } ]$} for each user $i \in {\mathcal I}$ that minimizes a convex function $-V{{\log }_{2}}(1+{{\gamma }_{i}}(t))+{{[{{Z}_{i}}(t)]}^{+}}{{\gamma }_{i}}(t)$ with respect to (w.r.t.) $\gamma_i(t)$. Thus, the closed-form solution to (\ref{eq:gamma_related_problem}) can be written as
\begin{equation}\label{eq:compute_gamma}
{\gamma _{i}}(t) = \left\{ {\begin{array}{*{20}{l}}
{u_{\rm dl}^{\max }, \quad {{{[{Z_{i}}(t)]}^ + } = 0}}\\
{\min \left\{ {{{\left[ {\frac{V}{{{{[{Z_{i}}(t)]}^ + }\ln 2}} - 1} \right]}^ + },\;u_{\rm dl}^{\max }} \right\},\text{ } {\rm else}}
\end{array}} \right.
\end{equation}

\subsection{CPT optimization}
The goal of (\ref{eq:caching_placement_problem}) is to reduce the data rate requirements of all terrestrial users. To this aim, content files should be cached in UAVs.
As a result, the total power of all UAVs can be reduced when delivering content files to terrestrial users.
Minimizing the total power of all UAVs is equal to the maximization of the reduction of transmit power of each UAV brought by content caching \cite{Chen2017Caching}. 
Thus, we can design the following content placement scheme to solve (\ref{eq:caching_placement_problem})
\begin{equation}\label{eq:greedy_bk}
{b_{j,{f_i}}}(t) = \left\{ {\begin{array}{*{20}{l}}
{1,}&{i = i^{\star}}\\
{0,}&{\rm otherwise}
\end{array}} \right.
\end{equation}
where
\begin{equation}\label{eq:i_star}
{i^ \star } = \mathop {\arg \max }\limits_{i \in {{\cal N}_j}} {[{Q_i}(t)]^ + }({p_j}(\beta) - {p_j}(\alpha )),
\end{equation}
and ${\mathcal N}_j$ is the set of terrestrial users with the horizontal distance towards UAV $j$ at time slot $t$ being small than $e_{\max}$, $p_j(\beta) - p_j(\alpha)$ represents the transmit power reduction of UAV $j$ due to the content caching. Besides, UAV $j$ will cache the content for its nearest user if ${\mathcal N}_j = \emptyset $. $\beta  = \frac{L}{{\left( {\hat D - {D^{{\rm{th}}}}(\hat D - L/u_{{\rm{dl}}}^{\max }) - {D^{{\rm{ul}}}}} \right)}}$, $\alpha  = \frac{L}{{\left( {\hat D - {D^{{\rm{th}}}}(\hat D - L/u_{{\rm{dl}}}^{\max })} \right)}}$, and $p_j(\varpi ) = \frac{(2^{\varpi/W}-1)(\sigma^2W+I_{ij}(t))}{h_{ij}(t)}$.

{Remark 2:} From (\ref{eq:greedy_bk}) and (\ref{eq:i_star}), we can see that the content placement decision will be made based on the states (e.g., UAV locations) of the UAV network at the current time slot.
Besides, the content placement depends on the pre-knowledge of content request of each terrestrial user and users' locations, which corresponds to the result given in \cite{Chen2017Caching}.

\subsection{DPT$^2$ optimization}
It can be observed that (\ref{eq:subproblem_BPX}) includes logarithmic-quadratic-terms and continuous and integer variables. Besides, the constraint (\ref{eq:original_problem}f) is non-convex; thus, (\ref{eq:subproblem_BPX}) is a mixed-integer non-convex programming problem that is difficult to be addressed directly.

To address this challenge, we first propose to tackle the mixed-integer issue of (\ref{eq:subproblem_BPX}) by leveraging an iterative optimization scheme. Particularly, the solution to (\ref{eq:subproblem_BPX}) includes the iterative optimization of content delivery decision variables, UAV trajectory, and transmit power.
Second, we explore an SCA technique \cite{scutari2016parallel} to approximately convert the generated non-convex optimization problems during the iterative optimization into convex ones.
\subsubsection{Content delivery decision variable optimization}
For given UAV trajectories ${\mathcal X}(t)$ and transmit power ${\mathcal P}(t)$, the content delivery stratety of (\ref{eq:subproblem_BPX}) can be developed by solving the following problem
\begin{subequations}\label{eq:UE_association_problem}
\begin{alignat}{2}
& \mathop {\rm Maximize }\limits_{{\mathcal {S}}(t)} {\mkern 1mu} \text{ } {\sum\nolimits_{i} {\sum\nolimits_{j } {{c_{ij}(t)}{s_{ij}}(t)} } }  \\
&{\rm s.t. \text{ }  \quad} {\rm (\ref{eq:slice_request_indicator}), (\ref{eq:original_problem}e).}
\end{alignat}
\end{subequations}
where ${c_{ij}(t)} = \{ {[{Q_{i}}(t)]^ + } + {[{Z_{i}}(t)]^ + }\}{\log _2}\left( {1 + \frac{{{p_j}(t){h_{ij}}(t)}}{{{\sigma ^2}W + {I_{ij}}(t)}}} \right)$.

It is easy to know that (\ref{eq:UE_association_problem}) is an integer linear programming problem and can be efficiently solved by some optimization tools such as MOSEK.

\subsubsection{UAV trajectory optimization}
Given the UAV transmit power ${\mathcal P}(t)$, UAV trajectories at time slot $t-1$, ${{\mathcal {X}}(t-1)}$, and the content delivery decision variable set ${\mathcal S}(t)$, (\ref{eq:subproblem_BPX}) is still difficult to be solved by some standard optimization methods due to the non-convex objective function and constraint (\ref{eq:safety_distance}). To solve this problem effectively, an SCA technique \cite{scutari2016parallel} is exploited to tackle the non-convexity and approximately transform the non-convex optimization problem into a convex one. The key idea of SCA is to solve a sequence of convex optimization problems with different initial points to obtain an approximate solution to a non-convex optimization problem instead of solving the hard non-convex problem directly.
The following Proposition presents the approximately transformed convex UAV trajectory optimization problem.
\begin{prop}\label{lemma:lemma_uav_location}
{\rm By exploring the SCA technique, UAV trajectories at time slot $t$ can be obtained by mitigating the following convex optimization problem.
\begin{subequations}\label{eq:UAV_location_equal_problem_approximate}
\begin{alignat}{2}
& \mathop {\rm Maximize }\limits_{{\mathcal {X}}(t), \{\eta_{i}(t)\}, \{B_{ik}(t)\}} {\mkern 1mu} \text{ } {{\sum\nolimits_{i} {\{ {{[{Q_{i}}(t)]}^ + } + {{[{Z_{i}}(t)]}^ + }\} {\eta _{i}}(t)} }}  \\
& {\rm s.t.} \text{ } {\sum\limits_{j } {{s_{ij}}(t)( {D_{i}^{(r)}(t) - \sum\limits_{k \in {\mathcal J}} {E_{ik}^{(r)}(t)( {||{{\bm x}_k}(t) - {{\bm x}_{i}^{\rm u}(t)}|{|^2} - } } } } } \nonumber \\
& \ { { {||{\bm x}_k^{(r)}(t) - {{\bm x}_{i}^{\rm u}(t)}|{|^2}} )} ) + \sum\nolimits_{j } {{s_{ij}}(t){{\tilde R}_{ij}}(t)}  \ge {\eta _{i}}(t),\forall i,t } \allowdisplaybreaks[4] \\
& \quad {B_{ik}(t)} \le  {|| {{\bm x}_k^{(r)}(t) - {\bm x}_{i}^{\rm u}(t)} ||^2} + \nonumber \\
& \quad 2{( {{\bm x}_k^{(r)}(t) - {\bm x}_{i}^{\rm u}(t)} )^{\rm T}}\left( {{{\bm x}_k}(t) - {{\bm x}_{i}^{\rm u}(t)}} \right),\forall i,k \ne j,t \\
& \quad - {|| {{\bm x}_j^{(r)}(t) - {\bm x}_k^{(r)}(t)} ||^2} + 2{( {{\bm x}_j^{(r)}(t) - {\bm x}_k^{(r)}(t)} )^{\rm T}} \times \nonumber \\
& \qquad \left( {{{\bm x}_j}(t) - {{\bm x}_k}(t)} \right) \ge d_{\rm min}^2, \forall j,k \ne j,t \\
& \quad {\rm (\ref{eq:waypoint_constr}),(\ref{eq:horizontal_dist_constraint})}
\end{alignat}
\end{subequations}
where $\eta_{i}(t)$ and ${B_{ik}(t)}$ are slack variables, $D_{i}^{(r)}(t) = {\log _2}( {{\sigma ^2}W + \sum\limits_{k \in {\mathcal J}} {\frac{{{{p_k}(t)\theta _{ij}}}}{{{g^2} + ||{\bm x}_k^{(r)}(t) - {{\bm x}_{i}^{\rm u}(t)}|{|^2}}}} } )$, $E_{ik}^{(r)}(t) = {{\frac{{{{p_k}(t)\theta _{ij}}}}{{{{\left( {{g^2} + ||{\bm x}_k^{(r)}(t) - {{\bm x}_{i}^{\rm u}(t)}|{|^2}} \right)}^2}{2^{D_{i}^{(r)}(t)}\ln2}}}}}$, ${\tilde R_{ij}}(t) =  - {\log _2}( {{\sigma ^2}W + \sum\nolimits_{k \in {\mathcal J}\backslash \{ j\} } {\frac{{{{p_k}(t)\theta _{ij}}}}{{{g^2} + {B_{ik}}(t)}}} } )$, ${\theta _{ij}} = \frac{{G_{LoS}{\varsigma ^2}}}{{16{\pi ^2}}}$, ${\bm x}_{j}^{(r)}(t)$, and ${\bm x}_{k}^{(r)}(t)$ are given locations of UAV $j$ and UAV $k$ at the $r$-th iteration of the SCA technique.}
\end{prop}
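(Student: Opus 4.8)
The plan is to regard the trajectory subproblem as the instance of \ref{eq:subproblem_BPX} in which the transmit powers $\mathcal{P}(t)$ and the delivery variables $\mathcal{S}(t)$ are frozen. First I would note that, with $\mathcal{P}(t)$ fixed, the term $\sum_j\{V\rho+[H_j(t)]^+\}p_j(t)$ in the objective of \ref{eq:subproblem_BPX} is a constant, so minimizing \ref{eq:subproblem_BPX} over $\mathcal{X}(t)$ is equivalent to maximizing $\sum_i\{[Q_i(t)]^++[Z_i(t)]^+\}u_i^{\rm dl}(t)$. Substituting the LoS path-loss model \ref{eq:los_pathloss} into \ref{eq:achievable_data_rate} and writing $\theta_{ij}=G_{LoS}\varsigma^2/(16\pi^2)$, I would express the per-link spectral efficiency $\log_2(1+\cdot)$ as a difference of two logarithms,
\[
\log_2\Big(\sigma^2 W+\sum_{k\in\mathcal{J}}\frac{p_k(t)\theta_{ij}}{g^2+\|\bm{x}_k(t)-\bm{x}_i^{\rm u}(t)\|^2}\Big)-\log_2\Big(\sigma^2 W+\sum_{k\ne j}\frac{p_k(t)\theta_{ij}}{g^2+\|\bm{x}_k(t)-\bm{x}_i^{\rm u}(t)\|^2}\Big),
\]
so that the rate is a difference of two terms, each of which, viewed as a function of the squared distances $v_k=\|\bm{x}_k(t)-\bm{x}_i^{\rm u}(t)\|^2$, is convex. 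This convexity claim is the crux: every summand $p_k\theta_{ij}/(g^2+v_k)$ is log-convex in $v_k$ (since $-\log(g^2+v_k)$ has positive second derivative), a sum of log-convex functions is log-convex, and the logarithm of a log-convex function is convex; hence the rate is a difference-of-convex function of the distances and calls for SCA.

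Next I would introduce the slack variables $\eta_i(t)$ so that the objective becomes the linear $\sum_i\{[Q_i(t)]^++[Z_i(t)]^+\}\eta_i(t)$ while $\eta_i(t)$ is constrained to lie below the achievable rate, which is \ref{eq:UAV_location_equal_problem_approximate}b. The two logarithms must then be treated differently because they enter the rate with opposite signs. The ``signal-plus-interference-plus-noise'' logarithm enters positively and is convex in $\{v_k\}$, so its first-order Taylor expansion about the reference point $\{\bm{x}_k^{(r)}(t)\}$ is a global underestimator; this produces exactly the term $D_i^{(r)}(t)-\sum_k E_{ik}^{(r)}(t)\big(\|\bm{x}_k(t)-\bm{x}_i^{\rm u}(t)\|^2-\|\bm{x}_k^{(r)}(t)-\bm{x}_i^{\rm u}(t)\|^2\big)$, with $E_{ik}^{(r)}(t)=-\partial D_i/\partial v_k$ evaluated at the reference point, and it is concave in $\bm{x}_k(t)$ because $-\|\bm{x}_k(t)-\bm{x}_i^{\rm u}(t)\|^2$ is concave. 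The interference-plus-noise logarithm enters with a minus sign and is therefore concave in $\{v_k\}$ but not concave in $\bm{x}_k(t)$; I would decouple it from the distance by a slack variable $B_{ik}(t)$, replacing $v_k$ by $B_{ik}(t)$ to obtain the concave-in-$B_{ik}$ term $\tilde{R}_{ij}(t)$, and then impose $B_{ik}(t)\le\|\bm{x}_k(t)-\bm{x}_i^{\rm u}(t)\|^2$ so that $\tilde{R}_{ij}(t)$ is a valid lower bound of the true term (as $-\log_2$ of a decreasing argument).

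The constraint $B_{ik}(t)\le\|\bm{x}_k(t)-\bm{x}_i^{\rm u}(t)\|^2$ is itself non-convex because its right-hand side is convex; I would convexify it by replacing that right-hand side with its first-order Taylor lower bound about $\bm{x}_k^{(r)}(t)$, yielding \ref{eq:UAV_location_equal_problem_approximate}c. The same linearization device applied to the convex function $\|\bm{x}_j(t)-\bm{x}_k(t)\|^2$ on the left of the safety constraint \ref{eq:safety_distance} turns that non-convex constraint into the affine lower-bound constraint \ref{eq:UAV_location_equal_problem_approximate}d. Since the remaining constraints \ref{eq:waypoint_constr} and \ref{eq:horizontal_dist_constraint} are already convex (a squared norm bounded above by a constant), the resulting program \ref{eq:UAV_location_equal_problem_approximate} has a concave objective over a convex feasible set and is therefore convex; moreover every approximation employed is a lower bound that is tight at the reference point, so any feasible point of the approximate problem is feasible for the original subproblem. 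The main obstacle I anticipate is the convexity analysis of the two logarithmic terms in the squared distances, together with the care needed to apply opposite SCA treatments to them—direct Taylor linearization of the positively-signed term versus slack-variable substitution plus a second linearization of the negatively-signed interference term—so that the overall rate is under-estimated and the epigraph constraint \ref{eq:UAV_location_equal_problem_approximate}b remains convex.
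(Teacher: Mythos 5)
Your proposal is correct and follows essentially the same route as the paper's Appendix C: epigraph reformulation with $\eta_i(t)$, the slack substitution $B_{ik}(t)$ for the interference distances, and first-order Taylor lower bounds on $\hat R_{ij}(t)$ (in the squared distances), on $\|{\bm x}_j(t)-{\bm x}_k(t)\|^2$, and on $\|{\bm x}_k(t)-{\bm x}_i^{\rm u}(t)\|^2$. The only addition is your explicit log-convexity argument for the convexity of the rate terms in the squared distances, which the paper asserts without proof; that argument is valid and slightly strengthens the exposition.
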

\begin{proof}
Please refer to Appendix C.
\end{proof}

{{Remark 3}}: The objective function (\ref{eq:UAV_location_equal_problem_approximate}a) is linear. As the left-hand-side (LHS) of (\ref{eq:UAV_location_equal_problem_approximate}b) is concave w.r.t. both $\bm x_k(t)$ and $B_{i,k}(t)$, it is a convex constraint. (\ref{eq:UAV_location_equal_problem_approximate}c) and (\ref{eq:UAV_location_equal_problem_approximate}d) are linear constraints. Besides, both (\ref{eq:horizontal_dist_constraint}) and (\ref{eq:waypoint_constr}) are convex quadratic constraints. Thus, (\ref{eq:UAV_location_equal_problem_approximate}) is now convex and can be efficiently mitigated by MOSEK. It is noteworthy that the lower-bounded approximation conducted in (\ref{eq:UAV_location_equal_problem_approximate}b)-(\ref{eq:UAV_location_equal_problem_approximate}d) shows that the feasible domain of (\ref{eq:UAV_location_equal_problem_approximate}) is smaller than that of (\ref{eq:subproblem_BPX}). Hence, the opposite optimal value of (\ref{eq:UAV_location_equal_problem_approximate}a) is the upper bound of that of (\ref{eq:subproblem_BPX}).

\subsubsection{UAV transmit power optimization}
Given the content delivery decision variable set ${\mathcal S}(t)$ and UAV trajectories ${\mathcal X}(t)$, it is still hard to mitigate (\ref{eq:subproblem_BPX}) owing to the non-convex objective function. Likewise, the SCA technique is explored to tackle the non-convexity. The following Proposition shows a method of optimizing UAV transmit power.
\begin{prop}\label{lemma:lemma_UAV_power}
{\rm By exploring an SCA technique, the UAV transmit power at time slot $t$ can be configured by mitigating the following convex optimization problem.
\begin{subequations}\label{eq:UAV_power_problem_approx}
\begin{alignat}{2}
& \mathop {\rm Maximize }\limits_{{\mathcal {P}}(t),{\{ \eta_{i}(t) \}}} {\mkern 1mu} \text{ } - V\rho \sum\nolimits_{j } {{p_j}(t)}  - \sum\nolimits_{j } {{{[{H_j}(t)]}^ + }{p_j}(t)}  + \nonumber \\
& \qquad \qquad {\sum\nolimits_{i} {\{ {{[{Q_{i}}(t)]}^ + } + {{[{Z_{i}}(t)]}^ + }\} {\eta _{i}}(t)}} \\
& {\rm s.t.} \text{ } \sum\nolimits_{j } {\left( {{s_{ij}}(t){{\hat R}_{ij}}(t) - {s_{ij}}(t)F_{ij}^{(r)}(t)} \right)}  - \sum\nolimits_{j } \left ( {s_{ij}}(t) \times \right. \nonumber \\
& \sum\nolimits_{k \in {\mathcal J}\backslash \{ j\} } {G_{ik,}^{(r)}(t)( {{p_k}(t) - p_k^{(r)}(t)} )}  ) \ge {\eta _{i}}(t),\forall i,t \\
& \quad \eta_i(t) \ge s_{ij}(t){{C_{i,f_i}^{\rm th}}(t )}, \forall i,j,t \\
& \quad {\rm (\ref{eq:original_problem}d), (\ref{eq:original_problem}g)}
\end{alignat}
\end{subequations}
where ${\hat R_{ij}}(t) = {\log _2}( {{\sigma ^2}W + \sum\nolimits_{k \in {\mathcal J}} {\frac{{{{p_k}(t)\theta _{ij}}}}{{{g^2} + ||{{\bm x}_k}(t) - {{\bm x}_{i}^{\rm u}(t)}|{|^2}}}} } )$, $F_{ij}^{(r)}(t) = {\log _2}( {{\sigma ^2}W + \sum\nolimits_{k \in {\mathcal J}\backslash \{ j\} } {p_k^{(r)}(t){h_{ik}}(t)} } )$, $G_{ik}^{(r)}(t) = \frac{{{h_{ik}}(t)}}{2^{F_{ij}^{(r)}(t)}\ln 2}$, and $p_k^{(r)}(t)$ is the given transmit power of UAV $k$ at the $r$-th iteration of SCA technique. (\ref{eq:UAV_power_problem_approx}c) is enforced due to the data rate requirement of enabling a user's desired QoE state.}
\end{prop}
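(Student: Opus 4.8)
The plan is to freeze the content-delivery variables $\mathcal{S}(t)$ and the trajectories $\mathcal{X}(t)$ so that (\ref{eq:subproblem_BPX}) collapses to an optimization over $\mathcal{P}(t)$ alone, and to rewrite the minimization as the equivalent maximization of $\sum_i \{[Q_i(t)]^+ + [Z_i(t)]^+\} u_i^{\rm dl}(t) - \sum_j \{V\rho + [H_j(t)]^+\} p_j(t)$. The sole source of non-convexity is the rate term $u_i^{\rm dl}(t)$ of (\ref{eq:achievable_data_rate}), so I would first expose its structure by writing each per-UAV Shannon term as a difference of two logarithms, namely $u_i^{\rm dl}(t) = \sum_j s_{ij}(t)(\hat{R}_{ij}(t) - F_{ij}(t))$, where $\hat{R}_{ij}(t)$ aggregates all received powers (signal plus interference plus noise) and $F_{ij}(t) = \log_2(\sigma^2 W + \sum_{k \neq j} p_k(t) h_{ik}(t))$ aggregates only interference plus noise. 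Both are logarithms of affine functions of $\mathcal{P}(t)$ and hence concave, so $u_i^{\rm dl}(t)$ is a difference of concave functions and the objective is non-concave. To isolate this term I would introduce one slack variable $\eta_i(t)$ per user, replacing $u_i^{\rm dl}(t)$ in the objective by $\eta_i(t)$ and adding the constraint $\eta_i(t) \le u_i^{\rm dl}(t)$; since the coefficients $[Q_i(t)]^+ + [Z_i(t)]^+$ are non-negative, $\eta_i(t)$ is pushed up to the rate at optimality, so the reformulation is equivalent and the objective is now linear.

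The core step, and the crux of the proof, is to build a concave minorant of $u_i^{\rm dl}(t)$ by retaining the concave $\hat{R}_{ij}(t)$ intact and linearizing only the interference logarithm $F_{ij}(t)$ about the current iterate $\bm{p}^{(r)}(t)$. Because a concave function lies below its tangent, $F_{ij}(t) \le F_{ij}^{(r)}(t) + \sum_{k \neq j} G_{ik}^{(r)}(t)(p_k(t) - p_k^{(r)}(t))$, where $F_{ij}^{(r)}(t)$ is the value and $G_{ik}^{(r)}(t) = h_{ik}(t)/(2^{F_{ij}^{(r)}(t)} \ln 2)$ is the partial derivative of $F_{ij}$ at $\bm{p}^{(r)}(t)$. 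Negating this inequality lower-bounds $-F_{ij}(t)$ by an affine term, so $\hat{R}_{ij}(t) - F_{ij}(t)$ is minorized by exactly the concave expression on the left-hand side of (\ref{eq:UAV_power_problem_approx}b). Replacing the exact constraint $\eta_i(t) \le u_i^{\rm dl}(t)$ by $\eta_i(t) \le$ this minorant yields (\ref{eq:UAV_power_problem_approx}b); the bound is tight at $\bm{p}^{(r)}(t)$, which is precisely what guarantees that the SCA iterates improve monotonically.

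Finally I would append the per-user rate requirement as (\ref{eq:UAV_power_problem_approx}c): since the minorant never exceeds $u_i^{\rm dl}(t)$ and $\eta_i(t)$ lower-bounds it, imposing $\eta_i(t) \ge s_{ij}(t) C_{i,f_i}^{\rm th}(t)$ forces $u_i^{\rm dl}(t) \ge s_{ij}(t) C_{i,f_i}^{\rm th}(t)$ and thus enforces the desired QoE state. To close the argument I would check convexity term by term: the objective is linear; in (\ref{eq:UAV_power_problem_approx}b) the right-hand side is concave (the concave $\hat{R}_{ij}$ plus affine terms) while $\eta_i(t)$ is affine, so the constraint defines a convex set; and (\ref{eq:UAV_power_problem_approx}c) together with (\ref{eq:original_problem}d) and (\ref{eq:original_problem}g) are linear, whence (\ref{eq:UAV_power_problem_approx}) is convex and solvable by MOSEK. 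I expect the main obstacle to be the sign bookkeeping in the SCA step: one must recognize that the interference logarithm $F_{ij}(t)$ is concave rather than convex, so it is that term whose tangent must be used to produce a valid inner (minorant) approximation; linearizing the wrong logarithm would give a majorant instead and destroy the monotonic-improvement guarantee.
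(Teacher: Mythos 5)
Your proposal is correct and follows essentially the same route as the paper's Appendix D: fix $\mathcal{S}(t)$ and $\mathcal{X}(t)$, introduce the slack variables $\eta_i(t)$ (equivalent because the queue coefficients are non-negative), write the rate as the difference of concave logarithms $\hat R_{ij}(t)-\overset{\scriptscriptstyle\smile}{R}_{ij}(t)$, and upper-bound the concave interference term by its first-order Taylor expansion at $\mathcal{P}^{(r)}(t)$ to obtain the concave minorant in (\ref{eq:UAV_power_problem_approx}b). Your closing remark on the sign bookkeeping is exactly the point the paper relies on, so no gap remains.
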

\begin{proof}
Please refer to Appendix D.
\end{proof}

{{Remark 4}}: 
The objective function (\ref{eq:UAV_power_problem_approx}a) is linear. As the LHS of (\ref{eq:UAV_power_problem_approx}b) is concave w.r.t. $p_k(t)$, it is a convex constraint. Then, we can conclude that (\ref{eq:UAV_power_problem_approx}) is convex that can be efficiently alleviated by MOSEK. Similarly, the utilization of the approximation results in that the feasible set of (\ref{eq:UAV_power_problem_approx}) is a subset of that of (\ref{eq:subproblem_BPX}). Therefore, the optimal opposite value of (\ref{eq:UAV_power_problem_approx}a) is the upper bound of that of (\ref{eq:subproblem_BPX}). 

Based on the above derivation, the main steps of the iterative optimization scheme of solving (\ref{eq:subproblem_BPX}) can be summarized in Algorithm \ref{alg:alg1}.
\begin{algorithm}
\caption{Iterative UAV content delivery, trajectory, and transmit power optimization}
\label{alg:alg1}
\begin{algorithmic}[1]
\STATE \textbf{Initialization:} Randomly initialize ${\mathcal X}^{(0)}(t)$ and ${\mathcal P}^{(0)}(t)$, let $r = 0$.
\REPEAT
\STATE Given ${{\mathcal X}^{(r)}(t), {\mathcal P}^{(r)}(t)}$, solve (\ref{eq:UE_association_problem}) to obtain the optimal solution ${{\mathcal S}^{(r+1)}(t)}$.
\STATE Given ${{\mathcal S}^{(r+1)}(t), {\mathcal X}^{(r)}(t), {\mathcal P}^{(r)}(t)}$, solve (\ref{eq:UAV_location_equal_problem_approximate}) to generate the optimal solution ${{\mathcal X}^{(r+1)}(t)}$.
\STATE Given ${{\mathcal S}^{(r+1)}(t), {\mathcal X}^{(r+1)}(t), {\mathcal P}^{(r)}(t)}$, solve (\ref{eq:UAV_power_problem_approx}) to obtain the optimal solution ${{\mathcal P}^{(r+1)}(t)}$.
\STATE Update $r = r + 1$.
\UNTIL {Convergence or $r = r_{\max}$.}
\end{algorithmic}
\end{algorithm}

Finally, we can summarize the main steps of Lyapunov-based optimization framework of solving the original problem (\ref{eq:original_problem}) in Algorithm \ref{alg:alg2}.
\begin{algorithm}
\caption{Fresh, Fair, and Energy-Efficient Content Provision (F$^2$E$^2$CP)}
\label{alg:alg2}
\begin{algorithmic}[1]
\STATE \textbf{Initialization:} Initialize $Q_{i}(1) \in [0, 1]$, $Z_{i}(1) \in [0, 1]$, $H_j(1) \in [0, 1]$ for all user $i \in {\mathcal I}$, UAV $j \in {\mathcal J}$.
\FOR {each time slot $t = 1, 2, \ldots, T$}
\STATE Observe the virtual queues $Q_{i}(t)$, $Z_{i}(t)$, and $H_j(t)$.
\STATE Compute $\gamma_{i}(t)$ using (\ref{eq:compute_gamma}) for all user $i$.
\STATE Compute $b_{j,f_i}(t)$ using (\ref{eq:greedy_bk}) for all UAV $j$.
\STATE Obtain the content delivery decision set ${\mathcal S}(t)$, UAV trajectories ${\mathcal X}(t)$, and UAV transmit power ${\mathcal P}(t)$ using Algorithm \ref{alg:alg1}.
\STATE Calculate $p_j^{tot}(t)$ for all UAV $j$ using (\ref{eq:power_total_at_t}).
\STATE Calculate $u_{i}^{\rm dl}(t)$ for all user $i$ using (\ref{eq:achievable_data_rate}).
\STATE Update $Q_{i}(t+1)$, $Z_{i}(t+1)$, and $H_j(t+1)$ for all user $i$ and UAV $j$ using (\ref{eq:Queue_ui}), (\ref{eq:Queue_Z}), and (\ref{eq:Queue_H}), respectively.
\ENDFOR
\end{algorithmic}
\end{algorithm}

\section{Performance analysis}
In this section, the convergence performance of the proposed algorithms is analyzed. Observing that the PAoI of a data packet varies with many time-varying factors (e.g., inter-arrival time of the packet and packet queueing delay), the analysis on the PAoI of a data packet in the average sense is conducted.

\subsection{Convergence analysis}
The following Lemma shows that the convergency and validity of Algorithms 1 and 2 can be guaranteed.
\begin{lemma}\label{lemma:lemma_convergent}
{\rm Algorithm \ref{alg:alg1} is convergent, and Algorithm 2 can make all virtual queues mean-rate stable.}
\end{lemma}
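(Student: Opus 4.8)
The statement bundles two independent claims, so I would prove them separately, beginning with the convergence of Algorithm~\ref{alg:alg1}. The plan is to view Algorithm~\ref{alg:alg1} as a block-coordinate maximization of the sign-flipped objective of (\ref{eq:subproblem_BPX}), which I denote $\Phi(\mathcal S(t),\mathcal P(t),\mathcal X(t))$, in which each block is solved either exactly or through a tight surrogate. First I would record the two defining properties of the surrogates built in Propositions~\ref{lemma:lemma_uav_location} and~\ref{lemma:lemma_UAV_power}: each surrogate is a global \emph{lower bound} of $\Phi$ in its optimized block, and it is \emph{tight} at the current iterate (matching value and gradient), since it comes from first-order Taylor expansion of the offending terms about ${\bm x}_k^{(r)}(t)$ and $p_k^{(r)}(t)$. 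For the trajectory update I would then chain $\Phi({\mathcal X}^{(r+1)})\ge\tilde\Phi_X({\mathcal X}^{(r+1)})\ge\tilde\Phi_X({\mathcal X}^{(r)})=\Phi({\mathcal X}^{(r)})$, where $\tilde\Phi_X$ is the surrogate of (\ref{eq:UAV_location_equal_problem_approximate}) and the middle inequality holds because ${\mathcal X}^{(r+1)}$ maximizes it; the identical argument applies to the power update via (\ref{eq:UAV_power_problem_approx}), and the delivery update (\ref{eq:UE_association_problem}) is solved to global optimality as an integer linear program, so it cannot decrease $\Phi$ either.

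Consequently the objective values produced after each inner step form a monotonically non-decreasing sequence. I would then show boundedness from above: the transmit powers lie in $[p_j^{\min},\hat p_j-p_j^c]$ and the gains $h_{ij}(t)$ are bounded, so every $u_i^{\rm dl}(t)$ and hence $\Phi$ is bounded. A monotone, bounded sequence converges, which proves convergence of Algorithm~\ref{alg:alg1}. The hard part here is verifying rigorously that both surrogates really are global lower bounds that are tight at the expansion point, i.e.\ confirming that the inequalities set up in Appendices~C and D hold at every feasible iterate with matching first-order behaviour, because the whole monotonicity chain collapses if tightness fails at ${\mathcal X}^{(r)}(t)$ or $p_k^{(r)}(t)$.

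For the second claim the plan is a standard drift argument built on Lemma~\ref{lemma:1}. First I note the penalty $g(t)-\rho\sum_j p_j^{tot}(t)$ is bounded in absolute value by a constant $G$, since $0\le\gamma_i(t)\le u_{\rm dl}^{\max}$ forces $0\le g(t)\le N\log_2(1+u_{\rm dl}^{\max})$ and $0\le\sum_j p_j^{tot}(t)\le\sum_j\hat p_j$. Assuming (\ref{eq:Jensen_problem}) is feasible, there exists a stationary randomized policy whose per-slot decisions satisfy the three time-average constraints (\ref{eq:Jensen_problem}b)--(\ref{eq:Jensen_problem}d) in expectation. Because each tier of the framework greedily minimizes the right-hand side of (\ref{eq:upper_bound}) over all admissible decisions, the drift-plus-penalty it achieves is no larger than the bound evaluated at this comparison policy; since that policy is independent of the queue states and meets the constraints, every queue-weighted term on the right of (\ref{eq:upper_bound}) is non-positive in expectation. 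This yields $\mathbb E\{\Delta(t)\}\le B+VG+V\rho\sum_j p_j^c=:B'$, a constant independent of $t$ and of the backlogs.

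Summing $\mathbb E\{L(t+1)\}-\mathbb E\{L(t)\}\le B'$ over $t=1,\dots,T$ and telescoping gives $\mathbb E\{L(T+1)\}\le L(1)+B'T=O(T)$. Since $L(t)\ge\tfrac12([Q_i(t)]^+)^2$, I have $[Q_i(t)]^+\le\sqrt{2L(t)}$, so Jensen's inequality yields $\mathbb E\{[Q_i(t)]^+\}/t\le\sqrt{2\,\mathbb E\{L(t)\}}/t=O(1/\sqrt t)\to 0$, and the same bound holds verbatim for $Z_i(t)$ and $H_j(t)$. This establishes (\ref{eq:Queue_EQ}), (\ref{eq:Queue_EZ}) and (\ref{eq:Queue_EH}), i.e.\ all virtual queues are mean-rate stable. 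I expect the delicate point to be the comparison step, namely exhibiting (or explicitly assuming) a feasible stationary policy and arguing that minimizing the \emph{upper bound} in (\ref{eq:upper_bound}) rather than the true drift-plus-penalty still dominates that policy, which is exactly what turns the per-slot minimization into the constant drift bound $B'$.
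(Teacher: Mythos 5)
Your proof of the first claim takes essentially the same route as the paper: Appendix~E argues convergence of Algorithm~\ref{alg:alg1} by chaining the per-block inequalities (the integer linear program is solved exactly, and each SCA surrogate is optimized over a restricted feasible set containing the expansion point) to get a monotone sequence of objective values $\Gamma({\mathcal S}^{(r)},{\mathcal X}^{(r)},{\mathcal P}^{(r)})$, and then invokes boundedness. Your version is actually tighter than the paper's, because you make explicit the two properties the chain silently relies on --- that each surrogate is a global lower bound of the true objective in its block \emph{and} is tight (in value) at the current iterate ${\mathcal X}^{(r)}(t)$, $p_k^{(r)}(t)$ --- without which the equality $\tilde\Phi_X({\mathcal X}^{(r)})=\Phi({\mathcal X}^{(r)})$ and hence the whole monotonicity argument would fail. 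For the second claim the paper's proof is a one-sentence assertion: it cites Lemma~\ref{lemma:1} for the per-slot upper bound on the drift-plus-penalty and concludes directly that the queues are mean-rate stable. You supply the standard Neely machinery that this assertion implicitly presupposes (bounded penalty, a feasible stationary comparison policy, telescoping to $\mathbb{E}\{L(T)\}=O(T)$, then $\mathbb{E}\{[Q_i(t)]^+\}/t=O(1/\sqrt{t})$ via Jensen), so your write-up is a strictly more complete version of the same argument rather than a different one. The one caveat you flag --- that the framework minimizes the right-hand side of (\ref{eq:upper_bound}) only approximately, since the DPT$^2$ tier is handled by SCA rather than solved to global optimality, so the comparison against the stationary policy needs either an additional assumption or a bounded-suboptimality term folded into the constant $B'$ --- is a genuine gap, but it is a gap in the paper's proof as well, and your proposal at least names it.
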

\begin{proof}
Please refer to Appendix E.
\end{proof}


\subsection{Analysis of expected PAoI}
Recall that many factors such as the inter-arrival time of a packet, packet queueing delay in the preprocessing queue, packet preprocessing time, and the edge arrival duration will contribute to the PAoI of a packet. We next analyze these factors in detail.

\subsubsection{Packet queueing delay analysis}
The analysis on the packet queueing delay requires the study of packet queueing behavior.
To study the packet queueing behavior, the queue evolution process that involves the arrival, accumulation and departure of packets should be investigated.

During time $q$, once received the content requests of users during time $q$, the content server will generate and send out packets required by the users.
To facilitate the analysis, a Poisson distribution with intensity (or the average number of ne{\underline{\textbf w}} {packets}) {$\vartheta_{w}(q)$} is explored to model the random and independent packet arrivals in the BS.
As mentioned in the system model, once arrival, new packets will not be processed immediately and will enter the preprocessing queue to wait to be preprocessed.
If the packet preprocessing rate, which is determined by the CPU computing speed \cite{xu2019optimizing}, is slower than the packet arrival rate, some packets will be accumulated in the preprocessing queue. After being preprocessed, packets will depart from the preprocessing queue. In this regard, the packet departure rate is equal to the packet preprocessing rate.
Denote by $N_{a}(q)$ the {\underline {\textbf a}}ccumulated number of packets in the preprocessing queue at time $q$.
The value of $N_{a}(q)$ is simultaneously determined by the following three factors: a) the accumulated number of packets; b) the number of new arrivals during $q - 1$, which will be counted at at time $q$; c) the packet departure rate;
thus, we can present the evolution model of $N_{a}(q)$ as follows
\begin{equation}\label{eq:queue_evolution}
{N_{a}^q} = \left\{ {\begin{array}{*{20}{l}}
{0, q = 1}\\
{{{[{N_{w}^{q-1}} - {n_c}]}^ + }, q = 2}\\
{[{N_{a}^{q-1}} + {N_{w}^{q-1}} - }
{{n_c}{]^ + }, q \ge 3}
\end{array}} \right.
\end{equation}
where we write $x^q$ instead of $x(q)$ to lighten the notation. 
$N_{w}^{q-1}$ is the number of ne{\underline{\textbf w}} arrivals in time $q$-$1$,
the subtraction operation of ${n_c}$ is performed because ${n_c} = \frac{r}{\kappa l}$ packets can be preprocessed during each time interval, $l$ is the packet size (in bits), $r$ is the CPU computing speed of the preprocessor with the units CPU cycles per second, and $\kappa$ is a scaling parameter depending on the specific operation conducted on the packet with the units being CPU cycles per bit \cite{xu2019optimizing}.
As new arrivals in the $1$-st time interval will be counted in the $2$-nd time interval and there are no accumulated packets in the $1$-st time interval, we have $N_a^1 = 0$.

Based on the above evolution model, we have the following Lemma that derives the closed-form expressions of the accumulated number of packets in the preprocessed queue and the packet queueing delay of a newly arrival packet at time $q$.
\begin{lemma}\label{lem:lemma_accumulated}
{\rm The accumulated number of packets in the queue can be approximated as Poisson distribution. As such, the average number of accumulated packets in the queue at time $q > 1$ can be derived as
\begin{equation}\label{eq:accu_packets}
\vartheta_{a}^q = \left[\vartheta_w^{q-1} + \vartheta_a^{q-1} - n_c (1 - e^{-\vartheta_w^{q-1}-\vartheta_a^{q-1}})\right]^+.
\end{equation}
Besides, the average packet queueing delay of a newly arrival packet $m$ sent to user $i$ at time $q > 1$ is $\vartheta_{a}^q Y_{i,m}^{\rm S}$.
}
\end{lemma}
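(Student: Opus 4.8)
The plan is to prove the recursion by induction on the time index $q$, combining the closure of the Poisson family under independent summation with a batch-service approximation for the departures. For the base, the modeling assumption supplies that the number of fresh arrivals $N_w^{q-1}$ is Poisson with mean $\vartheta_w^{q-1}$, while $N_a^1 = 0$ is degenerate and may be regarded as Poisson with mean zero; this lets the induction start at $q = 2$. The inductive hypothesis is that $N_a^{q-1}$ is (approximately) Poisson with mean $\vartheta_a^{q-1}$ and is independent of $N_w^{q-1}$, the latter being legitimate because $N_a^{q-1}$ depends only on arrivals strictly before slot $q-1$.

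The first key step is aggregation: since a sum of independent Poisson variables is again Poisson, the pre-service backlog $S = N_a^{q-1} + N_w^{q-1}$ appearing in the evolution model (\ref{eq:queue_evolution}) is Poisson with mean $\lambda = \vartheta_a^{q-1} + \vartheta_w^{q-1}$. The second step models the departures: the preprocessor removes up to $n_c$ packets per slot, and under the batch approximation it clears a full batch of $n_c$ packets exactly when the queue is non-empty, so the expected departures equal $n_c \, {\rm Pr}(S \ge 1) = n_c(1 - e^{-\lambda})$, where the non-emptiness probability follows from the Poisson form of $S$. Balancing the mean inflow $\lambda$ against these mean departures and truncating the result at zero, since the mean backlog cannot be negative, yields $\vartheta_a^q = [\lambda - n_c(1 - e^{-\lambda})]^+$, which is the claimed expression after substituting $\lambda = \vartheta_a^{q-1} + \vartheta_w^{q-1}$. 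Re-imposing the Poisson approximation on $N_a^q$ closes the induction.

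For the delay claim, I would invoke the first-come-first-served discipline of the preprocessing queue: a freshly arrived packet $m$ destined to user $i$ cannot be served until all packets already accumulated ahead of it have been preprocessed. On average $\vartheta_a^q$ such packets are present at time $q$, and each incurs a preprocessing time $Y_{i,m}^{\rm S}$, so accumulating these service times gives the average queueing delay $\vartheta_a^q Y_{i,m}^{\rm S}$.

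The hard part will be justifying that $N_a^q$ remains Poisson after the nonlinear shift-and-truncate map $S \mapsto [S - n_c]^+$, since the shifted, truncated variable is not exactly Poisson; this is therefore a controlled modeling approximation rather than an identity. Closely related is the replacement of the exact expected service ${\mathbb E}[\min(S, n_c)]$ by the batch expression $n_c(1 - e^{-\lambda})$, together with the mean-balance step that interchanges expectation and the outer truncation; both are accurate precisely when the queue is seldom drained to empty within a slot. I would state these approximations explicitly, after which the closed-form recursion and the delay formula follow directly.
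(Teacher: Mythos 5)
Your proposal is correct and follows essentially the same route as the paper's proof: the paper likewise invokes the Poisson (thinning) approximation for the accumulated count, treats the pre-service backlog as the sum of two independent Poisson variables, counts expected departures as $n_c\,{\rm Pr}(S\ge 1)=n_c(1-e^{-\lambda})$, truncates the mean at zero, and uses the same FCFS argument for the queueing delay. The only difference is presentational — the paper first writes out the exact PMF recursions for $q=2,3$ before resorting to the approximation, whereas you organize the same steps as an induction and are more explicit about which steps are modeling approximations.
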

\begin{proof}
Please refer to Appendix F.
\end{proof}

\subsubsection{Analysis of the expected edge arrival duration}
We know that a content file including a batch of packets will be delivered from the BS or a UAV to a destination user in each time slot $t$.
Then, we attempt to tackle the following issue: \emph{What is the expected edge arrival duration for transmitting a packet?} The following Proposition gives the answer.
\begin{prop}\label{prop:prop_low_latency}
{\rm For any user $i$ and packet $m$, the expected edge arrival duration of packet $m$ destined to user $i$ can be given by
\begin{equation}\label{eq:exp_edge_delay}
{\mathbb E}[Y_{i,m}^{\rm A}] = \frac{(l+L)N\Delta_ t}{2JL}
\end{equation}
where ${\mathbb E}[\cdot]$ represents an expectation operation.
}
\end{prop}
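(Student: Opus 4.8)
The plan is to compute $\mathbb{E}[Y_{i,m}^{\rm A}]$ by viewing a content file as an ordered batch of packets, finding how long it takes to push one packet through the edge link under the fair time-sharing of the $J$ UAVs among the $N$ users, and then averaging over where packet $m$ sits inside its file.

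First I would count packets: since a file has size $L$ and a packet has size $l$, each file comprises $M = L/l$ packets. Next I would pin down the delivery rate seen by a single user. In each slot the $J$ UAVs jointly deliver $J$ files, i.e. $JL$ bits, and under the fairness rationale behind the constant $\varphi = J/N$ each user is granted a fraction $J/N$ of these opportunities; equivalently, each user is served by a UAV with probability $\varphi=J/N$ per slot, so its effective delivery rate is $\frac{JL}{N}$ bits per slot, or $\frac{JL}{N\Delta_t}$ bits per unit time. Consequently the time to deliver one packet of $l$ bits to the user is
\begin{equation*}
\tau = \frac{l}{JL/(N\Delta_t)} = \frac{Nl\Delta_t}{JL}.
\end{equation*}

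Then I would express the edge arrival duration of the $k$-th packet of a file as the time needed to clear the $k$ packets up to and including it, namely $k\tau$ for $k=1,\ldots,M$. Treating packet $m$ as equally likely to occupy any of the $M$ positions (so its position is uniform on $\{1,\ldots,M\}$), the expected edge arrival duration is the arithmetic mean
\begin{equation*}
\mathbb{E}[Y_{i,m}^{\rm A}] = \frac{1}{M}\sum_{k=1}^{M} k\tau = \frac{\tau}{M}\cdot\frac{M(M+1)}{2} = \frac{(M+1)\tau}{2}.
\end{equation*}
Substituting $\tau = \frac{Nl\Delta_t}{JL}$ and $M+1 = (L+l)/l$ then collapses this to $\frac{(l+L)N\Delta_t}{2JL}$, the claimed value.

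The step I expect to be the main obstacle is not the summation but justifying the two modeling choices cleanly: (i) that fair sharing of the $J$ UAVs among the $N$ users yields exactly the per-user rate $\frac{JL}{N\Delta_t}$ (equivalently, a per-file delivery span of $\frac{N}{J}\Delta_t$), which must be tied back to the scheduling constraint (\ref{eq:slice_request_indicator}) and the reasoning for $\varphi=J/N$ in (\ref{eq:original_problem}); and (ii) that packet $m$ arrives at a steady rate and occupies a uniformly random position in its file, which is what legitimizes replacing the expectation by the average position $\frac{M+1}{2}$. Once these are in place, $\sum_{k=1}^{M}k = M(M+1)/2$ and the substitution $M=L/l$ finish the derivation.
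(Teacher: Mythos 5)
Your proposal is correct and follows essentially the same route as the paper: the paper likewise uses the fair-access probability $J/N$ to get $\mathbb{E}[Y_{i,1}^{\rm A}]=\frac{Nl\Delta_t}{JL}$ for the first packet and $\frac{N\Delta_t}{J}$ for the last of the $L/l$ sequentially delivered packets, then averages the two endpoints, which is exactly your uniform-position average $\frac{(M+1)\tau}{2}$ under the same linear-in-position model. The modeling assumptions you flag as the main obstacle are also simply asserted (not further justified) in the paper's own proof.
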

\begin{proof}
Please refer to Appendix G.
\end{proof}

\subsubsection{Closed-form expression of expected PAoI}
Based on the obtained results in Lemma \ref{lem:lemma_accumulated} and Proposition \ref{prop:prop_low_latency}, we can derive the closed-form expression of the expected PAoI of a packet in the following Lemma.
\begin{lemma}\label{lemma:EPAoI}
{\rm The expected PAoI of packet $m$, which is generated at time $q$ and sent to user $i$, $\forall i$, can be expressed as
\begin{equation}\label{eq:average_PAoI}
{\mathbb E}[{\Delta _{i,m}}({{\hat q}; q})] = \left\{ {\begin{array}{*{20}{l}}
{ \frac{1}{n_c} + \frac{(l+L)N\Delta_ t}{2JL}}&{{q} = 1,m = 1}\\
\frac{1}{{{\lambda _i^q}}} + \frac{1}{n_c} +  \frac{\vartheta_a^q}{n_c} + \frac{(l+L)N\Delta_ t}{2JL} &{{q} > 1,m > 1}
\end{array}} \right.
\end{equation}
where $\lambda_i^q = \vartheta_w^q {\rm Pr}_i$ is the intensity of new arrival packets destined to user $i$ at time $q$.
}
\end{lemma}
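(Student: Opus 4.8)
The plan is to take expectations directly in the two-branch PAoI recursion (\ref{eq:AoI_model}), decomposing each branch into its constituent delays and then invoking the two results already established---Lemma~\ref{lem:lemma_accumulated} for the queueing delay and Proposition~\ref{prop:prop_low_latency} for the edge arrival duration---together with an elementary Poisson inter-arrival computation for the remaining term. Concretely, I would first substitute $Y_{i,m}=Y_{i,m}^{\rm Q}+Y_{i,m}^{\rm S}+Y_{i,m}^{\rm A}$ into (\ref{eq:AoI_model}), so that for $m>1$ the PAoI reads $\Delta_{i,m}(\hat q;q)=X_{i,m}+Y_{i,m}^{\rm Q}+Y_{i,m}^{\rm S}+Y_{i,m}^{\rm A}$, and by linearity of expectation it suffices to evaluate the four expectations separately.

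For the two service-related terms, I would argue that the preprocessing time of a single packet is deterministic and equal to the reciprocal of the preprocessing rate: since $n_c$ packets are preprocessed per time interval, $Y_{i,m}^{\rm S}=1/n_c$, contributing the summand $1/n_c$. The expected queueing delay then follows verbatim from Lemma~\ref{lem:lemma_accumulated}, which gives $\mathbb E[Y_{i,m}^{\rm Q}]=\vartheta_a^q Y_{i,m}^{\rm S}=\vartheta_a^q/n_c$, and the edge arrival term is supplied by Proposition~\ref{prop:prop_low_latency}, namely $\mathbb E[Y_{i,m}^{\rm A}]=(l+L)N\Delta_t/(2JL)$. These three contributions already assemble the last three summands of the $m>1$ branch of (\ref{eq:average_PAoI}).

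The one genuinely new computation is the expected inter-arrival time $\mathbb E[X_{i,m}]$. Here I would use the thinning property of the Poisson packet-arrival process: the overall arrivals at the BS during time $q$ form a Poisson stream of intensity $\vartheta_w^q$, and each packet is destined to user $i$ independently with probability ${\rm Pr}_i$, so the substream of packets addressed to user $i$ is itself Poisson with intensity $\lambda_i^q=\vartheta_w^q{\rm Pr}_i$. Hence the gap $X_{i,m}=q_{i,m}-q_{i,m-1}$ between consecutive arrivals destined to user $i$ is exponentially distributed with mean $\mathbb E[X_{i,m}]=1/\lambda_i^q$, which yields the first summand. Summing the four expectations gives the $q>1,\,m>1$ line of (\ref{eq:average_PAoI}).

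Finally, the base case $q=1,\,m=1$ would be handled by noting that the first packet has no predecessor, so the inter-arrival term is absent, and that at $q=1$ there are no accumulated packets ($N_a^1=0$, i.e.\ $\vartheta_a^1=0$), so the queueing delay vanishes; thus $\mathbb E[\Delta_{i,1}(\hat q;q)]=\mathbb E[Y_{i,1}^{\rm S}]+\mathbb E[Y_{i,1}^{\rm A}]=1/n_c+(l+L)N\Delta_t/(2JL)$, matching the first branch of (\ref{eq:average_PAoI}). I expect the main subtlety to lie in the Poisson-thinning justification for $\mathbb E[X_{i,m}]=1/\lambda_i^q$ and, relatedly, in confirming that $X_{i,m}$, $Y_{i,m}^{\rm Q}$, $Y_{i,m}^{\rm S}$, and $Y_{i,m}^{\rm A}$ may be combined additively in expectation despite sharing the same traffic realization---this is legitimate because only the individual expectations (not the joint law) enter the result, and linearity of expectation holds without any independence assumption.
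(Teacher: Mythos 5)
Your proposal is correct and follows essentially the same route as the paper's proof: the identical additive decomposition $\mathbb{E}[X_{i,m}]+\mathbb{E}[Y_{i,m}^{\rm Q}]+\mathbb{E}[Y_{i,m}^{\rm S}]+\mathbb{E}[Y_{i,m}^{\rm A}]$, the same invocation of Lemma~\ref{lem:lemma_accumulated} and Proposition~\ref{prop:prop_low_latency}, the same Poisson-thinning argument giving $\mathbb{E}[X_{i,m}]=1/\lambda_i^q$, and the same treatment of the base case via $N_a^1=0$. The only material the paper adds is a side discussion justifying that the first packet for user $i$ arrives in the first interval with high probability, which does not change the computation.
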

\begin{proof}
Please refer to Appendix H.
\end{proof}

\section{Simulation Results}

\subsection{Comparison algorithms and parameter setting}
To verify the effectiveness of the proposed Algorithm \ref{alg:alg2}, we compare it with five benchmark algorithms:
1) \emph{Static UAV caching with power control (SUWPC) algorithm:} It randomly generates horizontal locations for $N$ hovering UAVs with the same deployment altitude $g$. The proposed UAV transmit power optimization method is adopted in the algorithm. Besides, each UAV randomly delivers a cached or forwarded content file to terrestrial users within its LoS coverage region at each time slot after receiving users' content requests;
2) \emph{Static UAV caching without power control (SUPC) algorithm:} The unique difference between SUPC and SUWPC lies in that SUPC adopts the maximum UAV transmit power to deliver content files to terrestrial users;
3) \emph{Circular trajectory-based joint optimization (CTJO) algorithm:} Each UAV flies in a circular trajectory with a speed of 10 m/s. At the beginning of the simulation, UAVs are deployed in line with an equal interval. The distance between two adjacent UAVs is $1/4J$ km. The horizontal locations of the first and the last UAVs are $(1/4+1/\left\lfloor 8J \right\rfloor,1/4)$ km and $(1/2-1/\left\lfloor 8J \right\rfloor,1/4)$ km, respectively, and turning radiuses of them are $1/\left\lfloor {8N} \right\rfloor$ km and $1/4-1/\left\lfloor 8J \right\rfloor $ km. In CTJO, all UAVs are deployed at the same altitude $g$.
Besides, the proposed UAV caching and UAV transmit power optimization methods are utilized in this algorithm.
4) \emph{Circular trajectory-based UAV caching (CTUC) algorithm:} The unique difference between CTUC and CTJO is that CTUC adopts the maximum UAV transmit power to deliver content files to terrestrial users.
5) \emph{Circular trajectory without UAV caching (CTWUC) algorithm:} The unique difference between CTWUC and CTJO is that UAVs do not cache content files and will randomly forward content files from the BS to users in CTWUC.

We consider an urban area of size $0.5 \times 0.5$ km$^2$ with highrise buildings, where terrestrial users are randomly distributed and moving. This scenario corresponds to the AtG channel environment of the worst case\cite{Al2014Optimal}. The radio frequency propagation parameters are: $\eta_{LoS} = 2.3$, carrier frequency $f_c = 4.9$ GHz, speed of light $c = 3.0 \times 10^8$ m/s, $\theta^{\rm th} = 70^{\circ}$, $\sigma^2 = -174$ dBm/Hz, total bandwidth $W = 100$ MHz, $ \hat D = 24$ s, $D^{\rm ul} = 5$ s, $L = 150$ Mbits, $D^{\rm th} = 0.6$, $\Delta_t = \hat D - D^{\rm th}(\hat D - L/(Wu_{\rm dl}^{\max}))$ s \cite{Al2014Optimal,Chen2017Caching}.
The default values of parameters related to UAVs and terrestrial users are: ${\tilde p_j} =450$ mW, $ \hat p_j = 500$ mW, $p_j^c = 20$ mW, $p_j^{\min}= 1$ mW, $\forall j$, $p_{\max} = 480$ mW, $e_{\max} = 250$ m, $g = 200$ m, $d_{\min} = 50$ m, $J = 4$, and $N = 50$.
Expected PAoI analysis related parameters are: $ l = 5000$ Bytes, $\gamma/\kappa  = 1$ Mbits/s, $\vartheta_w^q \sim {\rm Pois}(\frac{\gamma}{\tau l })$, and ${\rm Pr}_i = 1/N$, $\forall i, q$ \cite{xu2019optimizing}.
Other system parameters are listed as follow: $r_{\max} = 200$, $V = 0.01$, $T = 500$, and $\rho = 0.1$ \cite{yang2020repeatedly}.

\subsection{Performance evaluation}
In this subsection, we conduct simulations to comprehensively understand the availability of the developed F$^2$E$^2$CP algorithm.
Besides, we repeat all comparison algorithms for fifteen times to eliminate the influence of some randomly initialized parameters such as UAV transmit power and locations on the performance evaluation. The final simulation result is the average of the obtained fifteen results.

First, we design a simulation to verify the effectiveness of the F$^2$E$^2$CP algorithm under the default parameter setting.
Particularly, we plot the tendency of the virtual queue stability values, defined as ${S_Q} = \mathop {\max }\nolimits_i \  {[{Q_i}(t)]^ + }/t$, ${S_Z} = \mathop {\max }\nolimits_i \  {[{Z_i}(t)]^ + }/t$, and ${S_H} = \mathop {\max }\nolimits_{j} \  {[{H_j}(t)]^ + }/t$ in Fig. \ref{fig_queue_stability}. Besides, the two-dimensional (2D) trajectories of four UAVs in the first 150 time slots and their final 2D positions are plotted in Fig. \ref{fig_uav_track}.
\begin{figure}[!t]
\centering
\begin{minipage}[t]{0.24\textwidth}
\centering
\includegraphics[width=1.7 in]{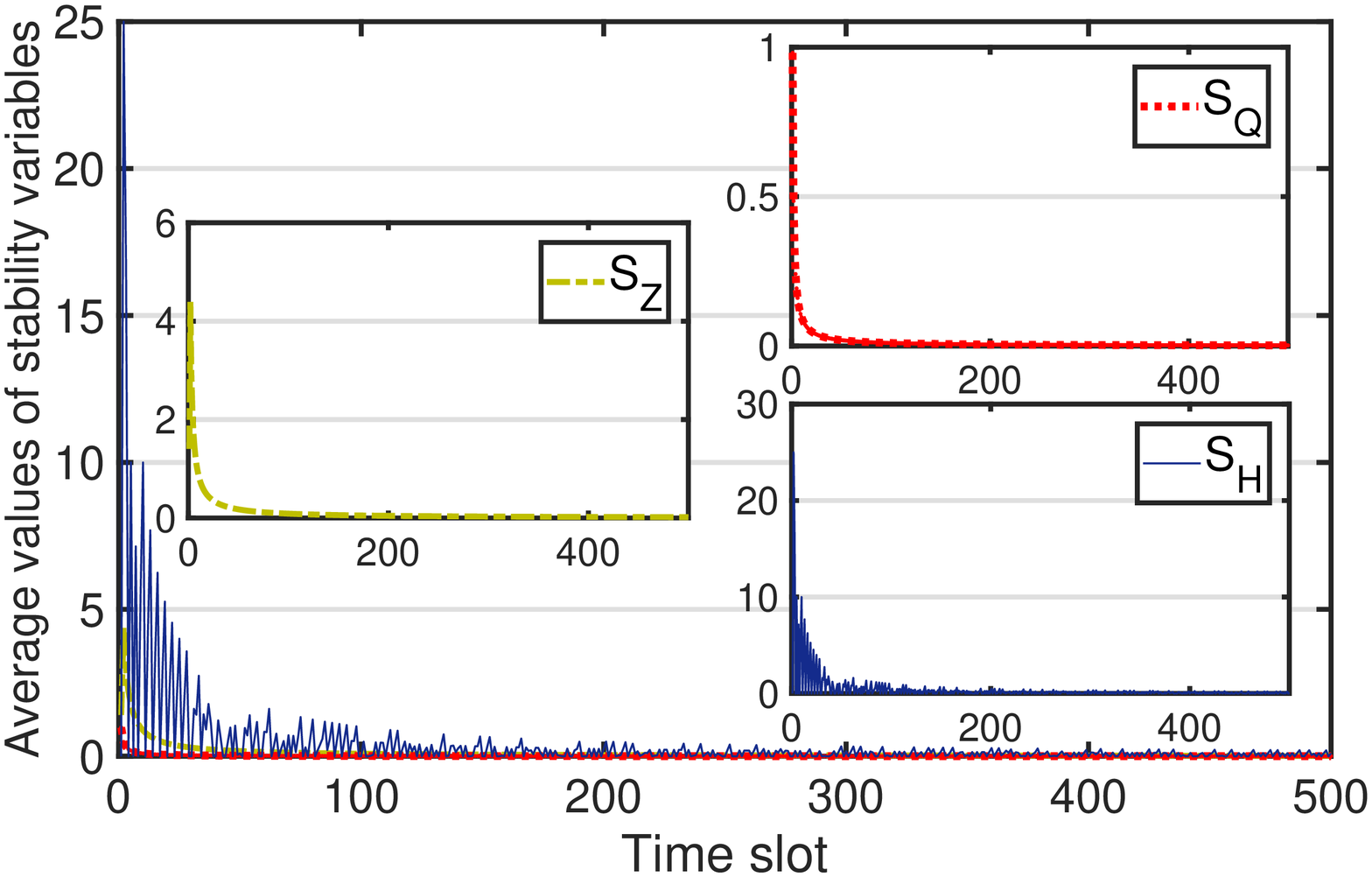}
\caption{Trend of virtual queue stability values vs. time slot.}
\label{fig_queue_stability}
\end{minipage}
\begin{minipage}[t]{0.24\textwidth}
\centering
\includegraphics[width=1.7 in]{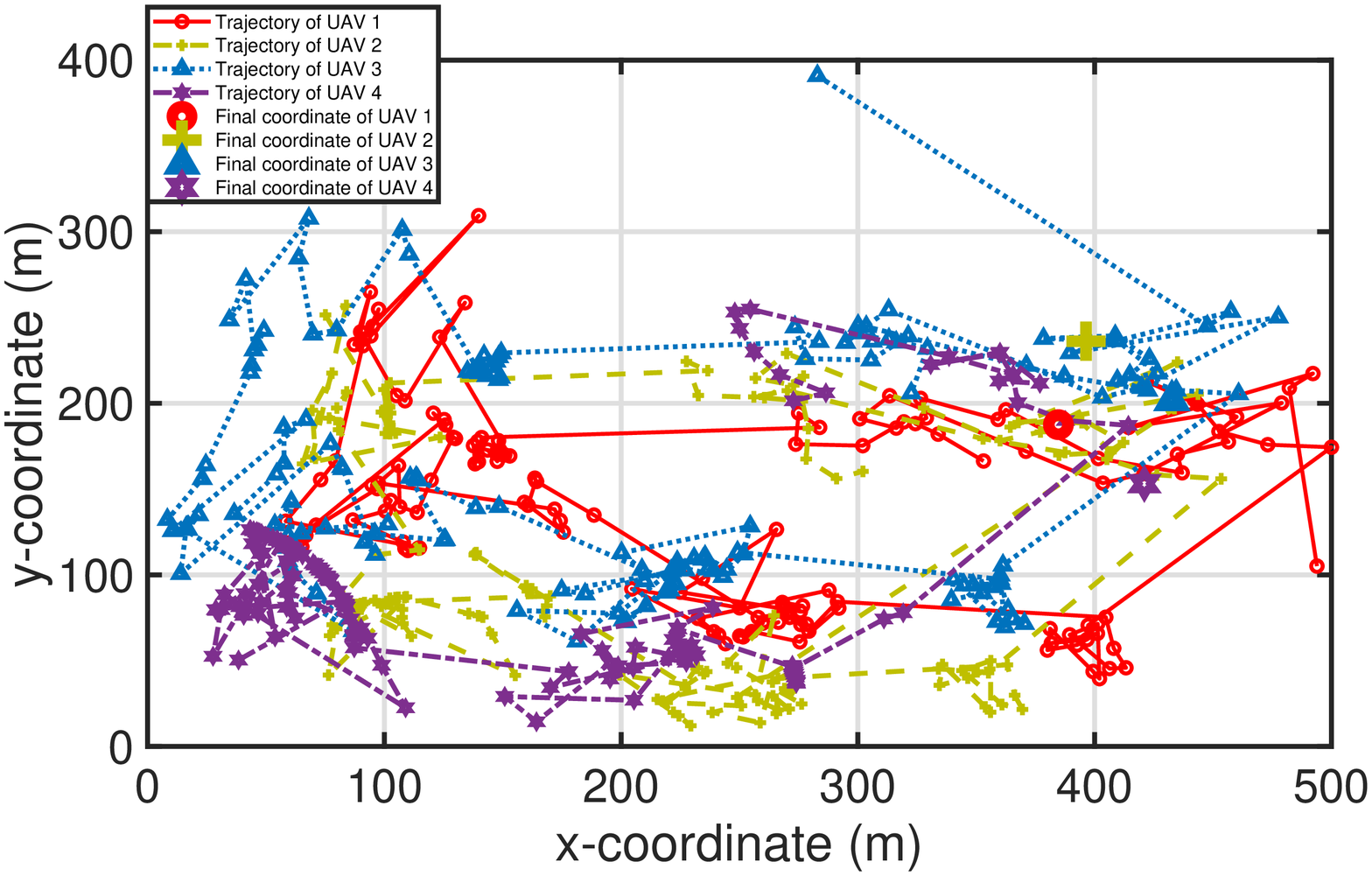}
\caption{Trajectories of four UAVs projected in a 2D space.}
\label{fig_uav_track}
\end{minipage}
\end{figure}

The following observations can be achieved from these figures: 1) the obtained queue stability values are bounded during the whole content provision period;
2) the obtained queue stability values tend to zero with an increasing time slot; as a result, all introduced virtual queues are mean-rate stable according to (\ref{eq:Queue_EQ}), i.e., all time average constraints in (\ref{eq:Jensen_problem}) can be imposed. This result verifies the effectiveness of the proposed F$^2$E$^2$CP algorithm;
3) the movement constraints of each UAV can be satisfied at each time slot.

Next, we conduct a simulation to testify the performance of F$^2$E$^2$CP by comparing it with other five benchmark algorithms.
To quantify the algorithm performance, the following key performance indicators are introduced:
the network profit that is defined as $\phi ( \{{\bar u_{i}^{\rm dl}}(T)\} )$, the total UAV transmit power consumption that is calculated by $\sum\nolimits_{j } {\bar p_j^{tot}(T)}$,
the energy efficiency that is computed by (\ref{eq:original_problem}a), the Jain's fairness index, defined as ${{{\left( \sum\nolimits_{i}{{{{\bar{u}}}_{i}^{\rm dl}}} \right)}^{2}}}/{N\sum\nolimits_{i}{(\bar{u}_{i}^{\rm dl})^{2}}}$ with ${{\bar u}_{i}^{\rm dl}} = \frac{1}{T}\sum\nolimits_{t = 1}^T {{u_{i}^{\rm dl}}(t)} $, the expected PAoI of a packet computed by (\ref{eq:average_PAoI}), where the obtained ${\mathbb E}[Y_{i,m}^{\rm A}] = {N(l+L)T\Delta_t}/(2L{\sum\nolimits_{t = 1}^T {\sum\nolimits_{i} {\sum\nolimits_{j} {s_{ij}(t)}} }})$.

\begin{figure}[!t]
\centering
\subfigure[Network profit vs. the number of users]{\includegraphics[width=1.7 in]{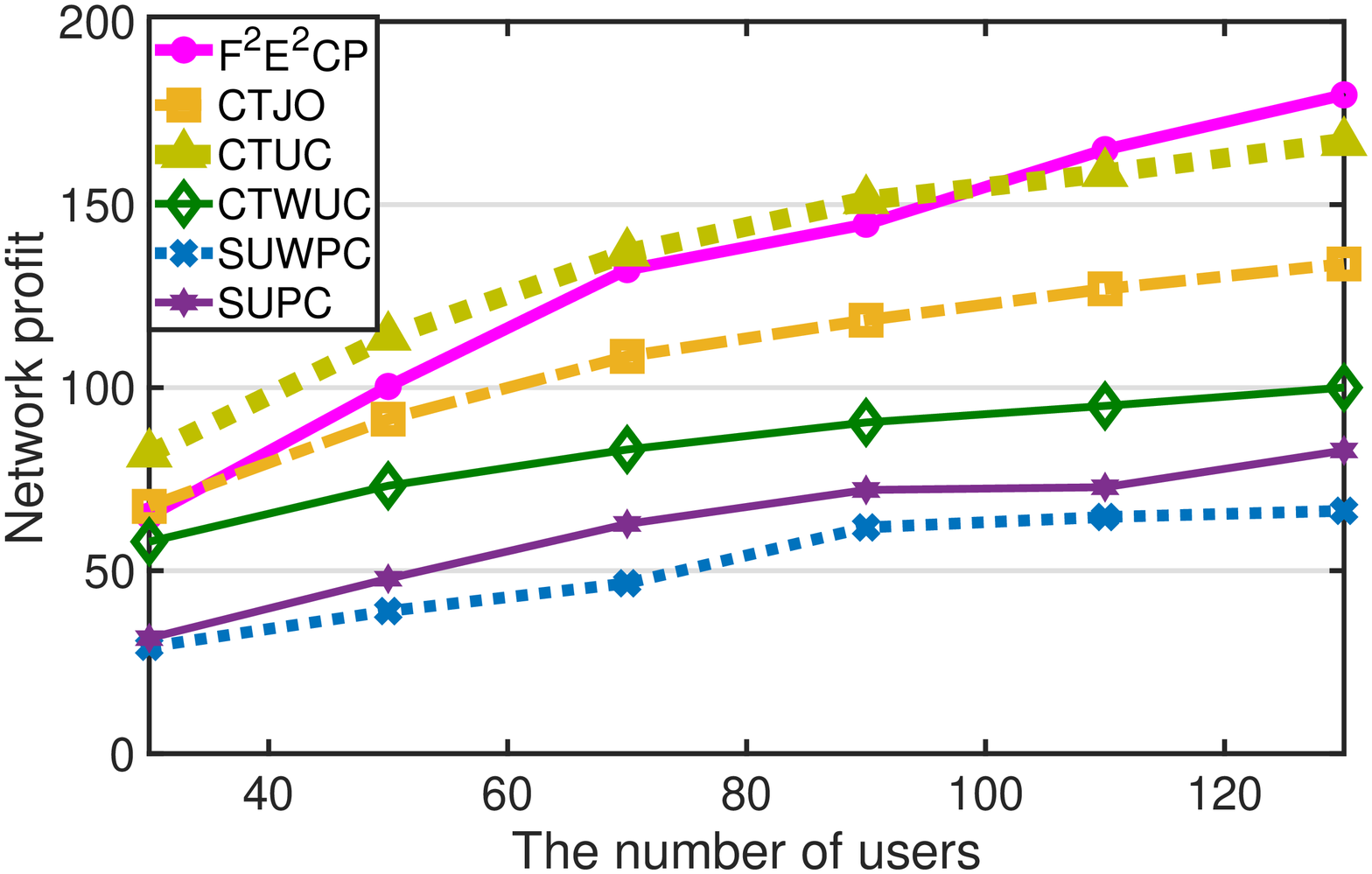}%
\label{fig_profit_user}}
\subfigure[Network profit vs. the number of UAVs]{\includegraphics[width=1.7 in]{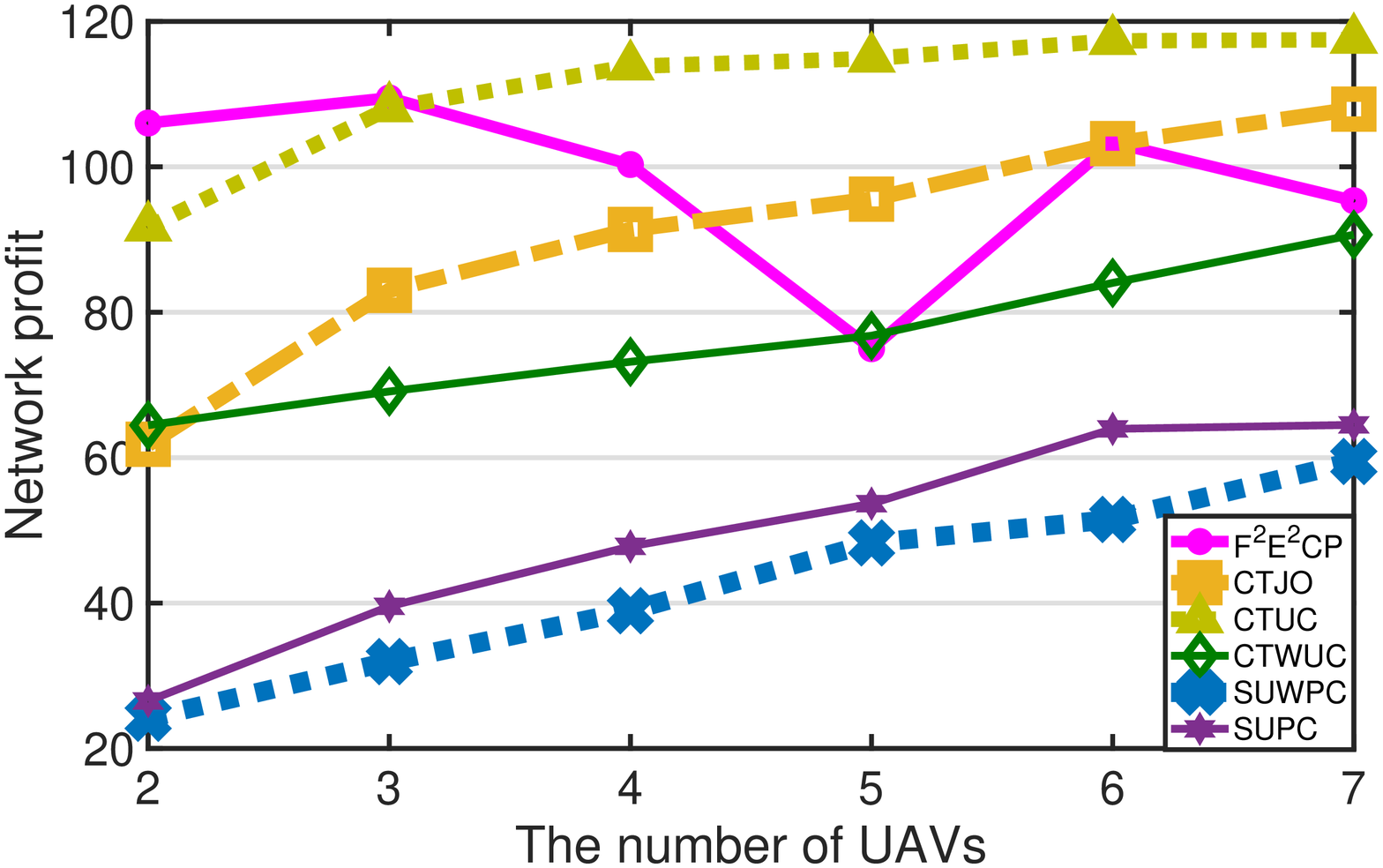}%
\label{fig_profit_uav}}
\caption{Comparison of the obtained network profit.}
\label{fig_profit}
\end{figure}


\begin{figure}[!t]
\centering
\subfigure[Total UAV power consumption vs. $N$]{\includegraphics[width=1.7 in]{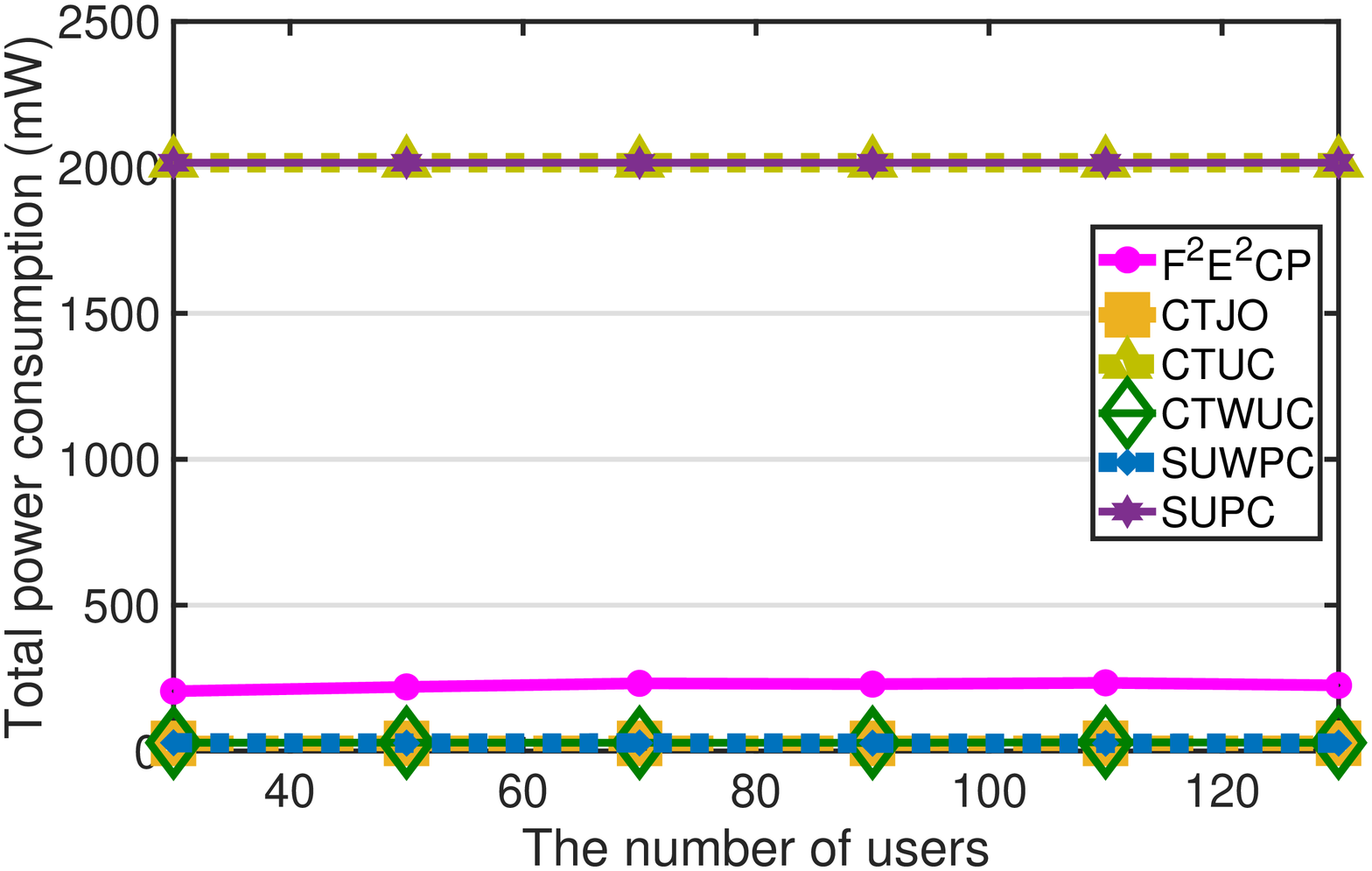}%
\label{fig_power_user}}
\subfigure[Total UAV power consumption vs. $J$]{\includegraphics[width=1.7 in]{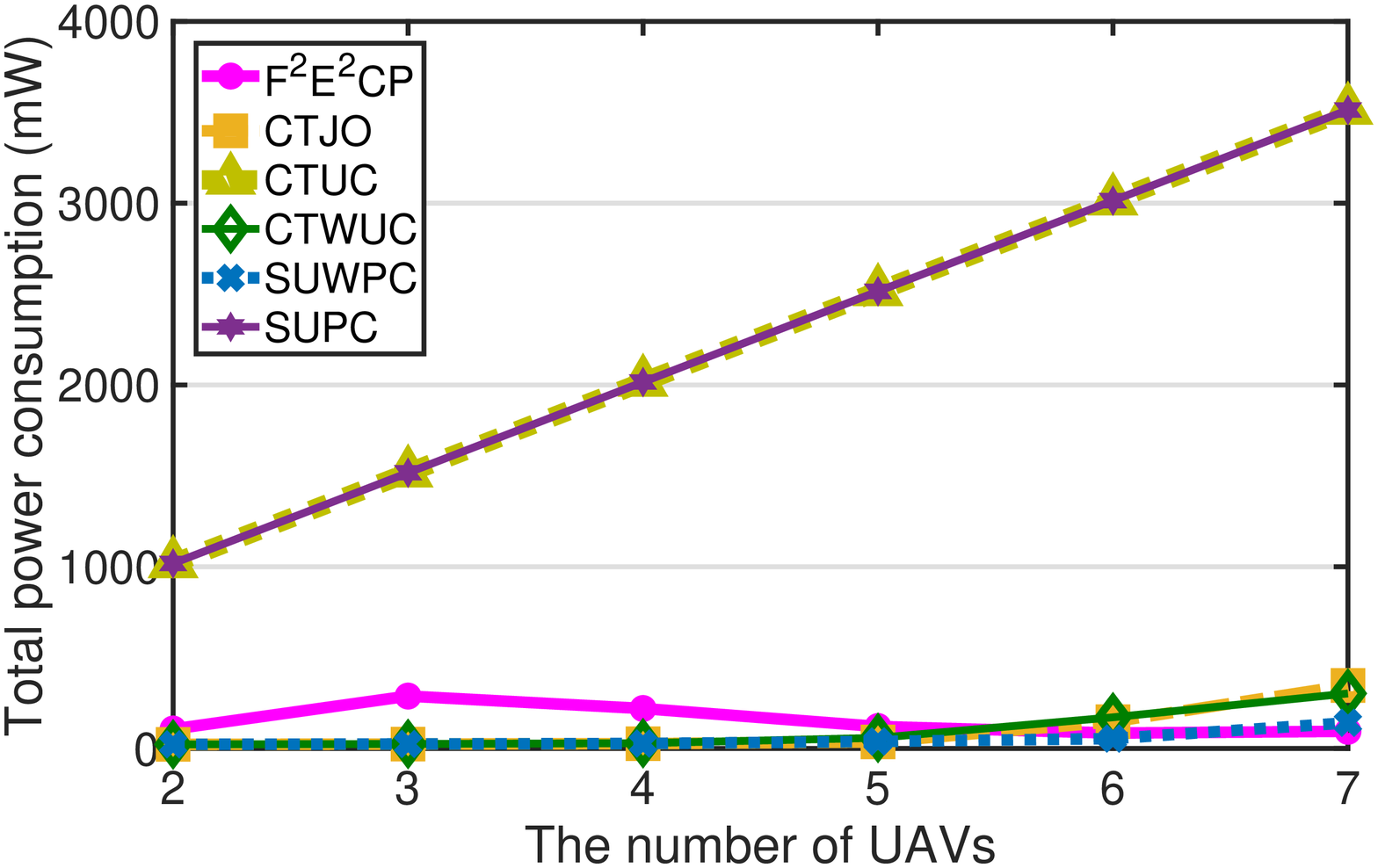}%
\label{fig_power_uav}}
\caption{Comparison of the obtained total UAV power consumption of all comparison algorithms.}
\label{fig_power}
\end{figure}

In Figs. \ref{fig_profit} and \ref{fig_power}, the influence of the number of users and UAVs on the obtained network profit and the total UAV power consumption is plotted. The tendency of the obtained energy efficiency and Jain's fairness indexes of all comparison algorithms are depicted in Figs. \ref{fig_EE_user_num} and \ref{fig_fairness_user_num}, respectively. Besides, in Fig. \ref{fig_EE_user_num}, the energy efficiency results of two comparison algorithms (i.e., CTUC and SUPC) are not plotted. This is because they suggest to transmit content files with the maximum UAV transmit power, and will achieve smaller energy efficiency values than other UAV power control-based algorithms. In Fig. \ref{fig_fairness_user_num}, the fairness index results of CTJO and SUWPC are not plotted. As CTJO and SUWPC achieve smaller network profit than CTUC and SUPC, CTJO and SUWPC will obtain smaller Jain's fairness indexes.
\begin{figure}[!t]
\centering
\begin{minipage}[t]{0.24\textwidth}
\centering
\includegraphics[width=1.7 in]{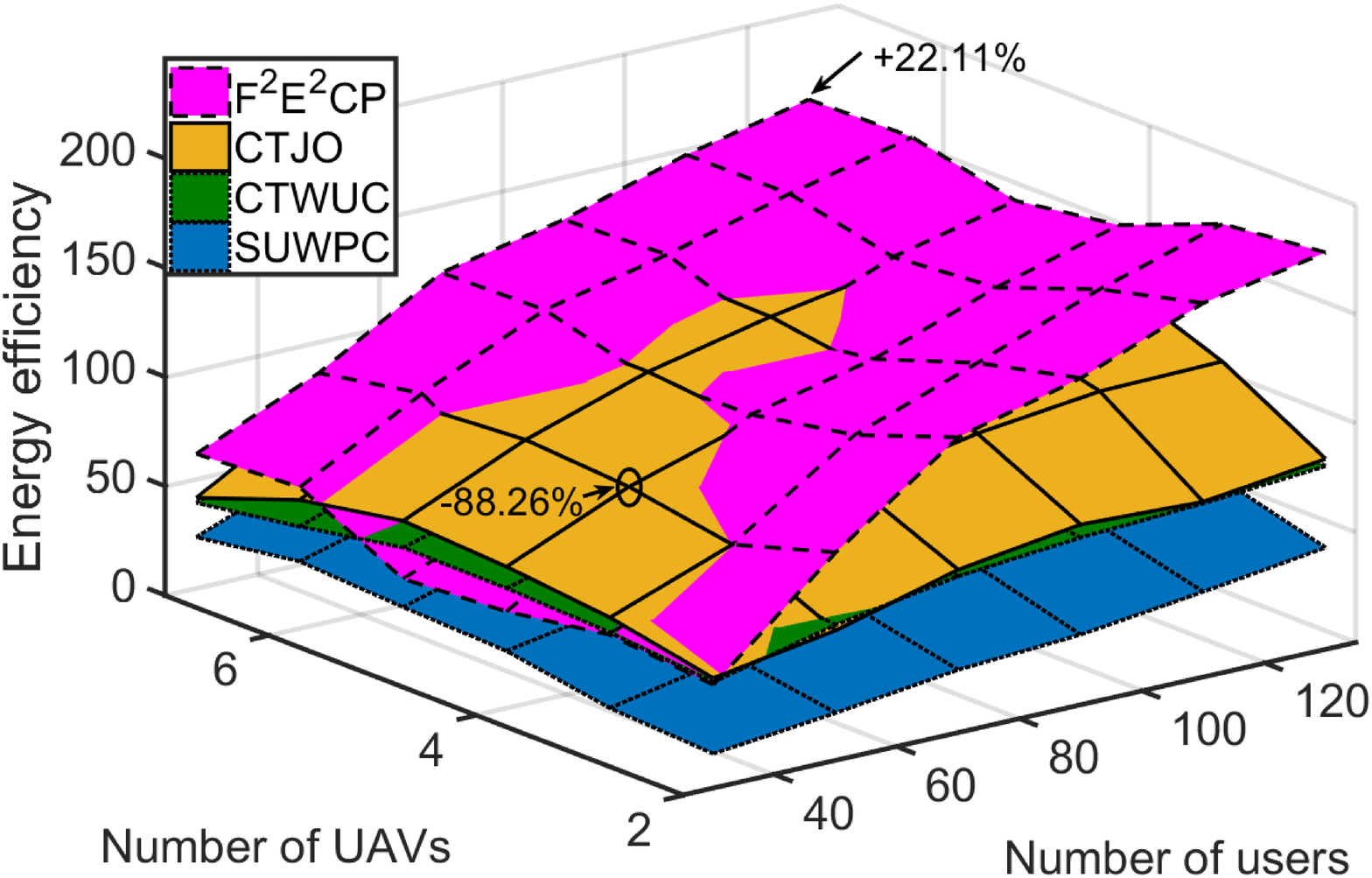}
\caption{Energy efficiency vs. $N$ and $J$.}
\label{fig_EE_user_num}
\end{minipage}
\begin{minipage}[t]{0.24\textwidth}
\centering
\includegraphics[width=1.7 in]{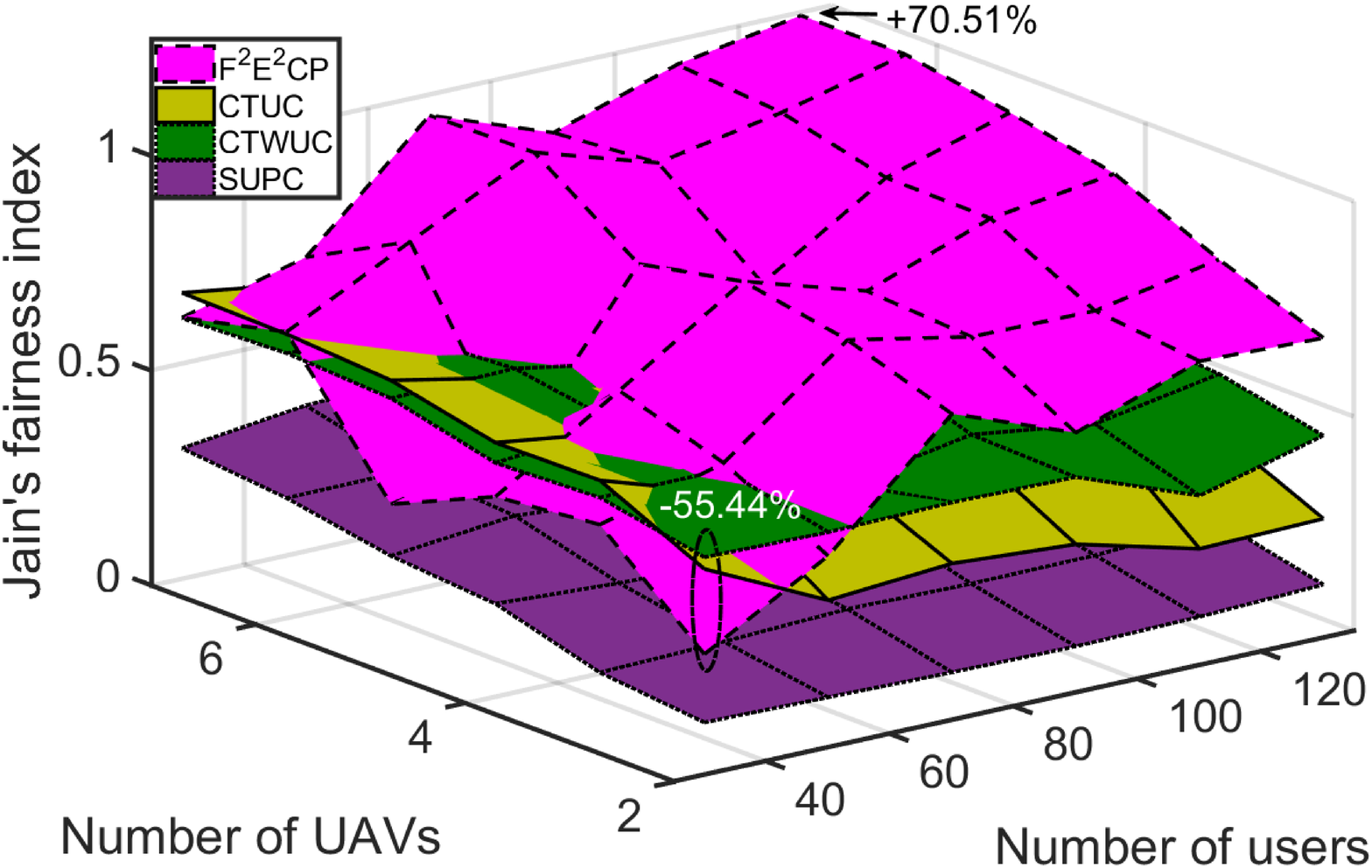}
\caption{Jain's fairness index vs. $N$ and $J$.}
\label{fig_fairness_user_num}
\end{minipage}
\end{figure}

From these figures, we can observe that:
1) for all algorithms, their achieved network profit increases with an increasing number of users because content files can be successfully delivered to more users;
2) CTUC obtains the greatest network profit when $N \le 90$, followed by the proposed F$^2$E$^2$CP algorithm. This is mainly because CTUC adopts the maximum UAV transmit power that results in great AtG data rates.
Although SUPC adopts the scheme of transmitting content files with the maximum UAV transmit power, small network profit is obtained as many terrestrial users are experiencing outage throughout the content provision period.
The observations show that the design of suitable UAV trajectories can help improve network profit; 
3) except for F$^2$E$^2$CP, the obtained network profit of other comparison algorithms increase with an increasing number of UAVs. For F$^2$E$^2$CP, a great number of UAVs does not lead to big network profit. This is due to the complex impact of signal interference. For other comparison algorithms, the distance between any two UAVs is large. As a result, they can increase UAV transmit power to achieve greater network profit. Yet, the nearest distance between two UAVs can be $50$ m which may result in large signal interference. To reduce signal interference, F$^2$E$^2$CP decreases the UAV transmit power when $J \ge 4$;
4) the generated total power consumption of all algorithms almost does not vary with the number of users because a UAV can deliver a content file to one user at each time slot.
When $J < 5$, F$^2$E$^2$CP consumes greater UAV transmit power than other three comparison algorithms adopting the proposed UAV transmit power optimization method such that greater network profit can be gained;
5) when $N \le 90$ and $J \le 5$, CTJO may be energy-efficient than F$^2$E$^2$CP. For example, when $N= 50$ and $J = 4$, the obtained energy efficiency of the proposed F$^2$E$^2$CP algorithm is $88.26$\% of CTJO.
However, from Fig. \ref{fig_profit}, we can find that F$^2$E$^2$CP may obtain higher network profit than CTJO under the similar parameter setting.
This observation shows that F$^2$E$^2$CP tends to deliver content files to fewer users continuously when users are sparsely distributed, resulting from a small $N$.
When the number of users and UAVs becomes great, F$^2$E$^2$CP can achieve greater energy efficiency than the CTJO algorithm. For instance, when $N= 130$ and $J = 7$, F$^2$E$^2$CP is energy-efficient by $22.11$\% than CTJO.
The above result indicates that a simple circular UAV trajectory is preferable when terrestrial users are sparsely distributed in the considered communication area. However, when users are densely distributed, optimizing UAV trajectory will greatly improve the energy efficiency of content provision. 
Besides, CTJO may be energy-efficient than CTWUC, which means that the UAV caching can improve the energy efficiency of content provision. SUWPC provides inefficient content delivery due to the static UAV deployment;
6) similarly, when the number of users is small (i.e., $N \le 50$), F$^2$E$^2$CP may be overwhelmed by CTUC and CTWUC in terms of fair content provision. For example, when $N = 30$ and $J = 2$, the obtained fairness index of F$^2$E$^2$CP is $58.62$\% and $55.44$\% of that of CTUC and CTWUC, respectively. However, when $N > 50$, F$^2$E$^2$CP can provide fairer content delivery services for terrestrial users than CTUC, CTWUC, and SUPC. For instance, the obtained Jain's fairness index of F$^2$E$^2$CP is $1.70$, $1.88$, and $3.04$ times of CTWUC, CTUC, and SUPC, respectively, when $N = 130$ and $J = 7$.

We plot the tendency of expected PAoI obtained by all comparison algorithms in Fig. \ref{fig_expected_PAoI} to show whether the expected PAoI can be reduced by the joint design of UAV caching, UAV trajectory, and UAV transmit power. 
In Fig. \ref{fig_expected_PAoI}, the expected PAoI is obtained with $N = 30$ and $J = 5$.

From this figure, we can observe: 1) the proposed F$^2$E$^2$CP algorithm achieves a small value of expected PAoI that is close to the theoretical value. The obtained values of ${\mathbb E}[\Delta_{i,m}(\hat q; q)]$ of other comparison algorithms are much greater than the theoretical value. Besides, the two circular UAV trajectory-based algorithms with UAV caching achieve smaller ${\mathbb E}[\Delta_{i,m}(\hat q; q)]$ than the two static UAV trajectory-based algorithms. This observation shows the importance of UAV trajectory optimization in delivering fresh packets to terrestrial users;
2) CTWUC obtains the largest ${\mathbb E}[\Delta_{i,m}(\hat q; q)]$ and takes at least $12.57$ seconds more than CTJO and CTUC to deliver packets to users. It indicates that UAV caching can significantly reduce the latency of delivering packets to users;
3) it is interesting to find that the UAV power control method alone cannot effectively reduce the expected edge arrival duration. For example, CTUC achieves smaller ${\mathbb E}[\Delta_{i,m}(\hat q; q)]$ than CTJO, yet, SUPC obtains greater ${\mathbb E}[\Delta_{i,m}(\hat q; q)]$ than SUWPC.

\begin{figure}[!t]
\centering
\includegraphics[width=2.0 in]{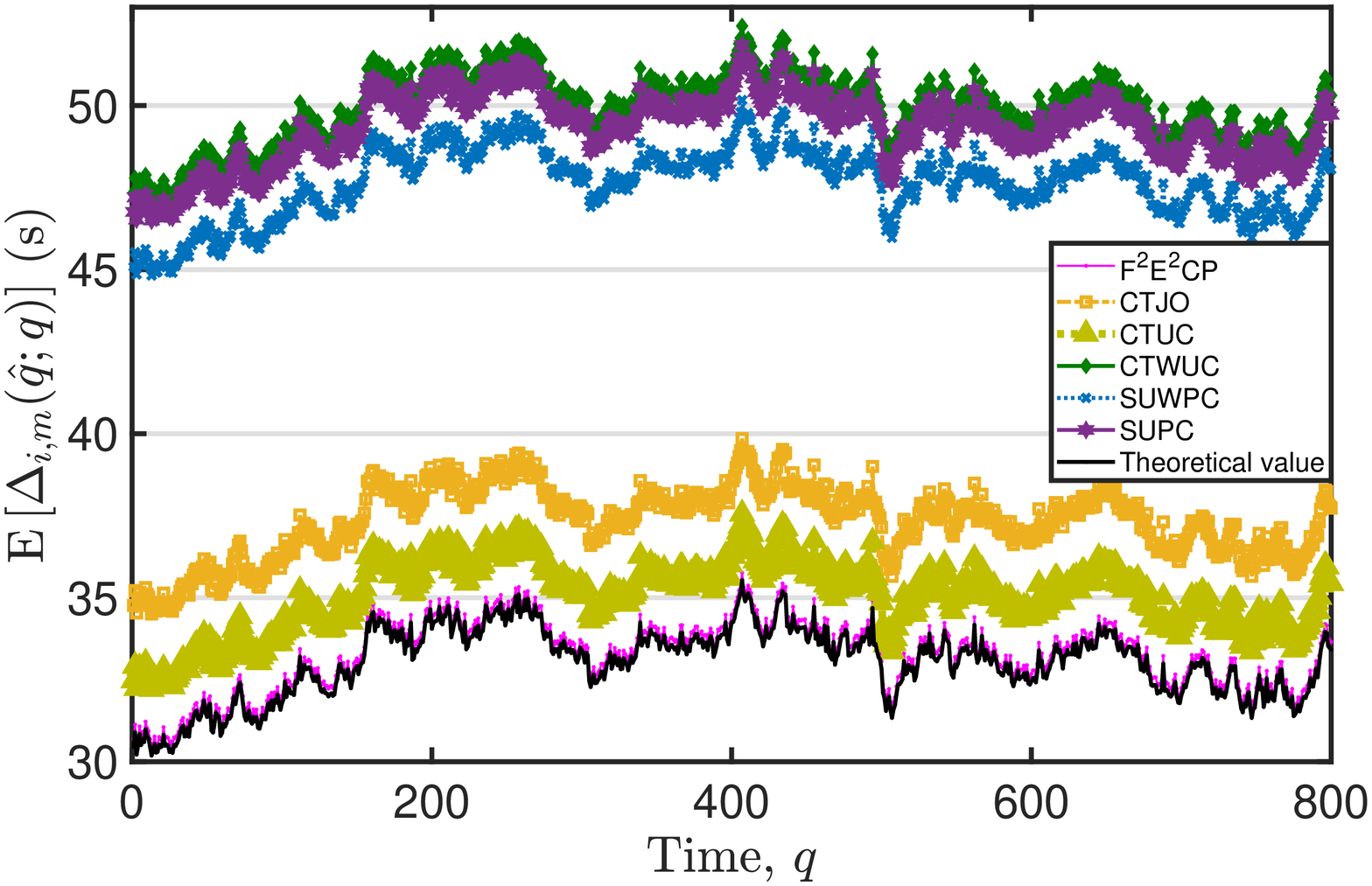}%
\caption{Comparison of expected PAoI of all comparison algorithms.}
\label{fig_expected_PAoI}
\end{figure}

Summarily, the above simulation results indicate that the joint design of UAV caching, UAV trajectory, and UAV transmit power can guarantee the provision of fresh, fair, and energy-efficient content files for terrestrial users.

\section{Conclusion and future work}
This paper investigated a private and cache-enabled UAV network for providing fresh, fair, and energy-efficient content delivery services for terrestrial users. To achieve this goal, we formulated a joint UAV caching, UAV trajectory, and UAV transmit power optimization problem.
We proposed a novel algorithm by leveraging a Lyapunov-based optimization framework integrating an iterative optimization scheme and an SCA technique to solve the formulated problem.
Besides, we discussed the convergence behaviour of the proposed algorithm as well as derived the theoretical value of expected PAoI of data packets.
Simulation results verified that the proposed algorithm could provide fresh content files for users and was more $22.11$\% and $70.51$\% energy-efficient and fairer than benchmark algorithms. This paper explored an optimization method to design UAV trajectories in a 2D space. How to incorporate reinforcement learning with optimization to solve the joint UAV caching, three-dimensional UAV trajectory, and UAV transmit power optimization problem is a topic worthy of research in the near future.

\appendix
\subsection{Proof of Proposition \ref{prop:time_decouple}}
Suppose all limits in (\ref{eq:Jensen_problem}) exist, the constraint (\ref{eq:Jensen_problem}b) is therefore equivalent to ${{\bar u}_{i}^{\rm dl}}(t) = {{\bar \gamma }_{i}}(t)$. $\bar g(t) \le \phi ({{\bar u}_{1}^{\rm dl}}(t), \ldots ,{{\bar u}_{N}^{\rm dl}}(t))$ can then be achieved. It means that the maximum value of the objective function of (\ref{eq:Jensen_problem}) is no greater than that of (\ref{eq:original_problem}). Besides, the maximum value of the objective function of (\ref{eq:original_problem}) can be obtained through letting ${{\bar \gamma }_{i}}(t) = {\bar u}_{i}^{\rm dl; \star}(t)$ for all $i \in {\mathcal I}$, and $t \in \{1, 2, \ldots\}$ with $({\bar u}_{1}^{\rm dl; \star}(t), \ldots, {\bar u}_{N}^{\rm dl; \star}(t))$ being the optimal time average achievable data rates of all users for (\ref{eq:original_problem}) \cite{Neely2014A}. It indicates that the feasible domain of (\ref{eq:original_problem}) is smaller than that of (\ref{eq:Jensen_problem}). Therefore, (\ref{eq:Jensen_problem}) and (\ref{eq:original_problem}) are equivalent. This completes the proof.

\subsection{Proof of Lemma \ref{lemma:1}}
We discuss the upper bound of $\frac{1}{2}{({[ {Q_{i}(t + 1)} ]^ + })^2}$ in three cases. According to {(\ref{eq:Queue_ui})} and the non-negative operation,

Case I: when ${Q_{i}(t + 1)}  \ge 0$ and ${Q_{i}(t)}  \ge 0$, we can obtain
\begin{equation}\label{eq:31}
\begin{array}{*{20}{l}}
{\frac{1}{2}{{({{[{Q_{i}}(t + 1)]}^ + })}^2}} = \frac{1}{2}{{({{[{Q_{i}}(t)]}^ + })}^2} + \\
\qquad \qquad \qquad {{[{Q_{i}}(t)]}^ + }({\varphi {C_{i,f_i}^{\rm th}}(t )} - {u_{i}^{\rm dl}}(t)) + \\
\qquad \qquad \qquad { \frac{1}{2}{{({\varphi {C_{i,f_i}^{\rm th}}(t )} - {u_{i}^{\rm dl}}(t))}^2}}
\end{array}
\end{equation}

Case II: when ${Q_{i}(t + 1)} \ge 0$ and ${Q_{i}(t)} < 0$, we can achieve ${\varphi {C_{i,f_i}^{\rm th}}(t )}-{{u}_{i}^{\rm dl}}(t)>{{Q}_{i}}(t+1)\ge 0$, ${{[ Q_{i}(t) ]}^{+}}=0$ and
\begin{equation}\label{eq:31}
\begin{array}{*{20}{l}}
{\frac{1}{2}{{({{[{Q_{i}}(t + 1)]}^ + })}^2} < \frac{1}{2}{{({\varphi {C_{i,f_i}^{\rm th}}(t )} - {u_{i}^{\rm dl}}(t))}^2}}\\
\qquad \qquad { = \frac{1}{2}{{({{[{Q_{i}}(t)]}^ + })}^2} + {{[{Q_{i}}(t)]}^ + }({\varphi {C_{i,f_i}^{\rm th}}(t )} - {u_{i}^{\rm dl}}(t))}\\
\qquad \qquad \quad { + \frac{1}{2}{{({\varphi {C_{i,f_i}^{\rm th}}(t )} - {u_{i}^{\rm dl}}(t))}^2}}
\end{array}
\end{equation}

Case III: when ${{Q_{i}(t + 1)} } < 0$, we can obtain
\begin{equation}\label{eq:31}
\begin{array}{*{20}{l}}
{\frac{1}{2}{{({{[{Q_{i}}(t + 1)]}^ + })}^2} = 0}\\
\qquad \qquad { \le \frac{1}{2}({{[{Q_{i}}(t)]}^ + } + {{({\varphi {C_{i,f_i}^{\rm th}}(t )} - {u_{i}^{\rm dl}}(t))}^2}}\\
\qquad \qquad { = \frac{1}{2}{{({{[{Q_{i}}(t)]}^ + })}^2} + {{[{Q_{i}}(t)]}^ + }({\varphi {C_{i,f_i}^{\rm th}}(t )} - {u_{i}^{\rm dl}}(t))}\\
\qquad \qquad \quad { + \frac{1}{2}{{({\varphi {C_{i,f_i}^{\rm th}}(t )} - {u_{i}^{\rm dl}}(t))}^2}}
\end{array}
\end{equation}

Therefore, we can have
\begin{equation}\label{eq:L_Q}
\begin{array}{*{20}{l}}
{\frac{1}{2}{{({{[{Q_{i}}(t + 1)]}^ + })}^2} \le \frac{1}{2}{{({{[{Q_{i}}(t)]}^ + })}^2}}\\
\qquad \qquad  + {{[{Q_{i}}(t)]}^ + }({\varphi {C_{i,f_i}^{\rm th}}(t )} - {u_{i}^{\rm dl}}(t)) + \\
\qquad \qquad \frac{1}{2}{{({\varphi {C_{i,f_i}^{\rm th}}(t )} - {u_{i}^{\rm dl}}(t))}^2}
\end{array}
\end{equation}

Similarly, according to {(\ref{eq:Queue_Z}), (\ref{eq:Queue_H})}, and the non-negative operation, we have
\begin{equation}\label{eq:L_Z}
\begin{array}{*{20}{l}}
\begin{array}{l}
\frac{1}{2}{({[{Z_{i}}(t + 1)]^ + })^2} = \frac{1}{2}{({[{Z_{i}}(t)]^ + })^2} +
\end{array}\\
\qquad \quad {{[{Z_{i}}(t)]^ + }({\gamma _{i}}(t) - {u_{i}^{\rm dl}}(t)) + \frac{1}{2}{{({\gamma _{i}}(t) - {u_{i}^{\rm dl}}(t))}^2}}
\end{array}
\end{equation}
and
\begin{equation}\label{eq:L_H}
\begin{array}{*{20}{l}}
{\frac{1}{2}{{({{[{H_j}(t + 1)]}^ + })}^2} \le \frac{1}{2}{{({{[{H_j}(t)]}^ + })}^2}}\\
{ + {{[{H_j}(t)]}^ + }({p_j}(t) - {{\tilde p}_j} + p_j^c) + \frac{1}{2}{{({p_j}(t) - {{\tilde p}_j} + p_j^c)}^2}}
\end{array}
\end{equation}

With inequalities (\ref{eq:L_Q})-(\ref{eq:L_H}), we can obtain a new inequality by utilizing the definition of Lyapunov drift. Next, we can achieve (\ref{eq:upper_bound}) by adding $-V\left( g(t)-\sum\nolimits_{j=1}^{N}{{{p}_{j}}(t)} \right)$ to both sides of the new inequality. This completes the proof.

\subsection{Proof of Proposition \ref{lemma:lemma_uav_location}}
For any given UAV transmit power ${\mathcal P}(t)$, UAV trajectories ${{\mathcal {X}}(t-1)}$ at time slot $t-1$, and the content delivery decision set ${\mathcal S}(t)$, we can optimize the variables ${{\mathcal {X}}(t)}$ in (\ref{eq:subproblem_BPX}) via mitigating the following problem
\begin{subequations}\label{eq:UAV_location_problem}
\begin{alignat}{2}
& \mathop {\rm Maximize }\limits_{{{\mathcal {X}}(t)}} {\mkern 1mu} \text{ } {\sum\nolimits_{i} {\{ {{[{Q_{i}}(t)]}^ + } + {{[{Z_{i}}(t)]}^ + }\} {u_{i}^{\rm dl}}(t)} }  \\
&{\rm s.t.: \quad } {(\ref{eq:waypoint_constr}),(\ref{eq:safety_distance}),(\ref{eq:horizontal_dist_constraint}).}
\end{alignat}
\end{subequations}

To simplify (\ref{eq:UAV_location_problem}a), we introduce slack variables $\{ \eta_{i}\}$, with which (\ref{eq:UAV_location_problem}) can be reformulated as
\begin{subequations}\label{eq:UAV_location_equal_problem}
\begin{alignat}{2}
& \mathop {\rm Maximize }\limits_{{{{\mathcal {X}}(t)}, {\{ \eta_{i}}(t) \} }} {\mkern 1mu} \text{ } {\sum\nolimits_{i} {\{ {{[{Q_{i}}(t)]}^ + } + {{[{Z_{i}}(t)]}^ + }\} {\eta _{i}}(t)} }  \\
&{\rm s.t.:} \quad {u_{i}^{\rm dl}}(t) \ge {\eta _{i}}(t),\forall i,t \\
& \quad {(\ref{eq:waypoint_constr}),(\ref{eq:safety_distance}),(\ref{eq:horizontal_dist_constraint}).}
\end{alignat}
\end{subequations}

Denote by $\eta_{i}^{\star}$ the optimal solution to (\ref{eq:UAV_location_equal_problem}). If $\eta_{i}^{\star}$ satisfies (\ref{eq:UAV_location_equal_problem}b) with strict inequality, we can then decrease $u_{i}^{\rm dl}(t)$ to make (\ref{eq:UAV_location_equal_problem}b) active without changing the value of (\ref{eq:UAV_location_equal_problem}a). Therefore, (\ref{eq:UAV_location_equal_problem}) is equivalent to (\ref{eq:UAV_location_problem}).

As $h_{ij}(t)$ can be rewritten as ${h_{ij}}(t) = \frac{{{\theta _{ij}}}}{{{g^2} + ||{{\bm x}_j}(t) - {{\bm x}_{i}^{\rm u}(t)}|{|^2}}}$, where ${\theta _{ij}} = \frac{{G_{LoS}{\varsigma ^2}}}{{16{\pi ^2}}}$, the achievable data rate of user $i$ can be expressed as ${u_{i}^{\rm dl}}(t) = \sum\nolimits_{j } {{s_{ij}}(t){R_{ij}}(t)}$ with
${R_{ij}}(t) = {{\hat R}_{ij}}(t) - {\log _2}( {{\sigma ^2}W + {\sum\nolimits_{k \in {\mathcal J}\backslash \{ j\} }}\frac{{{{p_k}(t)\theta _{ij}}}}{{g^2 + ||{{\bm x}_k}(t) - {{\bm x} _{i}}|{|^2}}}} )$,
and ${\hat R_{ij}}(t) = {\log _2}( {{\sigma ^2}W + \sum\nolimits_{k \in {\mathcal J}} {\frac{{{{p_k}(t)\theta _{ij}}}}{{{g^2} + ||{{\bm x}_k}(t) - {{\bm x}_{i}^{\rm u}(t)}|{|^2}}}} } )$.

(\ref{eq:UAV_location_equal_problem}) is not convex due to the non-convex constraint (\ref{eq:safety_distance}), and (\ref{eq:UAV_location_equal_problem}b). Therefore, we may not find efficient methods to obtain the optimal solution to (\ref{eq:UAV_location_equal_problem}).
Although (\ref{eq:UAV_location_equal_problem}b) is not concave with w.r.t. ${\bm x}_j(t)$, we can observe that $\hat R_{ij}(t)$ is convex w.r.t. ${||{{\bm x}_k}(t) - {{\bm x}_{i}^{\rm u}(t)}||^2}$. Accordingly, a slack variable ${{B}_{ik}(t)}=||{{{\bm x}}_{k}}(t)-{{\bm x}_{i}^{\rm u}(t)}|{{|}^{2}}$ $\forall i, k \ne j$ is involved to transform (\ref{eq:UAV_location_equal_problem}) into the following new problem
\begin{subequations}\label{eq:UAV_location_equal_problem_S}
\begin{alignat}{2}
& \mathop {\rm Maximize }\limits_{{\mathcal {X}}(t), \{ \eta_{i}(t) \}, \{ B_{ik}(t) \}} {\mkern 1mu} \text{ } {\sum\nolimits_{i} {\{ {{[{Q_{i}}(t)]}^ + } + {{[{Z_{i}}(t)]}^ + }\} {\eta _{i}}(t)} }  \\
& {\rm s.t.:} \text{ } \sum\nolimits_{j } {s_{ij}}(t)( {{\hat R}_{ij}}(t) + {{{\tilde R}_{ij}}(t)})  \ge {\eta _{i}}(t), \forall i,t \\
& \quad {B_{ik}(t)} \le ||{{\bm x}_k}(t) - {{\bm x}_{i}^{\rm u}(t)}||{^2},\forall i,k \ne j,t \\
& \quad {(\ref{eq:waypoint_constr}),(\ref{eq:safety_distance}),(\ref{eq:horizontal_dist_constraint}).}
\end{alignat}
\end{subequations}
where ${\tilde R_{ij}}(t) =  - {\log _2}( {{\sigma ^2}W + \sum\nolimits_{k \in {\mathcal J}\backslash \{ j\} } {\frac{{{{p_k}(t)\theta _{ij}}}}{{{g^2} + {B_{ik}}(t)}}} } )$.

Similar to (\ref{eq:UAV_location_equal_problem}), although a slack variable $B_{ik}(t)$ is introduced, (\ref{eq:UAV_location_equal_problem_S}) is equivalent to (\ref{eq:UAV_location_equal_problem}). Unfortunately, (\ref{eq:UAV_location_equal_problem_S}) is still non-convex as (\ref{eq:safety_distance}), (\ref{eq:UAV_location_equal_problem_S}b), and (\ref{eq:UAV_location_equal_problem_S}c) are non-convex.

To handle the non-convexity of (\ref{eq:UAV_location_equal_problem_S}), an SCA technique is explored. It can be observed that $\hat R_{ij}(t)$, $\forall i,j$ is convex w.r.t. ${||{{\bm x}_k}(t) - {{\bm x}_{i}^{\rm u}(t)}||^2}$ and will be globally lower-bounded by its first-order Taylor expansion at any local point \cite{Stephen2004Convex}. Therefore, for a given local point at the $(r+1)$-th iteration ($r \ge 0$), denoted by ${{\bm x}_k^{(r)}(t)}$, $\hat R_{ij}(t)$ is lower-bounded by
\begin{equation}\label{eq:Rij}
\begin{array}{l}
{{\hat R}_{ij}}(t) \ge {\log _2}\left( {{\sigma ^2}W + \sum\nolimits_{k \in {\mathcal J}} {\frac{{{{p_k}(t)\theta _{ij}}}}{{{g^2} + ||{\bm x}_k^{(r)}(t) - {{\bm x}_{i}^{\rm u}(t)}|{|^2}}}} } \right)\\
 - \sum\limits_{k \in {\mathcal J}} {\frac{{\frac{{{{p_k}(t)\theta _{ij}}}}{{{{\left( {{g^2} + ||{\bm x}_k^{(r)}(t) - {{\bm x}_{i}^{\rm u}(t)}|{|^2}} \right)}^2}}}\left( {||{{\bm x}_k}(t) - {{\bm x}_{i}^{\rm u}(t)}|{|^2} - ||{\bm x}_k^{(r)}(t) - {{\bm x}_{i}^{\rm u}(t)}|{|^2}} \right)}}{{\left( {{\sigma ^2}W + \sum\limits_{k \in {\mathcal J}} {\frac{{{{p_k}(t)\theta _{ij}}}}{{{g^2} + ||{\bm x}_k^{(r)}(t) - {{\bm x}_{i}^{\rm u}(t)}|{|^2}}}} } \right)\ln 2}}} \\
 = D_{i}^{(r)}(t) - \sum\limits_{k \in {\mathcal J}} {E_{ik}^{(r)}(t)( {||{{\bm x}_k}(t) - {{\bm x}_{i}^{\rm u}(t)}|{|^2} -}} \\
 \qquad \qquad \qquad {{||{\bm x}_k^{(r)}(t) - {{\bm x}_{i}^{\rm u}(t)}|{|^2}} )}
\end{array}
\end{equation}
where $D_{i}^{(r)}(t) = {\log _2}\left( {{\sigma ^2}W + \sum\limits_{k \in {\mathcal J}} {\frac{{{{p_k}(t)\theta _{ij}}}}{{{g^2} + ||{\bm x}_k^{(r)}(t) - {{\bm x}_{i}^{\rm u}(t)}|{|^2}}}} } \right)$ and $E_{ik}^{(r)}(t) = {{\frac{{{{p_k}(t)\theta _{ij}}}}{{{{\left( {{g^2} + ||{\bm x}_k^{(r)}(t) - {{\bm x}_{i}^{\rm u}(t)}|{|^2}} \right)}^2{{2^{D_{i}^{(r)}(t)}\ln 2}}}}}}}$.

Besides, for a given location point $({\bm x}_{j}^{(r)}(t), {\bm x}_{k}^{(r)}(t))$, we can obtain the lower bound of ${{\left\| {{{\bm x}}_{j}}(t)-{{{\bm x}}_{k}}(t) \right\|}^{2}}$ via the first order Taylor expansion as below
\begin{equation}\label{eq:xjk}
\begin{array}{*{20}{l}}
{|| {{{\bm x}_j}(t) - {{\bm x}_k}(t)} ||^2} \ge  - {|| {{\bm x}_j^{(r)}(t) - {\bm x}_k^{(r)}(t)} ||^2} + \\
\qquad \qquad \quad 2{( {{\bm x}_j^{(r)}(t) - {\bm x}_k^{(r)}(t)} )^{\rm T}}\left( {{{\bm x}_j}(t) - {{\bm x}_k}(t)} \right).
\end{array}
\end{equation}

Similarly, for a given location point ${\bm x}_k^{r}(t)$, $||{{\bm x}_k}(t) - {{\bm x}_{i}^{\rm u}(t)}||{^2}$ is lower-bounded by
\begin{align}\label{eq:xk}
& ||{{\bm x}_k}(t) - {{\bm x}_{i}^{\rm u}(t)}||{^2} \ge  {|| {{\bm x}_k^{(r)}(t) - {\bm x}_{i}^{\rm u}(t)} ||^2} + \notag \\
& \qquad \qquad 2{( {{\bm x}_k^{(r)}(t) - {\bm x}_{i}^{\rm u}(t)} )^{\rm T}} \left( {{{\bm x}_k}(t) - {{\bm x}_{i}^{\rm u}(t)}} \right).
\end{align}

For any local point ${\mathcal X}^{(r)}(t) = \{{\bm x}_k^{(r)}(t)\}$, by referring to (\ref{eq:Rij})-(\ref{eq:xk}), (\ref{eq:UAV_location_equal_problem_S}) can be approximated as (\ref{eq:UAV_location_equal_problem_approximate}). This completes the proof.

\subsection{Proof of Proposition \ref{lemma:lemma_UAV_power}}
For any given content delivery decision set ${\mathcal S}(t)$ as well as UAV trajectories ${\mathcal X}(t)$, the UAV transmit power of (\ref{eq:original_problem}) can be optimized via mitigating the following problem
\begin{subequations}\label{eq:UAV_power_problem}
\begin{alignat}{2}
& \mathop {\rm Maximize }\limits_{{\mathcal {P}}(t),\{ \eta_{i}(t) \}} {\mkern 1mu} \text{ } - V\rho \sum\nolimits_{j } {{p_j}(t)}  - \sum\nolimits_{j } {{{[{H_j}(t)]}^ + }{p_j}(t)}  + \nonumber \\
& \qquad \qquad \sum\nolimits_{i} {\{ {{[{Q_{i}}(t)]}^ + } + {{[{Z_{i}}(t)]}^ + }\} {\eta _{i}}(t)} \\
& {\rm s.t.:} \text{ } \sum\limits_{j } {{s_{ij}}(t){{\log }_2}( {1 + \frac{{{p_j}(t){h_{ij}}(t)}}{{{\sigma ^2}W + \sum\nolimits_{k \in {\mathcal J}\backslash \{ j\} } {{p_k}(t){h_{ik}}(t)} }}} )}   \nonumber \\
& \qquad \qquad \quad \ge {\eta _{i}}(t), \quad i,t \\
& \quad \eta_i(t) \ge { s_{ij}(t){C_{i,f_i}^{\rm th}}(t )}, \quad i,j,t \\
& \quad {\rm (\ref{eq:original_problem}d), (\ref{eq:original_problem}g)}
\end{alignat}
\end{subequations}
where (\ref{eq:UAV_power_problem}c) is enforced due to the data rate requirement of enabling a user's desired QoE state.

Owing to the non-convex constraint (\ref{eq:UAV_power_problem}b), (\ref{eq:UAV_power_problem}) is non-convex; as a result, it is challenging to achieve its optimal solution. However, we observe that (\ref{eq:UAV_power_problem}b) is a difference of two concave functions w.r.t. $p_k(t)$. Accordingly, we adopt the SCA technique again to approximate (\ref{eq:UAV_power_problem}b). Specifically, $R_{ij}(t)$ can be rewritten as ${R_{ij}}(t) = {\hat R_{ij}}(t) - {{\overset{\scriptscriptstyle\smile}{R}}_{ij}}(t)$, where ${{\overset{\scriptscriptstyle\smile}{R}}_{ij}}(t) = {\log _2}\left( {{\sigma ^2}W + \sum\nolimits_{k \in {\mathcal J}\backslash \{ j\} } {{p_k}(t){h_{ik}}(t)} } \right)$. For any local point ${\mathcal P}^{(r)}(t) = \{p_j^{r}(t)\}$, via the first order Taylor expansion ${{\overset{\scriptscriptstyle\smile}{R}}_{ij}}(t)$ is upper-bounded by
\begin{equation}\label{eq:8}
\begin{array}{l}
{{\mathord{\buildrel{\lower3pt\hbox{$\scriptscriptstyle\smile$}}
\over R} }_{ij}}(t) \le {\log _2}\left( {{\sigma ^2}W + \sum\nolimits_{k \in {\mathcal J}\backslash \{ j\} } {p_k^{(r)}(t){h_{ik}}(t)} } \right)\\
 + \sum\limits_{k \in {\mathcal J}\backslash \{ j\} } {\frac{{{h_{ik}}(t)\ln^{-1} 2}}{{ {{\sigma ^2}W + \sum\nolimits_{k \in {\mathcal J}\backslash \{ j\} } {p_k^{(r)}(t){h_{ik}}(t)} } }}} \left( {{p_k}(t) - p_k^{(r)}(t)} \right)\\
 = F_{ij}^{(r)}(t) + \sum\nolimits_{k \in {\mathcal J}\backslash \{ j\} } {G_{ik}^{(r)}(t)} \left( {{p_k}(t) - p_k^{(r)}(t)} \right)
\end{array}
\end{equation}
where $F_{ij}^{(r)}(t) = {\log _2}\left( {{\sigma ^2}W + \sum\nolimits_{k \in {\mathcal J}\backslash \{ j\} } {p_k^{(r)}(t){h_{ik}}(t)} } \right)$ and $G_{ik}^{(r)}(t) = \frac{{{h_{ik}}(t)}}{{2^{F_{ij}^{(r)}(t)}\ln 2}}$.


We can thus write the lower bound of $R_{ij}(t)$ as ${R_{ij}}(t) \ge {\hat R}_{ij}(t) - F_{ij}^{(r)}(t) - \sum\nolimits_{k \in {\mathcal J}\backslash \{ j\} } {G_{ik}^{(r)}(t)} ( {{p_k}(t) - p_k^{(r)}(t)} )
$.

In summary, for any local point ${\mathcal P}^{(r)}(t)$, (\ref{eq:UAV_power_problem}) can be approximated as (\ref{eq:UAV_power_problem_approx}). This completes the proof.

\subsection{Proof of Lemma \ref{lemma:lemma_convergent}}
Given a local point $({\mathcal X}^{(r)}(t), {\mathcal P}^{(r)}(t))$, the obtained value of (\ref{eq:UE_association_problem}a) at the $(r+1)$-th iteration, denoted by $\Gamma({{\mathcal S}^{(r+1)}(t)},{{\mathcal X}^{(r)}(t)},{{\mathcal P}^{(r)}(t)})$, is no greater than $\Gamma({{\mathcal S}^{(r)}(t)},{{\mathcal X}^{(r)}(t)},{{\mathcal P}^{(r)}(t)})$ via optimizing (\ref{eq:UE_association_problem}). Given a point $({\mathcal S}^{(r+1)}(t), {\mathcal P}^{(r)}(t))$, we have $\Gamma({{\mathcal S}^{(r+1)}(t)},{{\mathcal X}^{(r)}(t)},{{\mathcal P}^{(r)}(t)}) \ge \Gamma({{\mathcal S}^{(r+1)}(t)},{{\mathcal X}^{(r+1)}(t)},{{\mathcal P}^{(r)}(t)})$ due to the minimization of the upper-bounded problem of (\ref{eq:UAV_location_equal_problem}). Likewise, the inequality $\Gamma({{\mathcal S}^{(r+1)}(t)},{{\mathcal X}^{(r+1)}(t)},{{\mathcal P}^{(r+1)}(t)}) \ge \Gamma({{\mathcal S}^{(r+1)}(t)},{{\mathcal X}^{(r+1)}(t)},{{\mathcal P}^{(r)}(t)})$ can be obtained at $({\mathcal S}^{(r+1)}(t), {\mathcal X}^{(r+1)}(t))$. Besides, $\Gamma({{\mathcal S}^{(r)}(t)},{{\mathcal X}^{(r)}(t)},{{\mathcal P}^{(r)}(t)})$ is bounded at each iteration. Therefore, Algorithm \ref{alg:alg1} is convergent.

Lemma \ref{lemma:1} points out that $\Delta (t)-V( g(t)-\rho \sum\nolimits_{j }{{{p}_{j}^{tot}}(t)} )$ is upper-bounded at each time slot $t$. The time average of $L(t)$ then tends to be zero when $t \to \infty$. Therefore, Algorithm \ref{alg:alg2} can make all virtual queues mean-rate stable. This completes the proof.

\subsection{Proof of Lemma \ref{lem:lemma_accumulated}}
According to the evolution model in (\ref{eq:queue_evolution}), ${N_{a}^1}=0$.
Therefore, we focus on deriving the close-form expression of ${N_{a}^q}$ with $q>1$.
During the $2$-nd time interval, $n = 0$ indicates that there are no new arrivals or $n_c$ new arrivals during the $1$-st time interval; $n > 0$ means that the number of new arrivals in the $1$-st time interval is $n+n_c$. Therefore, the probability mass function (PMF) of the accumulated packets, denoted as ${f_{N_a^2}}(n)$, can be expressed as
\begin{equation}\label{eq:accumu_2nd}
{f_{N_a^2}}(n) = \left\{ {\begin{array}{*{20}{l}}
{{e^{ - \vartheta _w^1}} + \vartheta _w^1{e^{ - \vartheta _w^1}},}&{n = 0}\\
{\frac{{{{\left( {\vartheta _w^1} \right)}^{n + n_c}}{e^{ - \vartheta _w^1}}}}{{\left( {n + n_c} \right)!}},}&{n > 0}
\end{array}} \right.
\end{equation}

Similarly, in the $3$-rd time interval, three cases will result in $n = 0$: Case I, both the accumulate number of packets and the number of new arrivals in the $2$-nd time interval are zero; Case II, the accumulated number of packets is $n_c$ and the number of new arrivals is zero in the $2$-nd time interval; Case III, the accumulated number of packets is zero while the number of new arrivals is $n_c$ in the $2$-nd time interval. $n > 0$ indicates that the sum of the accumulated number of packets and the number of new arrivals in the $2$-nd time interval is $n+n_c$. Therefore, we can obtain the PMF of the accumulate number of packets, denoted by ${f_{N_a^3}}(n)$, in the $3$-rd time interval as follows
\begin{equation}\label{eq:accumu_$3$-rd}
{f_{N_a^3}}(n) = \left\{ {\begin{array}{*{20}{l}}
\begin{array}{l}
{e^{ - \vartheta _w^2}}{f_{N_a^2}}(0) + \vartheta _w^2{e^{ - \vartheta _w^2}}{f_{N_a^2}}(0) + \\
{e^{ - \vartheta _w^2}}{f_{N_a^2}}(n_c),
\end{array}&{n = 0}\\
{\sum\limits_{z = 0}^{n + n_c} {\frac{{{{\left( {\vartheta _w^2} \right)}^z}}}{{z!}}{e^{ - \vartheta _w^2}}{f_{N_a^2}}(n + n_c - z)} ,}&{n > 0}
\end{array}} \right.
\end{equation}

Likewise, in the $q$-th time interval, we can obtain the PMF of the accumulated number of packets, denoted by ${f_{N_a^q}}(n)$, as follows
\begin{equation}\label{eq:accumu_tth}
{f_{N_a^q}}(n) = \left\{ {\begin{array}{*{20}{l}}
\begin{array}{l}
{e^{ - \vartheta _w^{q-1}}}{f_{N_a^{q-1}}}(0) + \vartheta _w^{q-1}{e^{ - \vartheta _w^{q-1}}} \times \\
{f_{N_a^{q-1}}}(0) + {e^{ - \vartheta _w^{t-1}}}{f_{N_a^{q-1}}}(n_c),
\end{array}&{n = 0}\\
{\sum\limits_{z = 0}^{n + n_c} {\frac{{{{\left( {\vartheta _w^{q-1}} \right)}^z}}}{{z!}}{e^{ - \vartheta _w^{q-1}}}{f_{N_a^{q-1}}}(n + n_c - z)} ,}&{n > 0}
\end{array}} \right.
\end{equation}

In (\ref{eq:accumu_tth}), ${f_{N_a^q}}(n)$ correlates with ${f_{N_a^{q-1}}}(n+n_c-z)$ in a sophisticated recursive way, which significantly hinders the theoretical derivation of the closed-form expression of ${f_{N_a^q}}(n)$. Moreover, the complexity of the theoretical derivation exponentially increases with $q$. To tackle this problem, we propose to derive an approximated expression of ${f_{N_a^q}}(n)$.
As new packets arrive in a Poisson process, the packet departure can be considered as an approximated packet thinning of the arrived packets \cite{hohn2006inverting,jiang2018random}. After this packet thinning in a specific time interval, the number of accumulated packets in time $q$ $(q > 1)$ can be approximated as a Poisson distribution \cite{jiang2018random}. Then, denote by $\vartheta_{a}^q$ the average number of accumulated packets in time $q$. In the $2$-nd time interval, we can calculate $\vartheta_{a}^2$ as
\begin{equation}\label{eq:intensity_$2$-nd}
\begin{array}{l}
\vartheta _a^2\mathop  = \limits^{(a)} \sum\limits_{n = 1}^\infty  {(n - n_c)\frac{{{{\left( {\vartheta _w^1} \right)}^n}{e^{ - \vartheta _w^1}}}}{{n!}}} \\
 = \sum\limits_{n = 0}^\infty  {n\frac{{{{\left( {\vartheta _w^1} \right)}^n}{e^{ - \vartheta _w^1}}}}{{n!}}}  - n_c\left( {\sum\limits_{n = 0}^\infty  {\frac{{{{\left( {\vartheta _w^1} \right)}^n}{e^{ - \vartheta _w^1}}}}{{n!}}}  - {e^{ - \vartheta _w^1}}} \right)\\
\mathop  = \limits^{(b)} {\left[ {\vartheta _w^1 - n_c\left( {1 - {e^{ - \vartheta _w^1}}} \right)} \right]^ + }
\end{array}
\end{equation}
where (a) holds because ${\rm Pr}\{N_a^2=n-n_c\} = {\rm Pr}\{N_w^1=n\}$ if $n \ne 0$ with ${\rm Pr}\{x\}$ denoting the probability of event $x$, and (b) holds because $\vartheta _a^2$ is non-negative at each time slot $t$.

In the $3$-rd time interval, the intensity of accumulated data packets in the preprocessing queue can be derived as the following
\begin{equation}\label{eq:mu_acc_3}
\begin{array}{l}
\vartheta _a^3 = \sum\limits_{n = 1}^\infty  {(n - n_c)\sum\limits_{z = 0}^n {\frac{{{{\left( {\vartheta _w^2} \right)}^z}{e^{ - \vartheta _w^2}}}}{{z!}}\frac{{{{\left( {\vartheta _a^2} \right)}^{n - z}}{e^{ - \vartheta _a^2}}}}{{(n - z)!}}} } \\
 = \sum\limits_{n = 1}^\infty  {n\sum\limits_{z = 0}^n {\frac{{{{\left( {\vartheta _w^2} \right)}^z}{e^{ - \vartheta _w^2}}}}{{z!}}\frac{{{{\left( {\vartheta _a^2} \right)}^{n - z}}{e^{ - \vartheta _a^2}}}}{{(n - z)!}}} }  - \\
n_c\left( {\sum\limits_{n = 0}^\infty  {\sum\limits_{z = 0}^n {\frac{{{{\left( {\vartheta _w^2} \right)}^z}{e^{ - \vartheta _w^2}}}}{{z!}}\frac{{{{\left( {\vartheta _a^2} \right)}^{n - z}}{e^{ - \vartheta _a^2}}}}{{(n - z)!}}} }  - {e^{ - \vartheta _w^2 - \vartheta _a^2}}} \right)\\
 = {\left[ {\vartheta _w^2 + \vartheta _a^2 - n_c\left( {1 - {e^{ - \vartheta _w^2 - \vartheta _a^2}}} \right)} \right]^ + }
\end{array}
\end{equation}

When $q > 3$, since the accumulated packet evolution model of the preprocessing queue is similar to that at $q = 3$,
we can extend the conclusion obtained at $q = 3$ to that at $q > 3$. Therefore, we can obtain the closed-form expression of $\vartheta_a^q$ at $q > 3$ with
\begin{equation}\label{eq:mu_acc_t}
\vartheta _a^q = {\left[ {\vartheta _w^{q-1} + \vartheta _a^{q-1} - n_c\left( {1 - {e^{ - \vartheta _w^{q-1} - \vartheta _a^{q-1}}}} \right)} \right]^ + }
\end{equation}

Besides, at time $q$, for packet $m$, it has to wait until the completion of the preprocessing of accumulated packets in the preprocessing queue according to the FCFS principle. The preprocessing time of each packet is $Y_{i,m}^{\rm S}$ due to the same packet size and preprocessing operation. Thus, the average packet queueing delay of packet $m$ is $\vartheta_{a}^q Y_{i,m}^{\rm S}$. This completes the proof.

\subsection{Proof of Proposition \ref{prop:prop_low_latency}}
For packet $m$, it will be delivered to its receiving user $i$ if and only if a) $m$ has been processed by the BS; b) $\sum\nolimits_{j } s_{ij}(t) = 1$ at time slot $t$.
As there are $N$ users and $J$ UAVs and the goal of the communication problem is to provide fair content delivery for all users, the probability that any user can connect to a UAV is $J/N$ at each time slot.
The delivery of a content file including $L/l$ packets from the BS or a UAV to its corresponding user should be completed in a time slot of duration $\Delta_ t$.
Further, one packet will be sent to user $i$ at a time, and $L/l$ packets will be sequentially delivered in duration $\Delta_ t$. Then, we have ${\mathbb E}[Y_{i,1}^{\rm A}] = \frac{Nl\Delta_ t}{JL}$, and the expected edge arrival duration of the last packet of the content file is $\frac{N\Delta_ t}{J}$.
Thus, for any packet $m$ in the content file transmitted to user $i$, the expected edge arrival duration of packet $m$ is $\frac{Nl\Delta_ t}{2JL}+\frac{N\Delta_ t}{2J}$. This completes the proof.

\subsection{Proof of Lemma \ref{lemma:EPAoI}}
By referring to (\ref{eq:AoI_model}), we can write the expectation of PAoI of packet $m$ sent to $i$ as follows
\begin{equation}\label{eq:average_aoi}
\begin{array}{l}
{\Delta _{i,m}}(\hat q; q) = {\mathbb E}[{X_{i,m}}] + {\mathbb E}[{Y_{i,m}}]\\
 = {\mathbb E}[{X_{i,m}}] + {\mathbb E}[Y_{i,m}^{\rm Q}] + {\mathbb E}[Y_{i,m}^{\rm S}] + {\mathbb E}[Y_{i,m}^{\rm A}]
\end{array}
\end{equation}

Next, we discuss the value of ${\Delta _{i,m}}(\hat q;q)$ from the following two cases:

Case I: $q = 1$ and $m = 1$. The value of ${\rm Pr}\{M(q)=0\}$ will become smaller as $q$ increases, where $M(q)$ denotes the number of newly arrived packets before time $q$. For example, according to (\ref{eq:intensity_$2$-nd}), the average number of accumulated packets in the preprocessing queue is ${[ {\vartheta _w^1 - n_c( {1 - {e^{ - \vartheta _w^1}}} )} ]^ + }$ when $q = 2$.
Besides, the content server will generate packets required by a user after receiving its content request. It indicates that the probability that the content server generates packets destined to $i$ during each time interval is equal to ${\rm Pr}_i$, $\forall i$.
Then, the expected number of new packets is ${ {\vartheta _w^1 + n_c { {e^{ - \vartheta _w^1}}} }  }$, and the number of packets sent to user $i$ is $M_{i}(q)={( {\vartheta _w^1 + n_c { {e^{ - \vartheta _w^1}}} } ) }{\rm Pr}_i$, where $M_i(q)$ is the expected number of packets destined to user $i$. 
We can envision that the expected number of generated packets sent to user $i$ will be greater than $({ { \vartheta _w^1 + n_c { {e^{ - \vartheta _w^1 }}} }  }){\rm Pr}_i$ when $q>2$.
Recall that the content server generates packets according to a Poisson process with rate $\vartheta_w^q$ at time $q$ and sends them to the BS.
The generated packets can be considered to be belonging to $N$ different data streams for $N$ terrestrial users, respectively. Each data stream $i$ is chosen independently at time $q$ with probability ${\rm Pr}_i$.
This setup is equivalent to having $N$ independent Poisson sources with rates $\lambda_i^q = \vartheta_w^q {\rm Pr}_i$, $\forall i$, and $\vartheta_w^q = \lambda_1^q + \cdots + \lambda_N^q$ (see \cite{ross1996stochastic}).
Thus, we have ${\rm Pr}\{N_i(q)=m\}=\frac{(\vartheta_w^q q{\rm Pr}_i)^m}{m!}e^{-\vartheta_w^q q{\rm Pr}_i}$ and ${\rm Pr}\{N_i(1)=0\}=1/e^{\vartheta_w^1 {\rm Pr}_i}$, which will be a small value. It indicates that the event that the content server does not generate the first packet for user $i$ in the $1$-st time interval is a small probability event. We therefore consider the case of $q=1$ and $m = 1$.
On the other hand, according to (\ref{eq:queue_evolution}), the accumulated packets in the preprocessing queue is zero when $q=1$. Therefore, the expected time for the first packet destined to user $i$ includes the preprocessing time and the expected edge arrival duration, i.e., we can obtain
\begin{equation}\label{eq:m_1_and_q_1}
\begin{array}{l}
{\mathbb E}[{\Delta _{i,1}}({{\hat 1}; 1})] = {\mathbb E}[Y_{i,1}^{\rm S}] + {\mathbb E}[Y_{i,m}^{\rm A}] \\
= \frac{ 1}{n_c} + \frac{(l+L)N\Delta_ t}{2JL}
\end{array}
\end{equation}

Case II: $q >1$ and $m > 1$.
As the inter-arrival time of packets destined to user $i$ from the content server follows an exponential distribution with intensity $\lambda_i^q$, we have ${\mathbb E}[X_{i,m}] = 1/\lambda_i^q$.
Besides, since each packet has the same size, the expected preprocessing time for each packet will be the same, i.e., ${\mathbb E}[Y_{i,m}^{\rm S}] = 1/n_c$.
From Lemma \ref{lem:lemma_accumulated}, we know that the average packet queueing delay of packet $m$ is $\vartheta_{a}^q Y_{i,m}^{\rm S}$. Thus, we have ${\mathbb E}[Y_{i,m}^{\rm Q}] = \vartheta _a^q /n_c$.
Proposition \ref{prop:prop_low_latency} also shows that ${\mathbb E}[Y_{i,m}^{\rm A}] = \frac{(l+L)N\Delta_ t}{2JL}$ $\forall i, m$.
Therefore, (\ref{eq:average_PAoI}) is obtained. This completes the proof.

\ifCLASSOPTIONcaptionsoff
  \newpage
\fi




%
\bibliographystyle{IEEEtran}
\bibliography{UAV_Cache}

\end{document}